\renewcommand{\maketitle}{
\begin{center}

\pagestyle{empty}
\phantom{.}  %necessary to add space on top before the title
\vspace{3cm}

{\Huge \bf Data-Injection Attacks \par}
\vspace{1.5cm}

{\large I\~naki Esnaola$^{1,3}$, Samir M. Perlaza$^{2,3}$, and Ke Sun$^{1}$}\\[0.2cm]
\vspace{0.1cm}
1. Dept. of Automatic Control and Systems Eng., University of Sheffield,  UK
\vspace{0.1cm}

2. INRIA, Centre de Recherche de Sophia Antipolis-M\'{e}diterran\'{e}e, France
\vspace{0.1cm}

3. Department of Electrical Engineering, Princeton University,  USA

\vspace{0.2cm}
email: esnaola@sheffield.ac.uk, samir.perlaza@inria.fr, ke.sun@sheffield.ac.uk

\vspace{1.5cm}
{\large\textbf{Chapter 9} of:  Advanced Data Analytics for Power Systems, A. Tajer, S. M. Perlaza and H. V. Poor, Eds., Cambridge University Press, Cambridge, UK, 2021,  pp. 197-229}
\vspace{5cm}

{February 02, 2021}
%if you want something in the bottom of the page just use \vfill before that.

\end{center}
}\makeatother
\long\def\comment#1{}
\newfont{\bbb}{msbm10 scaled 700}
\newfont{\bb}{msbm10 scaled 1100}
\newcommand{\RR}{\mbox{\bb R}}
\newcommand{\PP}{\mbox{\bb P}}
\newcommand{\ZZ}{\mbox{\bb Z}}
\newcommand{\EE}{\mbox{\bb E}}
\newcommand{\av}{{\bf a}}
\newcommand{\cv}{{\bf c}}
\newcommand{\ev}{{\bf e}}
\newcommand{\uv}{{\bf u}}
\newcommand{\xv}{{\bf x}}
\newcommand{\yv}{{\bf y}}
\newcommand{\zv}{{\bf z}}
\newcommand{\zerov}{{\bf 0}}
\newcommand{\Am}{{\bf A}}
\newcommand{\Bm}{{\bf B}}
\newcommand{\Gm}{{\bf G}}
\newcommand{\Hm}{{\bf H}}
\newcommand{\Id}{{\bf I}}
\newcommand{\Lm}{{\bf L}}
\newcommand{\Mm}{{\bf M}}
\newcommand{\Sm}{{\bf S}}
\newcommand{\Um}{{\bf U}}
\newcommand{\Wm}{{\bf W}}
\newcommand{\Zm}{{\bf Z}}
\newcommand{\Tt}{\text{T}}
\newcommand{\Ac}{{\cal A}}
\newcommand{\Cc}{{\cal C}}
\newcommand{\Hc}{{\cal H}}
\newcommand{\Kc}{{\cal K}}
\newcommand{\Mc}{{\cal M}}
\newcommand{\Nc}{{\cal N}}
\newcommand{\Sc}{{\cal S}}
\newcommand{\RNum}[1]{\uppercase\expandafter{\romannumeral #1\relax}}
\newcommand{\muv}{\hbox{\boldmath$\mu$}}
\newcommand{\Lambdam}{\hbox{\boldmath$\Lambda$}}
\newcommand{\Deltam}{\hbox{\boldmath$\Delta$}}
\newcommand{\Sigmam}{\hbox{\boldmath$\Sigma$}}
\newcommand{\diag}{{\hbox{diag}}}
\newcommand{\trace}{{\hbox{tr}}}
\renewcommand{\arg}{{\hbox{arg}}}
\newcommand{\var}{{\hbox{var}}}
\newcommand{\eqdef}{\stackrel{\Delta}{=}}
\newcommand{\GameNF}{\mathcal{G} = \left(\mathcal{K}, \left\lbrace\mathcal{A}_k \right\rbrace_{k \in \mathcal{K}},\phi \right)}
\newcommand{\gameNF}{\mathcal{G}}
\newcommand{\BR}{\mathrm{BR}}
\def\LRT#1#2{\!
\raisebox{.2ex}{$
{{\scriptstyle\;#1}\atop{\displaystyle\gtrless}}
\atop
{\raisebox{-1.25ex}{$\scriptstyle\;#2$}}
$}
\!}
\newcommand{\be}{\begin{equation}}
\newcommand{\ee}{\end{equation}}
\newtheorem{definition}{Definition}[chapter]
\newtheorem{theorem}{Theorem}[chapter]
\newtheorem{corollary}{Corollary}[chapter]
\newtheorem{lemma}{Lemma}[chapter]
\newtheorem{proposition}{Proposition}[chapter]
\begin{document}
\maketitle
\pagenumbering{gobble}
\newpage
\pagenumbering{roman}
\tableofcontents
\newpage
\pagenumbering{arabic}

% EngCguide.tex
% for the suite of standard Cambridge designs
% 2011/02/03, v1.10
% \documentclass[multi]{EngC} % multi-contributor option
% \documentclass[prodtf]{EngC}% if you have helvetica neue condensed fonts

% for the Harvard author-date referencing system
 %\usepackage[agsm]{harvard}

% if you are using either vancouver.bst or IEEEtran.bst and wish to remove
% square braces in the reference list, uncomment the line below
% \removesquarebraces

% for the purposes of this document, this has been extended to subsections

%%%%%%%%%%%%%%%%%%%%%%%%%%%%%%%%%%%%%

% \includeonly{chap2}

%%%%%%%%%%%%%%%%%%%%%%%%%%%%%%%%%%%%%

%\maketitle
%\tableofcontents
%\listoffigures
%\listoftables
 % \editedlistofcontributors
%  \include{notation}

%\mainmatter
 
 % chap1.tex
% 2011/02/03, v1.10
\setcounter{chapter}{8}
\chapter{Data-Injection Attacks}\label{Chapter1}

%\section{Non-Random Attacks}\label{SecNRA}

%\section{Random Attacks}\label{SecRA}

\section{Introduction}
\label{SecIntro}

The pervasive deployment of sensing, monitoring, and data acquisition techniques in modern power systems enables the definition of functionalities and services that leverage accurate and real-time information about the system. This wealth of data supports network operators in the design of advanced control and management techniques that will inevitably change the operation of future power systems. An interesting side-effect of the data collection exercise that is starting to take place in power systems is that the unprecedented data analysis effort is shedding some light on the turbulent dynamics of power systems. While the underlying physical laws governing power systems are well understood, the large scale, distributed structure, and stochastic nature of the generation and consumption processes in the system results in a complex system. The large volumes of data about the state of the system are opening the door to modelling aspirations that were not feasible prior to the arrival of the smart grid paradigm.

The refinement of the models describing the power system operation will undoubtedly provide valuable insight to the network operator. 
However, that knowledge and the explanatory principles that it uncovers are also subject to be used in a malicious fashion. 
Access to statistics describing the state of the grid can inform malicious attackers by allowing them to pose the data-injection problem \cite{liu_false_2009} problem within a probabilistic framework \cite{kosut_malicious_2011, esnaola_maximum_2016}. 
By describing the processes taking place in the grid as a stochastic process, the network operator can incorporate the statistical description of the state variables in the state estimation procedure and pose it within a Bayesian estimation setting. 
Similarly, the attacker can exploit the stochastic description of the state variables by incorporating it to the attack construction in the form of prior knowledge about the state variables. 
Interestingly wether the network operator or the attacker benefit more from adding a stochastic description to the state variables does not have a simple answer and depends greatly on the parameters describing the power system.

In this chapter we review some of the basic attack constructions that exploit a stochastic description of the state variables. We pose the state estimation problem in a Bayesian setting and cast the bad data detection procedure as a Bayesian hypothesis testing problem. This revised detection framework provides the benchmark for the attack detection problem that limits the achievable attack disruption. Indeed, the trade-off between the impact of the attack, in terms of disruption to the state estimator, and the probability of attack detection is analytically characterized within this Bayesian attack setting. 
We then generalize the attack construction by considering information-theoretic measures that place fundamental limits to a broad class of detection, estimation, and learning techniques. Because the attack constructions proposed in this chapter rely on the attacker having access to the statistical structure of the random process describing the state variables, we conclude by studying the impact of imperfect statistics on the attack performance. Specifically, we study the attack performance as a function of the size of the training data set that is available to the attacker to estimate the second-order statistics of the state variables.

\section{System Model}

\subsection{Bayesian State Estimation}
We model the state of the system as the vector of $n$ random variables $X^n$ taking values in $\mathbb{R}^{n}$ with distribution $P_{X^n}$. The random variable $X_i$ with $i=1, 2, \ldots, n$, denotes the state variable $i$ of the power system, and therefore, each entry represents a different physical magnitude of the system that the network operator wishes to monitor. 
The prior knowledge that is available to the network operator is described by the probability distribution $P_{X^n}$. The knowledge of the distribution is a consequence of the modelling based on historical data acquired by the network operator.  
Assuming linearized system dynamics with $m$ measurements corrupted by additive white Gaussian noise (AWGN), the measurements are modelled as the vector of random variables $Y^m\in\mathbb{R}^m$ with distribution $P_{Y^m}$ given by
\begin{align}
Y^n=\Hm X^n+Z^m,
\end{align}
where $\Hm\in\mathbb{R}^{m \times n}$ is the Jacobian of the linearized system dynamics around a given operating point and $Z^m\thicksim\mathcal{N}(0,\sigma^2\mathbf{I})$ is thermal white noise with power spectral density $\sigma^2$. While the operation point of the system induces a dynamic on the Jacobian matrix $\Hm$, in the following we assume that the time-scale over which the operation point changes is small compared to the time-scale at which the state estimator operates to produce the estimates. For that reason, in the following we assume that the Jacobian matrix is fixed and the only sources of uncertainty in the observation process originate from the stochasticity of the state variables and the additive noise corrupting the measurements. 
%%

%The data injection attack $\av$ is an $m$-dimensional deterministic vector introduced by an external attacker. The attacker interferes with the measurements and modifies the observation model to
%\begin{align}
%\label{eq:syst_mod}
%Y^m=\Hm x^n+Z^m + A^m,
%\end{align}
%where $Y^m\in\mathbb{R}^m$ are the measurements that have been corrupted by the data-injection attack. The design criteria for the data injection vector are varied and respond to the different objectives that motivate the attacker to compromise the data integrity of the state estimation procedure. A detailed overview of data injection attacks and the main objectives that attackers pursue is given in Chapter X. 

The aim of the state estimator is to obtain an estimate $\hat{X}^n$ of the state vector $X^n$ from the system observations $Y^m$. In this chapter we adopt a linear estimation framework resulting in an estimate given by $\hat{X}^n = \Lm Y^m$, where $\Lm\in\mathbb{R}^{n\times m}$ is the linear estimation matrix determining the estimation procedure.
In the case in which the operator knows the distribution $P_{X^n}$ of the underlying random process governing the state of the network, the estimation is performed by selecting the estimate that minimizes a given error cost function. A common approach is to use the mean square error (MSE) as the error cost function. In this case, the network operator uses an estimator $\Mm$ that is the unique solution to the following optimization problem:
\be
\Mm = \arg\min_{\Lm \in \mathbb{R}^{n \times m}} {\mathbb{E}}\left [\frac{1}{n}\|X^n - \Lm Y^m \|_2^2\right],
\ee
where the expectation is taken with respect to $X^n$ and $Z^m$.
%which is achieved by the minimum mean square error (MMSE) estimator given by $\hat{\xv}_{\sf MMSE}={\sf E}[\xv | \yv].$
%

Under the assumption that the network state vector $X^n$ follows an $n$-dimensional real Gaussian distribution with zero mean and covariance matrix $\Sigmam_{X\!X}\in \Sc^{m}_{+}$, i.e. $X^n\thicksim\Nc(\mathbf{0},\Sigmam_{X\!X})$, the minimum MSE (MMSE) estimate is given by 
\begin{align}
\label{EqMMSEe}
\hat{X}^n & \eqdef \mathbb{E}[X^n|Y^m]=\Mm Y^m
\end{align}
where,
\be
\label{EqMMSEm}
\Mm=\Sigmam_{X\!X} \Hm^{\sf{T}}(\Hm\Sigmam_{X\!X}\Hm^{\sf{T}}+\sigma^2\mathbf{I})^{-1}.
\ee
%and the covariance matrix of the corresponding error boils down to
%\be
%\Sigmam_{\ev\ev}=\Sigma-\Sigmam\Hm^{\sf{T}}(\Hm\Sigmam\Hm^{\sf{T}}+\sigma^2\mathbf{I})^{-1}\Hm\Sigmam.
%\ee

\subsection{Deterministic Attack Model}

The aim of the attacker is to corrupt the estimate by altering the measurements. Data-injection attacks alter the measurements available to the operator by adding an attack vector to the measurements.
The resulting observation model with the additive attack vector is given by
\begin{align}
\label{eq:obs_mod_det_a}
Y^{m}_{a} = \Hm X^{m} + Z^{m} + \av,
\end{align}
where $\av^{m} \in \mathbb{R}^{m}$ is the attack vector and $Y^{m}_{a} \in \mathbb{R}^{m } $ is the vector containing the compromised measurements \cite{liu_false_2009}. Note that in this formulation, the attack vector does not have a probabilistic structure, i.e. the attack vector is deterministic. The random attack construction is considered later in the chapter. 

The intention of the attacker can respond to diverse motivations, and therefore, attack construction strategy changes depending on the aim of the attacker. In this chapter, we study attacks that aim to maximize the monitoring disruption, i.e. attacks that obstruct the state estimation procedure with the aim of deviating the estimate as much as possible from the true state. In that sense, the attack problem is bound to the cost function used by the state estimator to obtain the estimate, as the attacker aims to maximize it while the estimator aims to minimize it.  In the MMSE setting described in the preceding text, it follows that the the impact of the attack vector is obtained by noticing that the estimate when the attack vector is present is given by
\begin{align}
\label{EqMMSEe1}
\hat{X}^n_{a} & = \Mm (\Hm X^n + Z^m) + \Mm \av.
\end{align}
The term $\Mm \av$ is referred to as the \emph{Bayesian injection vector} introduced by the attack vector $\av$ and is denoted by
\be
\label{eq:xv_a}
\cv \eqdef   \Mm \av = \Sigmam_{X\!X}\Hm^{\sf{T}}(\Hm\Sigmam_{X\!X}\Hm^{\sf{T}}+\sigma^2\mathbf{I})^{-1}\av.
\ee
The {\it Bayesian injection vector} is a deterministic vector that corrupts the MMSE estimate of the operator resulting in 
\be
\hat{X}^n_{a} =\hat{X}^n+\cv.
\ee
where $\hat{X}^n$ is given in (\ref{EqMMSEe}).

\subsection{Attack Detection}
As a part of the grid management, a network operator systematically attempts to identify measurements that are not deemed of sufficient quality for the state estimator. In practice, this operation can be cast as a hypothesis testing problem with hypotheses
\begin{align}
\mathcal{H}_0:&\quad\textnormal{There is no attack}, \quad \textnormal{versus} \nonumber \\
\mathcal{H}_1:&\quad\textnormal{Measurements are compromised}.
\end{align}
Assuming the operator knows the distribution of the state variables, $P_{X^n}$, and the observation model (\ref{eq:obs_mod_det_a}), then it can obtain the joint distribution of the measurements and the state variables for both normal operation conditions and the case when an attack is present, i.e. $P_{X^nY^m}$ and $P_{X^nY^m_a}$, respectively. 

Under the assumption that the state variables follow a multivariate Gaussian distribution  $X^n\thicksim\mathcal{N}(\mathbf{0},\Sigmam_{X\!X})$ it follows that the vector of measurements $Y^n$ follows an $m$-dimensional real Gaussian random distribution with covariance matrix
\be
\Sigmam_{Y\!Y} = \Hm\Sigmam_{X\!X}\Hm^{\sf{T}}+\sigma^2\mathbf{I},
\ee
and mean $\av$ when there is an attack; or zero mean when there is no attack. Within this setting, the hypothesis testing problem described before is adapted to the attack detection problem by comparing the following hypotheses:
\begin{align}
\mathcal{H}_0:&\quad Y^m\thicksim\mathcal{N}(\mathbf{0},\Sigmam_{Y\!Y}), \quad \textnormal{versus} \nonumber\\
\mathcal{H}_1:&\quad Y^m\thicksim\mathcal{N}(\av,\Sigmam_{Y\!Y}).
\end{align}
A worst case scenario approach is assumed for the attackers, namely, the operator knows the attack vector, $\av$, used in the attack. However, the operator does not know a priori whether  the grid is under attack or not, which accounts for the need of an attack detection strategy. That being the case, the optimal detection strategy for the operator is to perform a likelihood ratio test (LRT) $L(\yv, \av)$ with respect to the observations $\yv$. Under the assumption that state variables follow a multivariate Gaussian distribution, the likelihood ratio can be calculated as
\be
\label{eq:MAP_detect}
L(\yv, \av)= \frac{f_{\mathcal{N}(\mathbf{0},\mathbf{\Sigma}_{\yv\yv})}(\yv)}{f_{\mathcal{N}(\av,\mathbf{\Sigma}_{\yv\yv})}(\yv )}=  \exp\left( \frac{1}{2} \av^{\sf{T}} \Sigmam_{Y\!Y}^{-1} \av  -  \av^{\sf{T}} \Sigmam_{Y\!Y}^{-1}  \yv  \right),
\ee
where $f_{\mathcal{N}\left(\mathbf{\muv},\mathbf{\Sigma}\right)}$ is the probability density function of a multivariate Gaussian random vector with mean $\muv$ and covariance matrix $\Sigmam$.
Therefore, either hypothesis is accepted by evaluating the inequalities
\be
\label{eq:MAP_decision}
L(\yv, \av)  \mathop{\gtrless}_{\mathcal{H}_1}^{\mathcal{H}_0}\tau,
\ee
where $\tau\in[0,\infty)$ is tuned to set the trade-off between the probability of detection and the probability of false alarm.
%
%Note that in order to compute the likelihood ratio \eqref{eq:MAP_detect}, the operator needs to know the realization of $\av$. For the case in which sparsity constraints are imposed to the attack vector, a detection strategy was proposed in  \cite{Kosut-TSG-2011} in which the ratio $\tau$ is determined considering the most likely support of the attack vectors. Nonetheless, determining the optimal $\tau$ without the exact knowledge of the vector attack is beyond the scope of this chapter. In the following, $\tau$ is considered fixed and it is treated as a parameter of the problem.

\section{Centralized Deterministic Attacks}
\label{sec:C_attacks}
This section describes the construction of data-injection attacks in the case in which there is a unique attacker with access to all the measurements on the power system. This scenario is referred to as \emph{centralized attacks} in order to highlight that there exists a unique entity deciding the data-injection vector $\av \in \mathbb{R}^{m}$ in \eqref{eq:obs_mod_det_a}. The difference between the scenario in which there exists a unique attacker or several (competing or cooperating) attackers is subtle and it is treated in Section \ref{SecDA}.

Let $\Mc = \lbrace 1, \ldots, m\rbrace$ denote the set of all $m$ sensors available to the network operator. A sensor is said to be compromised if the attacker is able to arbitrarily modify its output.
Given a total energy budget $E > 0$ at the attacker, the set of all possible attacks that can be injected to the network can be explicitly described:
\be
\Ac = \left\lbrace \av  \in \mathbb{R}^{m}: \, \av^{\sf{T}} \av \leqslant E     \right\rbrace.
\ee

\subsection{Attacks with Minimum Probability of Detection}

The attacker chooses a vector $\av \in \Ac$ taking into account the trade-off between the probability of being detected and the distortion induced by the Bayesian injection vector given by \eqref{eq:xv_a}.
However, the choice of a particular data-injection vector is
not trivial as the attacker does not have any information about the exact realizations of the vector of state variables $\xv$ and the noise vector $\zv$. A reasonable assumption on the knowledge of the attacker is to consider that it knows the structure of the power system and thus, it knows the matrix $\Hm$. It is also reasonable to assume that it knows the first and second moments of the state variables $X^{n}$ and noise $Z^{m}$ as this can be computed from historical data.

Under these knowledge assumptions, the probability that the network operator is unable to detect the attack vector $\av$ is
\be
\label{EqProbND0}
\sf{P_{ND}}(\av) = \sf{E} \left[ \mathds{1}_{\left\lbrace L(\yv, \av) > \tau \right\rbrace} \right],
\ee
where the expectation is taken over the joint probability distribution of state variables $X^n$ and the AWGN noise vector $Z^n$, and $\mathds{1}_{\lbrace\cdot\rbrace}$ denotes the indicator function. Note that under these assumptions, $Y^m$ is a random variable with Gaussian distribution with mean $\av$ and covariance matrix $\Sigmam_{Y\!Y}$.
Thus, the probability $\sf{P_{ND}}(\av)$ of a vector $\av$ being a successful attack, i.e., a non-detected attack is given by \cite{Poor_springer_88}
\be
\label{EqProbND}
\sf{P_{ND}}(\av) = \frac{1}{2} \sf{erfc} \left(\frac{\frac{1}{2} \av^{\sf{T}}\Sigmam_{Y\!Y}^{-1} \av + \log\tau }{\sqrt{2 \av^{\sf{T}}\Sigmam_{Y\!Y}^{-1} \av }} \right).
\ee

Often, the knowledge of the threshold $\tau$ in \eqref{eq:MAP_decision} is not available to the attacker and thus, it cannot determine the exact probability of not being detected for a given attack vector $\av$. However, the knowledge of whether $\tau > 1$ or $\tau \leqslant 1$ induces different behaviors on the attacker. The following propositions follow immediately from \eqref{EqProbND} and the properties of the complementary error function.
\begin{proposition}[Case $\tau \leqslant 1$]\label{CorollaryTauSmall}
Let $\tau \leqslant 1$. Then, for all $\av \in \Ac$, $\sf{P_{ND}}(\av) < \sf{P_{ND}}\left( (0, \ldots, 0) \right)$ and  the probability $\sf{P_{ND}}(\av)$  is monotonically decreasing with $\av^{\sf{T}}\Sigmam_{Y\!Y}^{-1} \av$.
\end{proposition}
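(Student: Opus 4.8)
The plan is to collapse the multivariate claim to a one-dimensional monotonicity analysis. First I would note that the non-detection probability in \eqref{EqProbND} depends on the attack vector $\av$ only through the scalar quadratic form $t \eqdef \av^{\sf{T}}\Sigmam_{Y\!Y}^{-1}\av$. Since $\Sigmam_{Y\!Y}=\Hm\Sigmam_{X\!X}\Hm^{\sf{T}}+\sigma^{2}\mathbf{I}$ is positive definite (the summand $\sigma^{2}\mathbf{I}$ is positive definite and $\Hm\Sigmam_{X\!X}\Hm^{\sf{T}}$ is positive semidefinite for $\Sigmam_{X\!X}\in\Sc^{m}_{+}$), its inverse is positive definite, so $t\geqslant 0$ with $t=0$ if and only if $\av=\mathbf{0}$. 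Writing $\sf{P_{ND}}(\av)=\tfrac12\,\sf{erfc}\!\left(g(t)\right)$ with $g(t)\eqdef\left(\tfrac12 t+\log\tau\right)/\sqrt{2t}$, the entire statement reduces to a property of the single-variable map $t\mapsto\tfrac12\,\sf{erfc}(g(t))$ on $(0,\infty)$.

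Next I would prove the monotonicity assertion. Because $\sf{erfc}$ is strictly decreasing on $\mathbb{R}$, the map $t\mapsto\tfrac12\,\sf{erfc}(g(t))$ is strictly decreasing exactly when $g$ is strictly increasing, so it suffices to analyze $g$. I would use the decomposition
\be
g(t) = \frac{1}{2\sqrt{2}}\sqrt{t} + \frac{\log\tau}{\sqrt{2}}\,\frac{1}{\sqrt{t}}.
\ee
The first summand is strictly increasing in $t$. For the second, the hypothesis $\tau\leqslant 1$ gives $\log\tau\leqslant 0$, and multiplying the strictly decreasing positive function $t\mapsto 1/\sqrt{t}$ by the nonpositive constant $\log\tau/\sqrt{2}$ yields a nondecreasing function. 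Hence $g$ is the sum of a strictly increasing and a nondecreasing function, so $g$ is strictly increasing on $(0,\infty)$, and therefore $\sf{P_{ND}}(\av)$ is strictly decreasing in $t=\av^{\sf{T}}\Sigmam_{Y\!Y}^{-1}\av$. (Equivalently, one may compute $g'(t)=\tfrac{1}{4\sqrt{2}}\,t^{-3/2}\left(t-2\log\tau\right)>0$ for all $t>0$ whenever $\log\tau\leqslant 0$.)

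Finally I would deduce the comparison with the null vector from the monotonicity together with the boundary behaviour as $t\to 0^{+}$. The cleanest route is to read the no-attack value directly from the definition \eqref{EqProbND0}: when $\av=\mathbf{0}$ the likelihood ratio \eqref{eq:MAP_detect} satisfies $L(\yv,\mathbf{0})\equiv 1$, so for $\tau<1$ the event $\{L>\tau\}$ has probability one and $\sf{P_{ND}}(\mathbf{0})=1$, which is the largest value $\sf{P_{ND}}$ can take; since $\sf{erfc}$ maps $\mathbb{R}$ into $(0,2)$, every nonzero $\av\in\Ac$ has $\sf{P_{ND}}(\av)=\tfrac12\,\sf{erfc}(g(t))<1=\sf{P_{ND}}(\mathbf{0})$, giving the strict inequality. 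The delicate point, and the one I expect to be the main obstacle, is precisely this endpoint $t\to 0$: the function $g$ is singular at $t=0$ and the formula \eqref{EqProbND} is not continuous at $\av=\mathbf{0}$ (the degenerate case $L\equiv1$ makes the indicator boundary $\{L=\tau\}$ have full rather than zero mass), so $\sf{P_{ND}}(\mathbf{0})$ must be evaluated from its definition \eqref{EqProbND0} rather than by naive substitution into \eqref{EqProbND}. Once that boundary value is pinned down, the strict inequality follows immediately from the strict decrease of $\sf{P_{ND}}$ in $t$ established above.
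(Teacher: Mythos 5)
Your proof is correct and its core is the same as the paper's: both reduce $\sf{P_{ND}}$ to the scalar map $t\mapsto\tfrac12\,\mathrm{erfc}(g(t))$ with $t=\av^{\sf{T}}\Sigmam_{Y\!Y}^{-1}\av$ and show $g$ is strictly increasing when $\log\tau\leqslant 0$ (the paper does this by computing $g'$, exactly as in your parenthetical; your sum-of-monotone-functions decomposition is an equivalent, slightly more elementary route). Where you genuinely go beyond the paper is the endpoint: the paper's proof stops at monotonicity and simply asserts that the statement follows, leaving the comparison with the null vector implicit, whereas you correctly observe that \eqref{EqProbND} is singular at $\av=\mathbf{0}$ and evaluate $\sf{P_{ND}}(\mathbf{0})$ from the definition \eqref{EqProbND0} via $L(\yv,\mathbf{0})\equiv 1$. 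One caveat in your boundary argument: it covers only $\tau<1$, while the proposition is stated for $\tau\leqslant 1$. At $\tau=1$ the strict indicator $\mathds{1}_{\{L>\tau\}}$ in \eqref{EqProbND0} gives $\sf{P_{ND}}(\mathbf{0})=0$ under the literal definitions, so the strict inequality $\sf{P_{ND}}(\av)<\sf{P_{ND}}(\mathbf{0})$ actually fails there unless one adopts a non-strict indicator or interprets $\sf{P_{ND}}(\mathbf{0})$ as the limit $t\to 0^{+}$ (which equals $\tfrac12$ when $\tau=1$, and $1$ when $\tau<1$); this is an edge case the paper's own proof also glosses over, so it should be read as a flaw in the statement's boundary convention rather than a defect in your reasoning.
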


\begin{proposition}[Case $\tau > 1$]\label{CorollaryTauBig}
Let $\tau > 1$ and let also $\Sigmam_{Y\!Y} = \Um_{Y\!Y} \Lambdam_{Y\!Y} \Um_{Y\!Y}^{\sf{T}}$ be the singular value decomposition of $\Sigmam_{Y\!Y}$, with $\Um_{Y\!Y}^{\sf{T}} = \left( \uv_{Y\!Y,1}, \ldots, \uv_{Y\!Y,m}\right)$ and $\Lambdam_{Y\!Y} = \diag\left( \lambda_{Y\!Y,1}, \ldots, \lambda_{Y\!Y,m}\right)$ and  $\lambda_{Y\!Y,1} \geqslant\lambda_{Y\!Y,2} \geqslant  \ldots, \geqslant \lambda_{Y\!Y,m}$. Then, any vector of the form
\be
\label{EqLowestProbAttack}
\av = \pm \sqrt{ \lambda_{Y\!Y,k}  2\log\tau} \uv_{Y\!Y,k},
\ee
with $k \in \lbrace 1, \ldots, m \rbrace$, is a data-injection attack that satisfies for all $\av' \in \mathbb{R}^{m}$, $\sf{P_{ND}}(\av') \leqslant \sf{P_{ND}}(\av)$.
\end{proposition}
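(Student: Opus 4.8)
The plan is to exploit that $\sf{P_{ND}}(\av)$ depends on $\av$ only through the scalar quadratic form $t \eqdef \av^{\sf{T}}\Sigmam_{Y\!Y}^{-1}\av$, reducing the $m$-dimensional maximization to a one-dimensional problem. First I would note that $\sf{erfc}$ is strictly decreasing on $\RR$, so by \eqref{EqProbND} maximizing $\sf{P_{ND}}(\av)$ over $\av \in \RR^m$ is equivalent to minimizing the argument $g(t) = (t/2 + \log\tau)/\sqrt{2t}$, where I have substituted $t = \av^{\sf{T}}\Sigmam_{Y\!Y}^{-1}\av$. Since $\Sigmam_{Y\!Y}^{-1}$ is positive definite, $t$ ranges over $(0,\infty)$ as $\av$ ranges over $\RR^m \setminus \lbrace \zerov \rbrace$, and every positive value of $t$ is attained on a nonempty ellipsoidal level set.

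Next I would minimize $g$ over $t > 0$. Writing $u = \sqrt{t}$ gives $g = \frac{1}{\sqrt{2}}\bigl( \frac{u}{2} + \frac{\log\tau}{u} \bigr)$, and here the hypothesis $\tau > 1$ is essential because it makes $\log\tau > 0$, so the second term is a genuine positive reciprocal penalty. Either by AM--GM or by setting $g'(t)=0$, the unique minimizer is $u^2 = t^\star = 2\log\tau > 0$. Hence every attack vector satisfying $\av^{\sf{T}}\Sigmam_{Y\!Y}^{-1}\av = 2\log\tau$ is a global maximizer of $\sf{P_{ND}}$, and the maximizing set is exactly this ellipsoid.

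Finally I would verify that the vectors in \eqref{EqLowestProbAttack} lie on this ellipsoid. Using the spectral decomposition $\Sigmam_{Y\!Y} = \Um_{Y\!Y}\Lambdam_{Y\!Y}\Um_{Y\!Y}^{\sf{T}}$ one has $\Sigmam_{Y\!Y}^{-1}\uv_{Y\!Y,k} = \lambda_{Y\!Y,k}^{-1}\uv_{Y\!Y,k}$ with $\|\uv_{Y\!Y,k}\|_2 = 1$, so for $\av = \pm\sqrt{2\lambda_{Y\!Y,k}\log\tau}\,\uv_{Y\!Y,k}$ a direct substitution gives $\av^{\sf{T}}\Sigmam_{Y\!Y}^{-1}\av = 2\lambda_{Y\!Y,k}\log\tau \cdot \lambda_{Y\!Y,k}^{-1} = 2\log\tau$, confirming the claim for each $k \in \lbrace 1, \ldots, m \rbrace$.

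The computation is short; the only real subtlety is the role of the sign of $\log\tau$. For $\tau > 1$ the minimizer of $g$ is interior at $t^\star = 2\log\tau$, whereas for $\tau \leqslant 1$ (the complementary Proposition~\ref{CorollaryTauSmall}) the penalty term vanishes or reverses and $g$ is minimized at the boundary $t \to 0$, i.e. $\av = \zerov$ — which is precisely why the two regimes yield qualitatively different optimal attacks. I expect this boundary-versus-interior dichotomy, rather than any hard estimate, to be the conceptual crux.
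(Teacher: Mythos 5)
Your proof is correct and follows essentially the same route as the paper: both reduce $\sf{P_{ND}}$ to the scalar quadratic form $t = \av^{\sf{T}}\Sigmam_{Y\!Y}^{-1}\av$, use the monotonicity of $\mathrm{erfc}$ to turn the problem into minimizing $g(t) = (t/2+\log\tau)/\sqrt{2t}$, and locate the interior minimizer $t^\star = 2\log\tau$ (the paper via $g'(x)=0$, you via the substitution $u=\sqrt{t}$ and AM--GM, a cosmetic difference), after which the eigenvector form \eqref{EqLowestProbAttack} follows by direct substitution. Your closing remark on the interior-versus-boundary dichotomy also matches the paper's joint treatment of Propositions \ref{CorollaryTauSmall} and \ref{CorollaryTauBig}.
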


The proof of Proposition~\ref{CorollaryTauSmall} and Proposition~\ref{CorollaryTauBig} follows.

\begin{proof}
Let $x = \av^{\sf{T}}\Sigmam_{Y\!Y}^{-1} \av$ and note that $x > 0 $ due to the positive definiteness of $\Sigmam_{Y\!Y}$. Let also the function $g: \mathbb{R} \rightarrow \mathbb{R}$ be
\be
g(x) = \frac{\frac{1}{2}x + \log\tau}{\sqrt{2 x}}.
\ee
The first derivative of $g(x)$ is
\be
g'(x) = \frac{1}{2\sqrt{2x}} \left( \frac{1}{2} - \frac{\log\tau}{x}\right).
\ee
Note that in the case in which $\log\tau \leqslant 0$ (or $\tau \leqslant 1$), then  for all $x \in \mathbb{R}^{+}$,  $g'(x) > 0$ and thus, $g$ is monotonically increasing with $x$. Since the complementary error function $\mathrm{erfc}$ is monotonically decreasing with its argument, the statement of Proposition~\ref{CorollaryTauSmall} follows and completes its proof.
In the case in which $\log\tau \geqslant 0$ (or $\tau > 1$), the solution to $g'(x) = 0$ is $x = 2 \log\tau$ and it corresponds to a minimum of the function $g$. The maximum of $\frac{1}{2}\mathrm{erfc}(g(x))$ occurs at the minimum of $g(x)$ given that $\mathrm{erfc}$ is monotonically decreasing with its argument. Hence, the maximum of $\sf{P_{ND}}(\av)$ occurs for the attack vectors satisfying:
\be
\label{EqQuadForm}
\av^{\sf{T}}\Sigmam_{Y\!Y}^{-1} \av = 2 \log\tau.
\ee
Solving for $\av$ in \eqref{EqQuadForm} yields \eqref{EqLowestProbAttack} and this completes the proof of Proposition~\ref{CorollaryTauBig}.
\end{proof}

The relevance of Proposition~\ref{CorollaryTauSmall} is that it states that when $\tau \leqslant 1$, any non-zero data-injection attack vector  possesses a non zero probability of being detected. Indeed, the highest probability $\sf{P_{ND}}(\av)$ of not being detected is guaranteed by  the null vector $\hbox{\av = (0, \ldots, 0)}$, i.e., there is no attack. Alternatively, when $\tau > 1$ it follows from Proposition~\ref{CorollaryTauBig} that there always exists a non-zero vector that possesses maximum probability of not being detected.
However, in both cases, it is clear that the corresponding data-injection vectors which induce the highest probability of not being detected are not necessarily the same that inflige the largest damage to the network, i.e., maximize the excess distortion.

From this point of view, the attacker faces the trade-off between maximizing the excess distortion and minimizing the probability of being detected. Thus, the attack construction can  be formulated as an optimization problem in which the solution $\av$ is a data-injection vector that maximizes the probability $\sf{P_{ND}}(\av)$ of not being detected at the same time that it induces a distortion $\| \cv \|_2^2 \geqslant {\sf D}_0$ into the estimate. In the case in which $\tau \leqslant 1$, it follows from Proposition~\ref{CorollaryTauSmall} and \eqref{eq:xv_a} that this problem can be formulated as the following optimization problem:
\be
\label{EqOP1}
\min_{\av\in\mathcal{A}} \av^{\sf{T}}\Sigmam_{Y\!Y}^{-1} \av \quad \textnormal{s.t.}\quad \av^{\sf{T}} \Sigmam_{Y\!Y}^{-1}\Hm  \Sigmam_{X\!X}^2 \Hm^{\sf{T}}\Sigmam_{Y\!Y}^{-1}\av \geq {\sf D}_0.
\ee
The solution to the optimization problem in \eqref{EqOP1} is given by the following theorem.
\begin{theorem}
\label{PropProb1}
Let $\Gm = \Sigmam_{Y\!Y}^{-\frac{1}{2}}\Hm  \Sigmam_{X\!X}^2 \Hm^{\sf{T}}\Sigmam_{Y\!Y}^{-\frac{1}{2}}$ have a singular value decomposition $\Gm  = \Um_{\Gm} \Sigmam_{\Gm} \Um_{\Gm}^{\sf{T}}$, with $\Um = \left( \uv_{\Gm,i}, \ldots, \uv_{\Gm,m} \right)$ a unitary matrix and $\Sigmam_{\Gm} = \diag\left( \lambda_{\Gm, 1}, \ldots, \lambda_{\Gm, m} \right)$ a diagonal matrix with $\lambda_{\Gm, 1} \geqslant \ldots \geqslant \lambda_{\Gm, m}$.
Then, if $\tau \leqslant 1$, the attack vector $\av$ that maximizes the probability of not being detected $\sf{P_{ND}}(\av)$ while inducing an excess distortion not less than $ {\sf D}_0$ is
\begin{align}
\label{eq:av_star_loco}
\av &\; = \; \pm\sqrt{\frac{ {\sf D}_0}{\lambda_{\Gm, 1}}}\Sigmam_{Y\!Y}^{\frac{1}{2}}\uv_{\Gm,1}.
\end{align}
Moreover, $\sf{P_{ND}}(\av) = \frac{1}{2} \mathrm{erfc}\left( \frac{\frac{{\sf D}_0}{2 \lambda_{\mathbf{G}, 1} } + \log\tau}{\sqrt{ \frac{2 {\sf D}_0}{\lambda_{\mathbf{G}, 1}}}}\right)$.
\end{theorem}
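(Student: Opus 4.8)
The plan is to convert the constrained quadratic program in \eqref{EqOP1} into a plain norm minimization through a whitening change of variables, and then reduce it to a scalar allocation problem via the spectral decomposition of $\Gm$. Since $\Sigmam_{Y\!Y}$ is positive definite, its symmetric square root $\Sigmam_{Y\!Y}^{\frac{1}{2}}$ is invertible, so I would set $\wv = \Sigmam_{Y\!Y}^{-\frac{1}{2}}\av$, equivalently $\av = \Sigmam_{Y\!Y}^{\frac{1}{2}}\wv$. Substituting, the objective becomes $\av^{\sf{T}}\Sigmam_{Y\!Y}^{-1}\av = \wv^{\sf{T}}\wv = \|\wv\|_2^2$, while the distortion constraint becomes
\be
\av^{\sf{T}} \Sigmam_{Y\!Y}^{-1}\Hm \Sigmam_{X\!X}^2 \Hm^{\sf{T}}\Sigmam_{Y\!Y}^{-1}\av = \wv^{\sf{T}} \Sigmam_{Y\!Y}^{-\frac{1}{2}}\Hm \Sigmam_{X\!X}^2 \Hm^{\sf{T}}\Sigmam_{Y\!Y}^{-\frac{1}{2}}\wv = \wv^{\sf{T}}\Gm\wv.
\ee
Hence \eqref{EqOP1} is equivalent to minimizing $\|\wv\|_2^2$ subject to $\wv^{\sf{T}}\Gm\wv \geqslant {\sf D}_0$.

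Next I would exploit that $\Gm$ is symmetric and positive semidefinite, so its singular value decomposition $\Gm = \Um_{\Gm}\Sigmam_{\Gm}\Um_{\Gm}^{\sf{T}}$ coincides with its eigendecomposition and $\Um_{\Gm}$ is orthogonal. Writing $\vv = \Um_{\Gm}^{\sf{T}}\wv$ preserves the norm, $\|\wv\|_2^2 = \|\vv\|_2^2 = \sum_{i=1}^{m} v_i^2$, and turns the constraint into $\sum_{i=1}^{m}\lambda_{\Gm,i}v_i^2 \geqslant {\sf D}_0$. The problem is now the transparent allocation $\min \sum_i v_i^2$ subject to $\sum_i \lambda_{\Gm,i} v_i^2 \geqslant {\sf D}_0$. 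Using the ordering $\lambda_{\Gm,1}\geqslant\ldots\geqslant\lambda_{\Gm,m}$ I would bound $\sum_i \lambda_{\Gm,i}v_i^2 \leqslant \lambda_{\Gm,1}\sum_i v_i^2$, so that every feasible point satisfies $\|\vv\|_2^2 \geqslant {\sf D}_0/\lambda_{\Gm,1}$, with equality exactly when all the mass is placed on the leading eigendirection. The minimizer is therefore $\vv = \pm\sqrt{{\sf D}_0/\lambda_{\Gm,1}}\,\ev_1$, with $\ev_1$ the first canonical basis vector, which gives $\wv = \pm\sqrt{{\sf D}_0/\lambda_{\Gm,1}}\,\uv_{\Gm,1}$; undoing the whitening $\av = \Sigmam_{Y\!Y}^{\frac{1}{2}}\wv$ yields \eqref{eq:av_star_loco}. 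The expression for $\sf{P_{ND}}(\av)$ then follows by inserting the optimal value $\av^{\sf{T}}\Sigmam_{Y\!Y}^{-1}\av = {\sf D}_0/\lambda_{\Gm,1}$ into \eqref{EqProbND}.

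I expect the substantive content to be the reduction and the Rayleigh-type inequality above; the remaining difficulty is bookkeeping. Two points deserve care. First, the constraint $\av \in \Ac$ in \eqref{EqOP1} must be addressed: because the objective is an energy-like quantity minimized subject to a distortion floor, the optimum is the least-energy attack achieving distortion ${\sf D}_0$, so the budget constraint $\av^{\sf{T}}\av \leqslant E$ is inactive and may be dropped provided the problem is feasible within $\Ac$; I would state this feasibility assumption explicitly. Second, if $\lambda_{\Gm,1}$ has multiplicity greater than one the minimizer is not unique---any unit vector in the top eigenspace works---but the vector in \eqref{eq:av_star_loco} remains a valid optimizer. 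Finally, I would note that it is the monotonicity of $\sf{P_{ND}}$ in $\av^{\sf{T}}\Sigmam_{Y\!Y}^{-1}\av$ established in Proposition~\ref{CorollaryTauSmall} (valid precisely for $\tau \leqslant 1$) that licenses replacing the original ``maximize $\sf{P_{ND}}$'' objective by minimization of this quadratic form.
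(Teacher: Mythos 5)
Your proof is correct, and it reaches \eqref{eq:av_star_loco} by a genuinely different route than the paper. The paper proceeds via the Lagrangian of \eqref{EqOP1}: it writes the stationarity condition $\left(\Sigmam_{Y\!Y}^{-1} - \gamma\,\Sigmam_{Y\!Y}^{-1}\Hm\Sigmam_{X\!X}^2\Hm^{\sf{T}}\Sigmam_{Y\!Y}^{-1}\right)\av = \mathbf{0}$ together with the active distortion constraint, identifies the candidate family $\av_i = \pm\sqrt{{\sf D}_0/\lambda_{\Gm,i}}\,\Sigmam_{Y\!Y}^{\frac{1}{2}}\uv_{\Gm,i}$ with multipliers $\gamma_i = \lambda_{\Gm,i}$, and then selects $i=1$ by comparing the objective values $\av_i^{\sf{T}}\Sigmam_{Y\!Y}^{-1}\av_i = {\sf D}_0/\lambda_{\Gm,i}$. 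Your whitening substitution plus the Rayleigh-type bound $\sum_i \lambda_{\Gm,i}v_i^2 \leqslant \lambda_{\Gm,1}\|\vv\|_2^2$ replaces that enumeration with a one-line global certificate: every feasible point has objective at least ${\sf D}_0/\lambda_{\Gm,1}$, attained exactly on the leading eigenspace. This buys you something the paper's argument lacks: the Lagrangian route checks only first-order necessary conditions and tacitly assumes the minimizer exists among the stationary points (and that a constraint qualification holds), whereas your inequality is a sufficient, self-contained proof of global optimality; it also makes explicit that uniqueness fails when $\lambda_{\Gm,1}$ is repeated, a caveat the paper's claim that the $\pm$ pair are ``the unique solutions'' quietly overlooks. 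What the paper's machinery buys in exchange is reuse: the same Lagrangian template is invoked (by citation to \cite{EPKP_inria_15}) for Theorems \ref{PropProb2} and \ref{Th:prob4}, where the objective is no longer monotone in $\av^{\sf{T}}\Sigmam_{Y\!Y}^{-1}\av$ and a direct Rayleigh argument would not suffice. Your two bookkeeping points are sound and consistent with the paper's implicit treatment: the paper likewise omits the energy budget $\av^{\sf{T}}\av \leqslant E$ from its Lagrangian, and it is indeed Proposition~\ref{CorollaryTauSmall} that licenses converting ``maximize ${\sf P_{ND}}$'' into minimization of the quadratic form; the only addition worth stating is that feasibility of the reduced problem requires $\lambda_{\Gm,1} > 0$, i.e., $\Hm\Sigmam_{X\!X} \neq \mathbf{0}$.
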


\begin{proof}
Consider the Lagrangian
\be
L(\av) = \av^{\sf{T}} \Sigmam_{Y\!Y}^{-1} \av - \gamma  \left(\av^{\sf{T}} \Sigmam_{Y\!Y}^{-1} \Hm \Sigmam_{X\!X}^{2} \Hm^{\sf{T}} \Sigmam_{Y\!Y}^{-1} \av  - {\sf D}_0  \right),
\ee
with $\gamma>0$ a Lagrangian multiplier.
Then, the necessary conditions for $\av$ to be a solution to the optimization problem \eqref{EqOP1} are:
\begin{align}
\label{EqCond1}
\nabla_{\av} L(\av) & = 2 \left( \Sigmam_{Y\!Y}^{-1}  - \gamma \Sigmam_{Y\!Y}^{-1} \Hm  \Sigmam_{X\!X}^2 \Hm^{\sf{T}} \Sigmam_{Y\!Y}^{-1} \right)  \av=0\\
\label{EqCond2}
\frac{\mathrm{d}}{\mathrm{d} \gamma} L(\av)  & =\av^{\sf{T}} \Sigmam_{Y\!Y}^{-1} \Hm  \Sigmam_{X\!X}^2 \Hm^{\sf{T}} \Sigmam_{Y\!Y}^{-1} \av -  {\sf D}_0=0.
\end{align}
Note that any
\begin{align}
\av_i & = \pm \sqrt{\frac{ {\sf D}_0}{\lambda_{\Gm, i}}}  \Sigmam_{Y\!Y}^{\frac{1}{2}} \uv_{\Gm,i} \mbox{ and }\\
\gamma_i & = \lambda_{\Gm,i}, \;\textnormal{with}\; 1\leqslant i \leqslant \mathrm{rank}\left( \Gm \right),
\end{align}
satisfy $\gamma_i > 0 $ and  conditions \eqref{EqCond1} and \eqref{EqCond2}.
Hence, the set of vectors that satisfy the necessary conditions to be a solution of \eqref{EqOP1} is
\be
\left\lbrace \av_{i} = \pm \sqrt{\frac{ {\sf D}_0}{\lambda_{\Gm, i}}} \Sigmam_{Y\!Y}^{\frac{1}{2}}\uv_{\Gm,i}: 1 \leqslant i \leqslant \mathrm{rank}\left( \Gm \right) \right\rbrace.
\ee
More importantly, any vector $\av \neq \av_{i}$, with $1 \leqslant i \leqslant \mathrm{rank}\left( \Gm \right)$, does not satisfy the necessary conditions.
Moreover,
\be
\av_{i}^{\sf{T}} \Sigmam_{Y\!Y} ^{-1} \av_{i} =  \frac{{\sf D}_0}{\lambda_{\Gm, i}} \geqslant  \frac{{\sf D}_0}{\lambda_{\Gm, 1}}.
\ee
Therefore, $\av = \pm  \sqrt{ \frac{{\sf D}_0}{\lambda_{\Gm, 1}}} \Sigmam_{Y\!Y}^{\frac{1}{2}}\uv_{\Gm,1}$ are the unique solutions to \eqref{EqOP1}.
This completes the proof.
\end{proof}

Interestingly, the construction of the data-injection attack $\av$ in \eqref{eq:av_star_loco} does not require the exact knowledge of $\tau$. That is, only knowing that $\tau \leqslant 1$ is enough to build the data-injection attack that has the highest  probability of not being detected and induces a distortion of at least $\sf{D}_{0}$.

In the case in which $\tau > 1$, it is also possible to find the data-injection attack vector that induces a distortion not less than $\sf{D}_{0}$ and the maximum probability of not being detected. Such a vector is the solution to the following optimization problem.
\be
\label{EqOP2}
\min_{\av\in\mathcal{A}} \frac{\frac{1}{2} \av^{\sf{T}}\Sigmam_{Y\!Y}^{-1} \av + \log\tau }{\sqrt{2 \av^{\sf{T}}\Sigmam_{Y\!Y}^{-1} \av }}  \;\; \textnormal{s.t.} \;\;  \av^{\sf{T}} \Sigmam_{Y\!Y}^{-1}\Hm  \Sigmam_{X\!X}^2 \Hm^{\sf{T}}\Sigmam_{Y\!Y}^{-1}\av \geq {\sf D}_0.
\ee
The solution to the optimization problem in \eqref{EqOP2} is given by the following theorem.
\begin{theorem}\label{PropProb2}
Let $\Gm = \Sigmam_{Y\!Y}^{-\frac{1}{2}}\Hm  \Sigmam_{X\!X}^2 \Hm^{\sf{T}}\Sigmam_{Y\!Y}^{-\frac{1}{2}}$ have a singular value decomposition $\Gm  = \Um_{\mathbf{G}} \Sigmam_{\Gm} \Um_{\Gm}^{\sf{T}}$, with $\Um_{\Gm} = \left( \uv_{\Gm,i}, \ldots, \uv_{\Gm,m} \right)$ a unitary matrix and $\Sigmam_{\Gm} = \diag\left( \lambda_{\Gm, 1}, \ldots, \lambda_{\Gm, m} \right)$ a diagonal matrix with $\lambda_{\Gm, 1} \geqslant \ldots \geqslant \lambda_{\Gm, m}$.
Then, when $\tau > 1$, the attack vector $\av$ that maximizes the probability of not being detected $\sf{P_{ND}}(\av)$ while producing an excess distortion not less than $ {\sf D}_0$ is
\begin{align}
%\label{eq:av_star2}
\nonumber
\av & = \left \lbrace
\begin{array}{ll}
{\tiny\pm} \sqrt{\frac{ {\sf D}_0}{\lambda_{\Gm, k^*}}}\Sigmam_{Y\!Y}^{\frac{1}{2}}\uv_{\Gm,k^*} & \text{ if } \frac{ {\sf D}_0}{2\log\tau  \lambda_{\Gm, \mathrm{rank} \, \Gm} } \geqslant 1,\\
{\tiny \pm} \sqrt{2 \log\tau}\Sigmam_{Y\!Y}^{\frac{1}{2}}\uv_{\Gm,1} & \text{ if } \frac{ {\sf D}_0}{2\log\tau  \lambda_{\Gm, \mathrm{rank} \, \Gm} }  < 1
\end{array}
\right.
\end{align}
with
\be\label{EqKstar}
k^* = \displaystyle\mathrm{arg}\min_{k \in \lbrace 1, \ldots, \mathrm{rank} \Gm \rbrace : \frac{\sf{D}_0}{\lambda_{\Gm, k}} > 2 \log (\tau)} \frac{\sf{D}_0}{\lambda_{\Gm, k}} .
\ee
\end{theorem}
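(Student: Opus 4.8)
The plan is to follow the same Lagrangian route used for Theorem~\ref{PropProb1}, exploiting that the objective of \eqref{EqOP2} depends on $\av$ only through the scalar $x = \av^{\sf{T}}\Sigmam_{Y\!Y}^{-1}\av$ via the function $g(x) = \frac{\frac{1}{2}x + \log\tau}{\sqrt{2x}}$ already analysed in the proof of Propositions~\ref{CorollaryTauSmall} and~\ref{CorollaryTauBig}. First I would record the two facts about $g$ that drive everything: for $\tau > 1$ it is strictly decreasing on $(0, 2\log\tau)$, strictly increasing on $(2\log\tau, \infty)$, and attains its unique minimum at $x = 2\log\tau$. Since $\sf{P_{ND}}(\av) = \frac{1}{2}\mathrm{erfc}(g(x))$ and $\mathrm{erfc}$ is decreasing, minimising the objective of \eqref{EqOP2} is equivalent to driving $x$ as close to $2\log\tau$ as the distortion constraint allows.

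Next I would diagonalise exactly as in Theorem~\ref{PropProb1}: writing $\bv = \Sigmam_{Y\!Y}^{-\frac{1}{2}}\av$ turns the constraint into $\bv^{\sf{T}}\Gm\bv \geq {\sf D}_0$ and the objective into $g(\|\bv\|_2^2)$, while the stationarity condition \eqref{EqCond1} again forces any constrained critical point to lie along a single eigenvector $\uv_{\Gm,i}$ of $\Gm$. Along $\uv_{\Gm,i}$ with the constraint active one has $x = x_i := {\sf D}_0/\lambda_{\Gm,i}$, which recovers the candidate family $\av_i = \pm\sqrt{{\sf D}_0/\lambda_{\Gm,i}}\,\Sigmam_{Y\!Y}^{\frac{1}{2}}\uv_{\Gm,i}$ obtained in the previous proof, with rungs ordered $x_1 \leq x_2 \leq \cdots \leq x_{\mathrm{rank}\,\Gm}$ because the $\lambda_{\Gm,i}$ decrease.

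The case split then comes from locating $2\log\tau$ relative to the ladder $\{x_i\}$, via the multiplier sign in the Karush--Kuhn--Tucker conditions. If some rung exceeds $2\log\tau$ (equivalently $\frac{{\sf D}_0}{2\log\tau\,\lambda_{\Gm,\mathrm{rank}\,\Gm}} \geq 1$, since $x_{\mathrm{rank}\,\Gm}$ is the largest rung), then on the increasing branch of $g$ the best active-constraint critical point is the smallest rung lying to the right of $2\log\tau$; this is precisely the index $k^*$ of \eqref{EqKstar}, and substituting $x = x_{k^*}$ gives the first branch of the claimed $\av$. If instead every rung satisfies $x_i < 2\log\tau$ (the complementary condition $\frac{{\sf D}_0}{2\log\tau\,\lambda_{\Gm,\mathrm{rank}\,\Gm}} < 1$), all active-constraint candidates sit on the decreasing branch, where $g'(x_i) < 0$ cannot balance a nonnegative multiplier; hence the multiplier vanishes, the minimiser is the unconstrained optimum $x = 2\log\tau$, and aligning with $\uv_{\Gm,1}$ (the direction of largest distortion per unit $x$) and scaling to $x = 2\log\tau$ keeps $\bv^{\sf{T}}\Gm\bv = 2\log\tau\,\lambda_{\Gm,1} \geq {\sf D}_0$ feasible, yielding $\av = \pm\sqrt{2\log\tau}\,\Sigmam_{Y\!Y}^{\frac{1}{2}}\uv_{\Gm,1}$.

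The step I expect to be the main obstacle is certifying that the $k^*$ critical point is genuinely the minimiser over the whole feasible set. Because $g$ is not monotone, one must compare $g(x_{k^*})$ against the neighbouring rung $x_{k^*-1}$ on the decreasing branch and control how relaxing the constraint toward the unconstrained minimiser $2\log\tau$ interacts with feasibility; reconciling this with the stated regime boundary $\frac{{\sf D}_0}{2\log\tau\,\lambda_{\Gm,\mathrm{rank}\,\Gm}} \gtrless 1$ and with the strict inequality in \eqref{EqKstar} is the delicate part of the argument. Once the optimal $x$ is pinned down in each regime, I would substitute it into $g$ to report the attained $\sf{P_{ND}}$, mirroring the closing line of Theorem~\ref{PropProb1}.
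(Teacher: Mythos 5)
Your Lagrangian/KKT route is the one the paper intends (the chapter omits this proof, stating only that it parallels Theorem~\ref{PropProb1} and deferring to \cite{EPKP_inria_15}), and your reduction to the scalar $x=\av^{\sf{T}}\Sigmam_{Y\!Y}^{-1}\av$, the unimodality of $g$ with minimum at $2\log\tau$, and the enumeration of candidates (active-constraint rungs $x_k \eqdef \sf{D}_0/\lambda_{\Gm,k}$ along eigenvectors, plus the zero-multiplier candidates with $x=2\log\tau$) are all correct. But the step you yourself flag as delicate is a genuine gap, and as written it cannot be closed. Your first branch is triggered by $x_{\mathrm{rank}\,\Gm}\geqslant 2\log\tau$, i.e., \emph{some} rung lies right of the minimum; this does not exclude feasibility of the unconstrained stationary point. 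Whenever $x_1 \leqslant 2\log\tau \leqslant x_{\mathrm{rank}\,\Gm}$ (a mixed regime fully consistent with your trigger), the vector $\bv=\sqrt{2\log\tau}\,\uv_{\Gm,1}$ satisfies $\bv^{\sf{T}}\Gm\bv = 2\lambda_{\Gm,1}\log\tau \geqslant \sf{D}_0$, is feasible, and attains the global minimum $g(2\log\tau)=\sqrt{\log\tau}$, strictly beating $g(x_{k^*})$ since $x_{k^*}>2\log\tau$ and $g$ is strictly increasing there. Concretely: take $\Gm=\diag(10,1)$, $\tau=e$, $\sf{D}_0=4$. Your branch condition holds ($x_2=4\geqslant 2$) and gives $k^*=2$ with objective $g(4)=3/(2\sqrt{2})\approx 1.06$, yet $\av=\pm\sqrt{2}\,\Sigmam_{Y\!Y}^{\frac{1}{2}}\uv_{\Gm,1}$ is feasible with objective $g(2)=1$. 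Your sentence ``the best active-constraint critical point is the smallest rung lying to the right of $2\log\tau$'' only compares active-constraint candidates with one another; it never rules out the $\mu=0$ candidate that your own KKT enumeration admits whenever $2\lambda_{\Gm,1}\log\tau\geqslant \sf{D}_0$.

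What your machinery actually yields is a dichotomy governed by $\lambda_{\Gm,1}$, not $\lambda_{\Gm,\mathrm{rank}\,\Gm}$: if $\sf{D}_0 \leqslant 2\lambda_{\Gm,1}\log\tau$ the optimum is $\pm\sqrt{2\log\tau}\,\Sigmam_{Y\!Y}^{\frac{1}{2}}\uv_{\Gm,1}$ with the distortion constraint slack, and otherwise it is $\pm\sqrt{\sf{D}_0/\lambda_{\Gm,1}}\,\Sigmam_{Y\!Y}^{\frac{1}{2}}\uv_{\Gm,1}$, in which case the index set in \eqref{EqKstar} contains $k=1$ and hence $k^*=1$; no $k^*>1$ rung is ever a global maximizer of $\sf{P_{ND}}$ under the constraints as posed. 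To make your proof sound you must either restrict the first branch to $\sf{D}_0 > 2\lambda_{\Gm,1}\log\tau$, or identify an additional binding constraint that removes the $\uv_{\Gm,1}$-aligned candidates --- note that both the statement and your argument silently drop the energy budget $\av^{\sf{T}}\av\leqslant E$ defining $\Ac$, which is the natural mechanism by which lower-eigenvalue directions could become optimal. Since the chapter defers its proof to \cite{EPKP_inria_15}, there is no in-paper argument to reconcile against, but as it stands your first-branch optimality claim is not merely delicate: it is the point at which the proof fails.
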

\begin{proof}
The structure of the proof of Theorem~\ref{PropProb2} is similar to the proof of Theorem \ref{PropProb1} and is omitted in this chapter. A complete proof can be found in \cite{EPKP_inria_15}.
\end{proof}

\subsection{Attacks with Maximum Distortion}

In the previous subsection, the attacker constructs its data-injection vector $\av$ aiming to maximize the probability of non-detection $\sf{P_{ND}}(\av)$ while guaranteeing a minimum distortion. However, this problem has a dual in which the objective is to maximize the distortion $\av^{\sf{T}} \Sigmam_{Y\!Y}^{-1}\Hm  \Sigmam_{X\!X}^2 \Hm^{\sf{T}}\Sigmam_{Y\!Y}^{-1}\av$ while guaranteeing that the probability of not being detected remains always larger than a given threshold $L'_0 \in [0,\frac{1}{2}]$.
This problem can be formulated as the following optimization problem:
 \be
 \label{EqOP3}
\max_{\av\in\mathcal{A}}  \av^{\sf{T}} \Sigmam_{Y\!Y}^{-1} \Hm \Sigmam_{X\!X}^2 \Hm^{\sf{T}} \Sigmam_{Y\!Y}^{-1} \av \; \textnormal{ s.t. } \; \frac{\frac{1}{2} \av^{\sf{T}}\Sigmam_{Y\!Y}^{-1} \av + \log\tau }{\sqrt{2 \av^{\sf{T}}\Sigmam_{Y\!Y}^{-1} \av }} \leq L_0,
\ee
with $L_0 = \mathrm{erfc}^{-1} \left(2 L'_0 \right) \in [0, \infty)$.

The solution to the optimization problem in \eqref{EqOP3} is given by the following theorem.

\begin{theorem}
\label{Th:prob4}
Let the matrix $\Gm = \Sigmam_{Y\!Y}^{-\frac{1}{2}} \Hm  \Sigmam_{X\!X}^2 \Hm^{\sf{T}} \Sigmam_{Y\!Y}^{-\frac{1}{2}}$ have a singular value decomposition $\Um_{\Gm} \Sigmam_{\Gm} \Um_{\Gm}^{\sf{T}}$, with $\Um = \left( \uv_{\Gm,i}, \ldots, \uv_{\Gm,m} \right)$ a unitary matrix and $\Sigmam_{\Gm} = \diag\left( \lambda_{\Gm, 1}, \ldots, \lambda_{\Gm, m} \right)$ a diagonal matrix with $\lambda_{\Gm, 1} \geqslant \ldots \geqslant \lambda_{\Gm, m}$.
Then, the attack vector $\av$ that maximizes the excess distortion $\av^{\sf{T}} \Sigmam_{Y\!Y}^{-\frac{1}{2}}\Gm \Sigmam_{Y\!Y}^{-\frac{1}{2}}\av$ with a probability of not being detected that does not go below $L_0 \in [0, \frac{1}{2}]$ is
\begin{align}
\label{eq:av_star}
\av & = \pm \left(\sqrt{2} L_0 + \sqrt{2 L_0^2 - 2 \log\tau}\right) \Sigmam_{Y\!Y}^{\frac{1}{2}}\uv_{\Gm,1},
\end{align}
when a solution exists.
\end{theorem}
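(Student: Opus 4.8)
The plan is to decouple the \emph{direction} of the attack from its \emph{size} by the change of variables $\wv \eqdef \Sigmam_{Y\!Y}^{-\frac{1}{2}}\av$. Under this substitution the objective becomes the quadratic form $\wv^{\sf{T}}\Gm\wv$, while the quantity governing the detection constraint becomes $x \eqdef \av^{\sf{T}}\Sigmam_{Y\!Y}^{-1}\av = \|\wv\|_2^2$. The key structural observation is that the constraint $\frac{\frac{1}{2}x + \log\tau}{\sqrt{2x}} \leqslant L_0$ depends on $\wv$ \emph{only} through its norm $\|\wv\|_2^2 = x$, whereas the objective factorizes as $\wv^{\sf{T}}\Gm\wv = \|\wv\|_2^2 \cdot \big(\wv^{\sf{T}}\Gm\wv / \|\wv\|_2^2\big)$, i.e.\ the size $x$ times the Rayleigh quotient of $\Gm$. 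Since $\Gm = \Sigmam_{Y\!Y}^{-\frac{1}{2}}\Hm\Sigmam_{X\!X}^2\Hm^{\sf{T}}\Sigmam_{Y\!Y}^{-\frac{1}{2}}$ is positive semidefinite, both factors are nonnegative, so they can be maximized independently.

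For any fixed size $x$, the Rayleigh--Ritz characterization (equivalently, the SVD of $\Gm$) shows the quotient is maximized by aligning $\wv$ with the leading singular vector $\uv_{\Gm,1}$, yielding value $\lambda_{\Gm,1}$ and optimal objective $\lambda_{\Gm,1}\,x$. The problem thus collapses to the scalar program $\max_{x}\,\lambda_{\Gm,1}\,x$ subject to $g(x)\leqslant L_0$, where $g$ is precisely the function analyzed in the proof of Proposition~\ref{CorollaryTauSmall}. As $\lambda_{\Gm,1}\,x$ is increasing in $x$, the optimum is attained at the largest feasible $x$.

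Next I would describe the feasible set $\{x>0 : g(x)\leqslant L_0\}$ via the monotonicity of $g$ already established: for $\tau\leqslant 1$ it is an interval $(0,x_+]$, and for $\tau>1$ it is $[x_-,x_+]$, where $g$ attains its minimum value $\sqrt{\log\tau}$ at $x = 2\log\tau$. In either case the largest feasible point $x_+$ is the larger root of $g(x)=L_0$. Clearing denominators turns $g(x)=L_0$ into the quadratic $x - 2\sqrt{2}\,L_0\sqrt{x} + 2\log\tau = 0$ in the variable $\sqrt{x}$, whose larger root is $\sqrt{x_+} = \sqrt{2}\,L_0 + \sqrt{2L_0^2 - 2\log\tau}$. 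Undoing the substitution through $\av = \Sigmam_{Y\!Y}^{\frac{1}{2}}\wv = \pm\sqrt{x_+}\,\Sigmam_{Y\!Y}^{\frac{1}{2}}\uv_{\Gm,1}$ reproduces exactly \eqref{eq:av_star}.

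The main subtleties lie in three places. First, the decoupling argument must be justified carefully: independently maximizing the Rayleigh quotient and the magnitude is legitimate precisely because the objective is a \emph{product} of two nonnegative factors controlled by orthogonal degrees of freedom (direction versus norm). Second, the case analysis on $\tau$ is needed to confirm that $x_+$ is the correct root and that the smaller candidate $\sqrt{2}\,L_0 - \sqrt{2L_0^2 - 2\log\tau}$ is either rejected (it is negative when $\tau<1$) or corresponds to the irrelevant lower endpoint $x_-$ (when $\tau>1$). Third, the qualifier \emph{when a solution exists} must be pinned down: the root is real only if $2L_0^2 \geqslant 2\log\tau$ (so that $g(x)\leqslant L_0$ is feasible, which for $\tau>1$ requires $L_0\geqslant\sqrt{\log\tau}$), and the energy budget must accommodate the optimizer, i.e.\ $x_+\,\uv_{\Gm,1}^{\sf{T}}\Sigmam_{Y\!Y}\uv_{\Gm,1}\leqslant E$ so that $\av\in\Ac$.
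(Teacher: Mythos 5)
Your proof is correct, but it follows a genuinely different route from the paper's. The paper omits the argument, stating only that it mirrors the proof of Theorem~\ref{PropProb1}: a Lagrangian is formed, the stationarity and complementary-slackness conditions are solved to enumerate candidate vectors $\av_i = \pm c_i\,\Sigmam_{Y\!Y}^{\frac{1}{2}}\uv_{\Gm,i}$ along the eigenvector directions of $\Gm$, and the best candidate is selected by comparing objective values. You instead whiten via $\wv = \Sigmam_{Y\!Y}^{-\frac{1}{2}}\av$ and exploit the fact that the constraint depends on $\wv$ only through $x=\|\wv\|_2^2$ while the objective factors as $x$ times a Rayleigh quotient; Rayleigh--Ritz then gives a \emph{global} bound per fixed $x$ (rather than merely necessary first-order conditions), and the problem collapses to a one-dimensional program over $x$ resolved by the monotonicity analysis of $g$ already carried out for Propositions~\ref{CorollaryTauSmall} and~\ref{CorollaryTauBig}. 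Your quadratic in $\sqrt{x}$, namely $s^2 - 2\sqrt{2}L_0 s + 2\log\tau = 0$ with larger root $s_+ = \sqrt{2}L_0 + \sqrt{2L_0^2 - 2\log\tau}$, reproduces \eqref{eq:av_star} exactly, and your case analysis correctly discards the smaller root (negative for $\tau<1$, lower endpoint $x_-$ for $\tau>1$). What your approach buys: global optimality is immediate instead of requiring the enumeration-and-exclusion step of the KKT argument, and you make the qualifier ``when a solution exists'' precise --- the discriminant condition $L_0^2 \geqslant \log\tau$ (binding only for $\tau>1$, matching the minimum value $g(2\log\tau)=\sqrt{\log\tau}$) plus the energy-budget feasibility $s_+^2\,\uv_{\Gm,1}^{\sf{T}}\Sigmam_{Y\!Y}\uv_{\Gm,1}\leqslant E$, neither of which the paper spells out. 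What the Lagrangian route buys is uniformity: one proof template covers Theorems~\ref{PropProb1}, \ref{PropProb2}, and \ref{Th:prob4}, including variants where the constraint is not a function of the norm alone, in which case your decoupling would no longer apply. One shared caveat, not a gap: if $\lambda_{\Gm,1}$ has multiplicity greater than one, the maximizer is unique only up to the choice of leading eigenvector, a point glossed over by both your argument and the paper's statement.
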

\begin{proof}
The structure of the proof of Theorem~\ref{Th:prob4} is similar to the proof of Theorem \ref{PropProb1} and is omitted in this chapter. A complete proof can be found in \cite{EPKP_inria_15}.
\end{proof}

\section{Decentralized Deterministic Attacks}\label{SecDA}
\label{sec:d_attacks}
Let  $\Kc = \lbrace 1, \ldots, K \rbrace$ be the set of attackers that can potentially perform a data injection attack on the network, e.g., a decentralized vector attack. Let also $\Cc_k \in \{1,2,\ldots,m\}$ be the set of sensors that attacker $k \in \Kc$ can control. Assume that $\Cc_1, \ldots, \Cc_K$ are proper sets and form a partition of the set $\Mc$ of all sensors.
The set $\Ac_k$ of data attack vectors $\av_k = (a_{k,1}, a_{k,2}, \ldots, a_{k,m})$ that can be injected into the network by attacker $k \in \Kc$ is of the form
\be
\label{EqSetAk}
\Ac_k=\lbrace \av_k\in\mathbb{R}^m: \av_{k,j}=0\;\text{for all}\; j\notin \Cc_k, \av_k^{\sf{T}}\av_k\leq E_k \rbrace.
%\Ac_k = \left\lbrace \av_k : \av_k = \Am_k \cdot \xv, \, \xv^{\sf{T}} \xv < E_k  \text{ and }  \xv \in \mathbb{R}^{M}     \right\rbrace,
\ee
The constant $E_k < \infty$ represents the energy budget of attacker $k$.
Let the set of all possible sums of the elements of $\Ac_i$ and $\Ac_j$ be denoted by $\Ac_i \oplus \Ac_j$. That is, for all $\av \in \Ac_i \oplus \Ac_j$, there exists a pair of vectors $(\av_i, \av_j) \in \mathcal{A}_i \times \mathcal{A}_j$ such that $\av = \av_i + \av_j$.
Using this notation, let the set of all possible data-injection attacks be denoted by
\begin{align}
\Ac & =   \Ac_{1} \oplus \Ac_{2} \oplus \ldots \oplus \Ac_{K} ,
\end{align}
and the set of complementary data-injection attacks with respect to attacker $k$ be denoted by
\begin{align}
\Ac_{-k} & =   \Ac_{1} \oplus \ldots \oplus \Ac_{k-1} \oplus \Ac_{k+1} \oplus \ldots \oplus \Ac_{K}.
\end{align}
Given the individual data injection vectors $\av_i \in \Ac_i$, with $i \in \lbrace 1, \ldots, K \rbrace$, the global  attack vector $\av$ is
\be
\label{eq:va}
\av = \displaystyle\sum_{i = 1}^{K} \av_k \in \Ac.
\ee
The aim of attacker $k$ is to corrupt the measurements obtained by the set of meters $\Cc_k$ by injecting an error vector $\av_k \in \mathcal{A}_k$ that maximizes the damage to the network, e.g., the excess distortion, while avoiding the detection of the global data-injection vector $\av$.
Clearly, all attackers have the same interest but they control different sets of measurements, i.e., $\Cc_i \neq \Cc_k$, for a any pair $(i,k) \in \Kc^2$.
For modeling this behavior, attackers use the utility function $\phi: \mathbb{R}^{m}  \rightarrow  \mathbb{R}$, to determine whether a data-injection vector $\av_k \in \mathcal{A}_k$ is more beneficial than another $\av'_k \in \mathcal{A}_k$ given the complementary attack vector
\be
\av_{-k} = \displaystyle\sum_{i \in \lbrace 1, \ldots, K \rbrace \setminus \lbrace k \rbrace}  \av_i  \in \Ac_{-k}
\ee
adopted by all the other attackers. The function $\phi$ is chosen considering the fact that an attack is said to be successful if it induces a non-zero distortion and it is not detected. Alternatively, if the attack is detected no damage is induced into the network as the operator discards the measurements and no estimation is performed. Hence, given a global attack $\av$, the distortion induced into the measurements is $\mathds{1}_{\left\lbrace L(Y_{a}^{m}, \av) > \tau \right\rbrace} \xv_a^{\sf{T}}\xv_a$. However, attackers are not able to know the exact state of the network $\xv$ and the realization of the noise $\zv$ before launching the attack. Thus, it appears natural to exploit the knowledge of the first and second moments of both the state variables $\xv$ and noise $\zv$ and consider as a metric the expected distortion $\phi(\av)$ that can be induced by the attack vector $\av$:
\begin{align}
\label{eq:g_utility}
\phi(\av) & = \sf{E}\left[ \left( \mathds{1}_{\left\lbrace L(Y_{a}^{m}, \av) > \tau \right\rbrace} \right) \cv^{\sf{T}}\cv\right],\\
& = \sf{P_{ND}}(\av) \; \av^{\sf{T}} \Sigmam_{Y\!Y}^{-1}\Hm  \Sigmam_{X\!X}^2 \Hm^{\sf{T}}\Sigmam_{Y\!Y}^{-1}\av,
\end{align}
where $\cv$ is in (\ref{eq:xv_a}) and the expectation is taken over the distribution of state variables $X^{n}$ and the noise $Z^{m}$. Note that under this assumptions of global knowledge, this model considers the worst case scenario for the network operator. Indeed, the result presented in this section corresponds to the case in which the attackers inflict the most harm onto the state estimator.

\subsection{Game Formulation}
The benefit $\phi(\av)$ obtained by attacker $k$ does not only depend on its own data-injection vector $\av_k$, but also on the data-injection vectors $\av_{-k}$ of all the other attackers. This becomes clear from the construction of the global data-injection vector $\av$ in \eqref{eq:va}, the excess distortion $\xv_{a}$ in \eqref{eq:xv_a} and the probability of not being detected $\sf{P_{ND}}(\av)$ in  \eqref{EqProbND}.
Therefore, the interaction of all attackers in the network can be described by a game in normal form
\be
\label{EqGame}
\GameNF.
\ee
Each attacker is a player in the game $\gameNF$ and it is identified by an index from the set $\Kc$.
The actions player $k$ might adopt are data-injection vectors $\av_k$ in the set $\Ac_k$ in \eqref{EqSetAk}.
The underlying assumption in the following of this section is that, given a vector of data-injection attacks $\av_{-k}$, player $k$ aims to adopt a data-injection vector $\av_k$ such that the expected excess distortion $\phi(\av_k+ \av_{-k})$ is maximized.
That is,
\be
\av_k \in \BR_k\left( \av_{-k} \right),
\ee
where the correspondence $\BR_k: \Ac_{-k} \rightarrow 2^{\Ac_{k}}$ is the best response correspondence, i.e.,
\be
\BR_{k}\left( \av_{-k} \right) = \arg\max_{\av_k \in \Ac_k} \phi \left( \av_k + \av_{-k} \right).
\ee
The notation $2^{\Ac_{k}}$ represents the set of all possible subsets of $\Ac_{k}$.
Note that $\mathrm{BR}_{k}\left( \av_{-k} \right) \subseteq \Ac_k$ is the set of data-injection attack vectors that are optimal given that the other attackers have adopted the data-injection vector $\av_{-k}$.  In this setting, each attacker tampers with a subset $\Cc_k$ of all sensors $\Cc = \{ 1,2, \ldots, m\}$, as opposed to the centralized case in which there exists a single attacker that is able to tampers with all sensors in $\Cc$.

A game solution that is particularly relevant for this analysis is the NE \cite{Nash-PNAS-1950}.

\begin{definition}[Nash Equilibrium] \label{def:NE}
The data-injection vector $\av$  is an NE  of the game $\gameNF$ if and only if it is a solution of the fix point equation
\begin{align}\label{eq:defNE}
\av = \BR\left(\av\right),
\end{align}
with $\BR: \Ac \rightarrow  2^{\Ac}$ being the global best-response correspondence, i.e.,
\be
\BR\left(\av \right) = \BR_1\left( \av_{-1}\right) + \ldots + \BR_K\left( \av_{-K}\right).
\ee
\end{definition}
Essentially, at an NE, attackers obtain the maximum benefit given the data-injection vector adopted by all the other attackers.
This implies that an NE is an operating point at which attackers achieve the highest expected distortion induced over the measurements.
More importantly,  any unilateral deviation from an equilibrium data-injection vector $\av$ does not lead to an improvement of the average excess distortion.
Note that this formulation does not say anything about the exact distortion induced by an attack but the average distortion. This is mainly because the attack is chosen under the uncertainty of the state vector $X^{n}$ and the noise term $Z^{m}$.

The following proposition highlights an important property of the game $\gameNF$ in  \eqref{EqGame}.

\begin{proposition}\label{PropPotential} The game $\gameNF$ in \eqref{EqGame} is a potential game.
\end{proposition}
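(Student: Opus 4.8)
The plan is to recognize that $\gameNF$ is an \emph{identical-interest} (common-payoff) game, and therefore admits its common utility as an exact potential. First I would recall the relevant definition: the game $\gameNF$ is an exact potential game if there exists a potential function $\Phi: \Ac \rightarrow \mathbb{R}$ such that, for every player $k \in \Kc$, every pair of actions $\av_k, \av_k' \in \Ac_k$, and every complementary profile $\av_{-k} \in \Ac_{-k}$,
\begin{align}
\phi\left( \av_k + \av_{-k} \right) - \phi\left( \av_k' + \av_{-k} \right) = \Phi\left( \av_k + \av_{-k} \right) - \Phi\left( \av_k' + \av_{-k} \right).
\end{align}
The crucial structural observation, read off directly from the game description, is that every attacker evaluates its actions through the \emph{same} utility function $\phi$ given in \eqref{eq:g_utility}, and that this utility depends on the individual injection vectors only through the aggregate attack vector $\av = \sum_{i=1}^{K} \av_i$ of \eqref{eq:va}.

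The key step is then to exhibit the potential explicitly. I would set $\Phi = \phi$, i.e. take the common expected-distortion utility itself as the candidate potential. With this choice, a unilateral deviation by player $k$ from $\av_k$ to $\av_k'$ with $\av_{-k}$ held fixed changes the global vector from $\av_k + \av_{-k}$ to $\av_k' + \av_{-k}$ in both the player's payoff and the potential. Consequently the right-hand side of the defining identity is, by construction, literally the left-hand side, and the verification reduces to a tautology that holds for all $k$, all $\av_k, \av_k'$, and all $\av_{-k}$. This establishes that $\phi$ is an exact potential for $\gameNF$, and hence that the game is a potential game.

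I expect no genuine obstacle here: the content of the proposition is precisely the remark that, because the attackers share a single objective $\phi(\av)$ that is a function of the aggregate injection alone, unilateral deviations perturb each player's payoff and the common objective identically. The only point worth stating with care is that $\phi$ is well defined on the joint action space $\Ac = \Ac_1 \oplus \cdots \oplus \Ac_K$ and that the summation map $(\av_1, \ldots, \av_K) \mapsto \sum_{i} \av_i$ is what couples the players, so that $\Phi = \phi$ is a legitimate potential on that space rather than merely a function of each player's own action. Once this is in place, the standard machinery for potential games — in particular guaranteed existence of a pure-strategy NE and convergence of best-response dynamics — becomes available for $\gameNF$, which motivates the equilibrium analysis that follows.
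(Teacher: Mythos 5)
Your proof is correct and follows essentially the same route as the paper, which likewise observes that all attackers share the common utility $\phi$ of the aggregate attack vector and concludes (citing Monderer and Shapley) that $\phi$ itself is an exact potential. You merely spell out the tautological verification of the exact-potential identity that the paper leaves implicit.
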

\begin{proof}
The proof follows immediately from the observation that all the players have the same utility function $\phi$ \cite{Monderer-GEB-1996}. Thus, the function $\phi$ is a potential of the game $\gameNF$ in \eqref{EqGame} and any maximum of the potential function is an NE of the game $\gameNF$.
\end{proof}

In general, potential games \cite{Monderer-GEB-1996} possess numerous properties that are inherited by the game $\gameNF$ in \eqref{EqGame}. These properties are detailed by the following  propositions

\begin{proposition}\label{PropExistenceNE}
The game $\gameNF$ possesses at least one NE.
\end{proposition}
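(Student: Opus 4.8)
The plan is to invoke the potential-game structure established in Proposition~\ref{PropPotential} and reduce the existence of an NE to the existence of a global maximizer of the potential $\phi$. The proof of Proposition~\ref{PropPotential} already records that every maximum of $\phi$ over the joint action set $\Ac$ is an NE of $\gameNF$, so it suffices to show that $\phi$ attains its maximum on $\Ac$. I would obtain this by a standard compactness-and-continuity (Weierstrass) argument.

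First I would verify that $\Ac = \Ac_1 \oplus \cdots \oplus \Ac_K$ is compact. Each $\Ac_k$ defined in \eqref{EqSetAk} is the intersection of the closed Euclidean ball $\{\av_k : \av_k^{\sf{T}}\av_k \leq E_k\}$ with the coordinate subspace $\{\av_k : \av_{k,j}=0 \text{ for all } j \notin \Cc_k\}$, hence closed and bounded, i.e. compact. Since the supports $\Cc_1, \ldots, \Cc_K$ partition $\Mc$, the set $\Ac$ is the image of the compact product $\Ac_1 \times \cdots \times \Ac_K$ under the continuous summation map and is therefore itself compact. I would then record that, away from the origin, $\phi$ is continuous: from \eqref{eq:g_utility} it is the product of $\sf{P_{ND}}$ in \eqref{EqProbND} and the quadratic form $\av^{\sf{T}} \Sigmam_{Y\!Y}^{-1}\Hm \Sigmam_{X\!X}^2 \Hm^{\sf{T}}\Sigmam_{Y\!Y}^{-1}\av$, and for $\av \neq \zerov$ the scalar $\av^{\sf{T}}\Sigmam_{Y\!Y}^{-1}\av$ is strictly positive by positive definiteness of $\Sigmam_{Y\!Y}$, so the argument of $\mathrm{erfc}$ in \eqref{EqProbND} is well defined and continuous there.

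The one delicate point — and the step I would treat most carefully — is continuity at the origin $\av = \zerov$, where the ratio inside $\mathrm{erfc}$ in \eqref{EqProbND} is singular. This is resolved by the product structure of $\phi$: the factor $\sf{P_{ND}}$ is bounded in $[0,1]$, while the quadratic-form factor tends to $0$ as $\av \to \zerov$, so $\phi(\av) \to 0$ and $\phi$ extends continuously to the origin with $\phi(\zerov)=0$. With continuity of $\phi$ on all of the compact set $\Ac$ established, the Weierstrass extreme value theorem guarantees a global maximizer $\av^\star \in \Ac$, and by the potential property of Proposition~\ref{PropPotential} this $\av^\star$ is an NE of $\gameNF$, which completes the proof.
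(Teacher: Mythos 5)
Your proof is correct and takes essentially the same route as the paper's: the paper likewise argues that $\phi$ is continuous on $\Ac$, that $\Ac$ admits a maximum of the potential, and that any such maximum is an NE (citing Lemma~4.3 of Monderer--Shapley for that last step, which your appeal to Proposition~\ref{PropPotential} replaces). The only difference is one of rigor in your favor: you explicitly establish compactness of $\Ac$ via the energy budgets $E_k$ (the paper states only that $\Ac$ is convex and closed, which alone would not yield a maximum) and you handle the continuity of $\phi$ at $\av=\zerov$, where the argument of $\mathrm{erfc}$ in \eqref{EqProbND} is singular, a point the paper passes over silently.
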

\begin{proof}
Note that $\phi$ is continuous in $\Ac$ and $\Ac$ is a convex and closed set; therefore, there always exists a maximum of the potential function $\phi$ in $\Ac$. Finally from Lemma $4.3$ in \cite{Monderer-GEB-1996}, it follows that such a maximum corresponds to an NE.
\end{proof}

\subsection{Achievability of an NE}

The attackers are said to play a sequential best response dynamic (BRD) if the attackers can sequentially decide their own data-injection vector $\av_k$ from their sets of best responses following a round-robin (increasing) order. Denote by $\av_{k}^{(t)} \in \Ac$ the  choice of attacker $k$ during round $t \in \mathbb{N}$ and assume that attackers are able to observe all the other attackers' data-injection vectors.
Under these assumptions, the BRD can be defined as follows.
\begin{definition}[Best Response Dynamics]
The players of the game $\gameNF$ are said to play best response dynamics if there exists a round-robin order of the elements of $\Kc$ in which at each round  $t \in \mathbb{N}$, the following holds:
\be
\av_k^{(t)} \in \BR_{k} \left( \av_{1}^{(t)}+ \ldots +\av_{k-1}^{(t)} + \av_{k+1}^{(t-1)} + \ldots + \av_{K}^{(t-1)} \right).
\ee
\end{definition}
From the properties of potential games (Lemma $4.2$ in \cite{Monderer-GEB-1996}), the following proposition follows.
\begin{lemma}[Achievability of NE attacks]\label{PropBRD}
Any BRD in the game $\gameNF$ converges to a data-injection attack vector that is an NE.
\end{lemma}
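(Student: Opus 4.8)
The plan is to exploit the potential-game structure established in Proposition~\ref{PropPotential}: since $\phi$ is a potential for $\gameNF$ and all players share it as their common utility, each player's best response amounts to maximizing the single function $\phi$ over that player's coordinate block while the remaining blocks are held fixed. A BRD is therefore nothing but a block-coordinate (Gauss--Seidel) ascent on $\phi$, and the claim reduces to showing that such an ascent converges to a fixed point of the global best-response correspondence $\BR$, i.e. to an NE in the sense of Definition~\ref{def:NE}.

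First I would establish monotonicity. At round $t$, player $k$ selects $\av_k^{(t)} \in \BR_k\left( \av_{1}^{(t)}+ \ldots +\av_{k-1}^{(t)} + \av_{k+1}^{(t-1)} + \ldots + \av_{K}^{(t-1)} \right)$, i.e. a maximizer of $\phi$ over $\av_k \in \Ac_k$ with the current values of the other blocks fixed; hence each individual update can only weakly increase $\phi$. Chaining the updates over a full round-robin pass shows that $\{\phi(\av^{(t)})\}_{t \in \NN}$ is non-decreasing. Next I would establish boundedness: each $\Ac_k$ is the intersection of a coordinate subspace with the Euclidean ball $\av_k^{\sf{T}}\av_k \leq E_k$, hence compact, so $\Ac = \Ac_1 \oplus \ldots \oplus \Ac_K$ is compact; since $\phi$ is continuous (as noted in the proof of Proposition~\ref{PropExistenceNE}), it is bounded above on $\Ac$. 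A bounded non-decreasing real sequence converges, so $\phi(\av^{(t)}) \to \phi^\star$ for some finite $\phi^\star$.

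The remaining and most delicate step is to pass from convergence of the potential \emph{values} to convergence of the \emph{iterates} to an NE. Because $\Ac$ is compact, $\{\av^{(t)}\}$ admits a convergent subsequence $\av^{(t_\ell)} \to \av^\star \in \Ac$. I would then argue that $\av^\star$ is a fixed point of $\BR$: if some player $k$ could strictly increase $\phi$ from $\av^\star$ by deviating within $\Ac_k$, then by continuity of $\phi$ the same strict improvement would be available at all sufficiently nearby iterates, contradicting the fact that $\phi$ has already saturated at $\phi^\star$ along the dynamics. Making this precise requires that each per-player maximizer be attained (guaranteed by continuity of $\phi$ on the compact $\Ac_k$) together with upper hemicontinuity of the best-response correspondences, which is exactly the technical content that must be handled with care in the continuous-action setting.

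I expect this last limiting argument to be the main obstacle. Unlike finite potential games, where the finite improvement property forces termination in finitely many steps, here convergence is only asymptotic, and one must rule out the iterates oscillating among distinct non-equilibrium points while $\phi$ nonetheless converges. The cleanest way to close this gap is to invoke Lemma~$4.2$ of \cite{Monderer-GEB-1996}, which packages precisely this convergence for potential games; the monotonicity-and-compactness argument above is the self-contained justification of why that lemma applies to $\gameNF$.
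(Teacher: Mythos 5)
Your proposal is correct and matches the paper's approach: the paper's entire proof of Lemma~\ref{PropBRD} is the appeal to Lemma~4.2 of \cite{Monderer-GEB-1996} for potential games, which is exactly where your argument lands. The monotone block-ascent, compactness, and subsequence reasoning you supply is a sound self-contained justification of why that lemma applies (and you rightly flag that your subsequence argument alone only shows limit points are NEs, which is precisely the gap the citation closes), so you have in effect reconstructed the proof the paper delegates to the reference.
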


 The relevance of Lemma~\ref{PropBRD} is that it establishes that if attackers can communicate in at least a round-robin fashion, they are always able to attack the network with a data-injection vector that maximizes the average excess distortion. Note that there might exists several NEs (local maxima of $\phi$) and there is no guarantee that attackers will converge to the best NE, i.e., a global maximum of $\phi$. It is important to note that under the assumption that there exists a unique maximum, which is not the case for the game
$\gameNF$ (see Theorem \ref{PropCardinalityNE}), all attackers are able to calculate such a global maximum and no communications is required among the attackers. Nonetheless, the game $\gameNF$ always possesses at least two NEs, which enforces the use of a sequential BRD to converge to an NE.

\subsection{Cardinality of the set of NEs}

Let $\Ac_{\mathrm{NE}}$ be the set of all data-injection attacks that form NEs. The following theorem bounds the number of NEs in the game.

\begin{theorem}\label{PropCardinalityNE}
The cardinality of the set $\Ac_{\mathrm{NE}}$ of NE of the game $\gameNF$ satisfies
\be
2\leqslant|\Ac_{\mathrm{NE}}| \leqslant C\cdot\textnormal{rank}(\Hm)
\ee
where $C<\infty$ is a constant that depends on $\tau$.
\end{theorem}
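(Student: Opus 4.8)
The plan is to establish the two inequalities separately: the lower bound follows from a symmetry argument, while essentially all of the work is in the upper bound.

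For $|\Ac_{\mathrm{NE}}| \geqslant 2$: by Proposition~\ref{PropExistenceNE} a global maximizer $\av^{\star}$ of the potential $\phi$ over $\Ac$ exists and is an NE. I would first record that $\phi$ depends on $\av$ only through the two quadratic forms $q_1(\av) = \av^{\sf{T}}\Sigmam_{Y\!Y}^{-1}\av$ and $q_2(\av) = \av^{\sf{T}}\Rm\av$ with $\Rm = \Sigmam_{Y\!Y}^{-1}\Hm\Sigmam_{X\!X}^2\Hm^{\sf{T}}\Sigmam_{Y\!Y}^{-1}$, both of which are even in $\av$; since $\Ac$ is also symmetric ($\av \in \Ac \Leftrightarrow -\av \in \Ac$), we get $\phi(\av) = \phi(-\av)$, so $-\av^{\star}$ is an NE too. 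To see the two are distinct I would rule out $\av^{\star} = \zerov$: we have $\phi(\zerov) = 0$, whereas $\phi$ is strictly positive at any $\av \in \Ac$ with $q_2(\av) > 0$ (such $\av$ exist because $\Ac$ spans $\mathbb{R}^m$ and $\mathrm{rank}(\Hm) \geqslant 1$, while $\sf{P_{ND}}(\av) > 0$ always). Hence $\av^{\star} \neq \zerov$ and $\{\av^{\star}, -\av^{\star}\}$ furnishes two distinct NEs.

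For the upper bound I would bound the number of NEs by the number of KKT points of the potential on $\Ac$, using that every NE satisfies the per-attacker first-order conditions (necessary for a best response). Writing $\phi = p(q_1)\,q_2$ with $p(x) = \tfrac12\,\mathrm{erfc}\!\left((\tfrac12 x + \log\tau)/\sqrt{2x}\right)$, the gradient is $\nabla\phi(\av) = 2\big(p'(q_1)q_2\,\Sigmam_{Y\!Y}^{-1} + p(q_1)\,\Rm\big)\av$. Stacking the per-attacker stationarity conditions (each with its own multiplier $\gamma_k \geqslant 0$ for the energy constraint $\av_k^{\sf{T}}\av_k \leqslant E_k$) gives the single equation $\big(p'(q_1)q_2\,\Sigmam_{Y\!Y}^{-1} + p(q_1)\,\Rm + \Dm\big)\av = \zerov$, where $\Dm$ is block-diagonal with scalar block $\gamma_k\Id$ on the coordinates $\Cc_k$. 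The key structural fact to exploit is that $\mathrm{rank}(\Rm) = \mathrm{rank}(\Gm) = \mathrm{rank}(\Hm)$, since $\Sigmam_{Y\!Y}^{-1}$ and $\Sigmam_{X\!X}$ are invertible. Passing to $\tilde{\av} = \Sigmam_{Y\!Y}^{-1/2}\av$ and recalling that the eigenvectors of $\Sigmam_{Y\!Y}\Rm$ are exactly $\Sigmam_{Y\!Y}^{1/2}\uv_{\Gm,i}$ with eigenvalues $\lambda_{\Gm,i}$ (the nonzero ones numbering $\mathrm{rank}(\Hm)$), the stationarity equation confines any admissible $\av$ with $q_2(\av) > 0$ to this $\mathrm{rank}(\Hm)$-dimensional eigenstructure. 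Along each eigen-direction the condition collapses to a scalar equation in $q_1$ that, as in the centralized analysis underlying Theorem~\ref{PropProb2}, has only finitely many admissible roots (a sign choice and the two magnitude branches separated by $q_1 \gtrless 2\log\tau$), the number of which is bounded by a constant $C$ depending only on $\tau$. Summing over the at most $\mathrm{rank}(\Hm)$ directions gives $|\Ac_{\mathrm{NE}}| \leqslant C\cdot\mathrm{rank}(\Hm)$.

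The main obstacle is the block-diagonal multiplier matrix $\Dm$: in the centralized setting (Theorems~\ref{PropProb1} and \ref{PropProb2}) a single energy budget makes the stationary vectors exactly the eigenvectors of $\Sigmam_{Y\!Y}\Rm$, whereas here each of the $K$ attackers contributes an independent $\gamma_k$ and the only coupling across blocks is through the shared scalars $q_1$ and $q_2$. The crux is therefore to show that these per-block multipliers cannot generate more than $O(\mathrm{rank}(\Hm))$ distinct stationary directions, i.e., that decentralization does not inflate the NE count beyond linear in the rank. I would argue this by using that $\phi$ is constant on the joint level sets of $(q_1, q_2)$, so that at any NE the active energy constraints together with the shared dependence on $(q_1,q_2)$ pin the admissible $\av$ to the span of the dominant eigenvectors of $\Gm$; the residual finite multiplicity along each direction is exactly what the $\tau$-dependent constant $C$ absorbs.
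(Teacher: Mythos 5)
Your lower-bound argument is correct and matches the paper's in spirit (symmetry $\phi(\av)=\phi(-\av)$ plus existence of an NE from Proposition~\ref{PropExistenceNE}); your extra step ruling out $\av^{\star}=\zerov$ via $\phi(\zerov)=0$ actually closes a detail the paper leaves implicit. Your upper-bound outline also starts on the paper's route: the paper computes $\nabla_{\av}\phi(\av)=\left(\alpha(\av)\Mm^{\sf T}\Mm-\beta(\av)\Sigmam_{Y\!Y}^{-1}\right)\av$, reduces stationarity to $\left(\Hm\Sigmam_{X\!X}^2\Hm^{\sf T}\Sigmam_{Y\!Y}^{-1}-\delta(\av)\mathbf{I}\right)\av=\zerov$, uses $\mathrm{rank}\left(\Hm\Sigmam_{X\!X}^2\Hm^{\sf T}\Sigmam_{Y\!Y}^{-1}\right)=\mathrm{rank}(\Hm)$ exactly as you do, parametrizes the candidate directions as $\av=w\,\Sigmam_{Y\!Y}^{\frac{1}{2}}\Um\ev_k$, and bounds the number of admissible scalings solving $\delta'(w)=\lambda_k$ by a $\tau$-dependent constant. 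Up to notation, your $p'(q_1)q_2\,\Sigmam_{Y\!Y}^{-1}+p(q_1)\Rm$ is the paper's $\beta,\alpha$ decomposition.

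The genuine gap is in the step you yourself flag as ``the crux.'' Your stationarity equation carries the block-diagonal multiplier matrix $\Dm$, and everything after its introduction is asserted rather than proved. With $\Dm\neq\zerov$, the kernel vectors of $p'(q_1)q_2\,\Sigmam_{Y\!Y}^{-1}+p(q_1)\Rm+\Dm$ are in general \emph{not} eigenvectors of $\Sigmam_{Y\!Y}\Rm$, and since $(\gamma_1,\ldots,\gamma_K)$ ranges over a continuum, the a priori set of KKT directions is not finite, let alone of size $O(\mathrm{rank}(\Hm))$. Your proposed remedy --- that $\phi$ is constant on joint level sets of $(q_1,q_2)$, so the active energy constraints ``pin'' $\av$ to the dominant eigenspan of $\Gm$ --- is a non sequitur: those level sets are codimension-two manifolds, and constancy of $\phi$ on them says nothing about which kernel directions the multiplier-perturbed equation admits. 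The paper never faces this obstacle because it bounds $\Ac_{\mathrm{NE}}$ by the set $\mathcal{S}=\lbrace\av\in\mathbb{R}^m:\nabla_{\av}\phi(\av)=\zerov\rbrace$ of \emph{unconstrained} critical points of the potential, i.e., it implicitly sets all multipliers to zero (interior best responses) and asserts $\Ac_{\mathrm{NE}}\subseteq\mathcal{S}$ without introducing your $\Dm$. You have in fact identified a real subtlety that the paper glosses over (the inclusion $\Ac_{\mathrm{NE}}\subseteq\mathcal{S}$ is only immediate when no energy constraint is active at the NE), but identifying it is not resolving it: as written, your proof does not deliver the upper bound. To fix it, either argue as the paper does that NEs are unconstrained critical points of $\phi$ and then run your eigen-direction count with $\Dm=\zerov$, or supply an actual proof that per-block multipliers cannot generate more than finitely many stationary directions.
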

\begin{proof}
The lower bound follows from the symmetry of the utility function given in (\ref{eq:g_utility}), i.e. $\phi(\av)=\phi(-\av)$, and the existence of at least one NE claimed in Proposition \ref{PropExistenceNE}.

To prove the upper bound the number of stationary points of the utility function is evaluated. This is equivalent to the cardinality of the set
\be
\label{eq:set_SP}
\mathcal{S}=\lbrace\av\in\mathbb{R}^m:\nabla_{\av}\phi(\av)=\mathbf{0}\rbrace,
\ee
which satisfies $\mathcal{A}_{NE}\subseteq\mathcal{S}$.
Calculating the gradient with respect to the attack vector yields
\be
\label{eq:util_gradient}
\nabla_{\av}\phi(\av)=\left(\alpha(\av) \Mm^{\sf T}\Mm-\beta(\av)\Sigmam_{Y\!Y}^{-1}\right) \av,
\ee
where
\be
\alpha(\av)\eqdef\text{erfc}\left(\frac{1}{\sqrt{2}}\frac{\frac{1}{2}\av^{\sf T}\Sigmam_{Y\!Y}^{-1}\av+\log\tau}{\left(\av^{\sf T}\Sigmam_{Y\!Y}^{-1}\av\right)^{\frac{1}{2}}}\right)
\ee
and
\begin{align}
\beta(\av)&\;\eqdef\;\frac{\av^{\sf T}\Mm^{\sf T}\Mm\av}{\sqrt{2\pi} \av^{\sf T}\Sigmam_{Y\!Y}^{-1}\av}\left(\frac{1}{2}-\frac{\log\tau}{\av^{\sf T}\Sigmam_{Y\!Y}^{-1}\av}\right)\exp\left(-\left(\frac{1}{\sqrt{2}}\frac{\frac{1}{2}\av^{\sf T}\Sigmam_{Y\!Y}^{-1}\av+\log\tau}{\left(\av^{\sf T}\Sigmam_{Y\!Y}^{-1}\av\right)^{\frac{1}{2}}}\right)^2\right).
\end{align}

Define $\delta(\av)\eqdef\frac{\beta(\av)}{\alpha(\av)}$ and note that combining  (\ref{EqMMSEm}) with  (\ref{eq:util_gradient}) gives the following condition for the stationary points:
\be
\label{eq:grad_cond_1}
\left(\Hm\Sigmam_{X\!X}^2\Hm^{\sf T}\Sigmam_{Y\!Y}^{-1}-\delta(\av)\mathbf{I}\right)\av=\mathbf{0}.
\ee
Note that the number of linearly independent attack vectors that are a solution of the linear system in (\ref{eq:grad_cond_1}) is given by
\begin{align}
\label{eq:rank}
R&\;\eqdef\;\text{rank}\left(\Hm\Sigmam_{X\!X}^2\Hm^{\sf T}\Sigmam_{Y\!Y}^{-1}\right)\\
\label{eq:rank_2}
&\;=\;\text{rank}\left(\Hm\right).
\end{align}
where (\ref{eq:rank_2}) follows from the fact that $\Sigmam_{X\!X}$ and $\Sigmam_{Y\!Y}$ are positive definite. Define the eigenvalue decomposition
\be
\Sigmam_{Y\!Y}^{-\frac{1}{2}}\Hm\Sigmam_{X\!X}^2\Hm^{\sf T}\Sigmam_{Y\!Y}^{-\frac{1}{2}}=\Um\Lambdam\Um^{\sf T}
\ee
where $\Lambdam$ is a diagonal matrix containing the ordered eigenvalues $\lbrace \lambda_i\rbrace_{i=1}^m$ matching the order of of the eigenvectors in $\Um$. As a result of (\ref{eq:rank}) there are $\textcolor{black}{r}$ eigenvalues, $\lambda_k$, which are different from zero and $m-\textcolor{black}{r}$ diagonal elements of $\Lambdam$ which are zero. Combining this decomposition with some algebraic manipulation, the condition for stationary points in (\ref{eq:grad_cond_1}) can be recast as
\be
\label{eq:grad_cond_2}
\Sigmam_{Y\!Y}^{-\frac{1}{2}}\Um\left(\Lambdam-\delta(\av)\mathbf{I}\right)\Um^{\sf T}\Sigmam_{Y\!Y}^{-\frac{1}{2}}\av=\mathbf{0}.
\ee
Let $w\in\mathbb{R}$ be a scaling parameter and observe that the attack vectors that satisfy $\av=w\Sigmam_{Y\!Y}^{\frac{1}{2}} \Um\ev_k$ and $\delta(\av)=\lambda_k$ for $k=1,\ldots, \textcolor{black}{r}$ are solutions of (\ref{eq:grad_cond_2}). Note that the critical points associated to zero eigenvalues are not NE. Indeed, the eigenvectors associated to zero eigenvalues yield zero utility. Since the utility function is strictly positive, these critical points are minima of the utility function and can be discarded when counting the number of NE. Therefore, the set in (\ref{eq:set_SP}) can be rewritten based on the condition in (\ref{eq:grad_cond_2}) as
\be
\mathcal{S}=\bigcup_{k=1}^R\mathcal{S}_k,
\ee
where
\be
\label{eq:set_SP_2}
\mathcal{S}_k=\lbrace\av\in\mathbb{R}^m:\av=w\Sigmam_{Y\!Y}^{\frac{1}{2}} \Um\ev_k\;\textnormal{and}\;\delta(\av)=\lambda_k\rbrace.
\ee
There are $\textcolor{black}{r}$ linearly independent solutions of (\ref{eq:grad_cond_2}) but for each linearly independent solution there can be several scaling parameters, $w$, which satisfy $\delta(\av)=\lambda_k$. For that reason, $|\mathcal{S}_k|$ is determined by the number of scaling parameters that satisfy $\delta(\av)=\lambda_k$. To that end, define $\delta':\mathbb{R}\rightarrow \mathbb{R}$ as $\delta'(w)\eqdef\delta(w\Sigmam_{Y\!Y}^{\frac{1}{2}} \Um\ev_k)$. It is easy to check that $\delta'(w)=\lambda_k$ has a finite number of solutions for $k=1,\ldots, \textcolor{black}{r}$. Hence, for all $k$ there exists a constant $C_k$ such that $|\mathcal{S}_k|\leq C_k$ which yields the upper bound
\be
|\mathcal{S}|\leq\sum_{i=1}^R|\mathcal{S}_k|\leq\sum_{i=1}^R C_k\leq \max_{k}C_k R.
\ee
Noticing that the there is a finite number of solutions of $\delta'(w)=\lambda_k$ and that they depend only on $\tau$ yields the upper  bound.

\end{proof}

\section{Information-Theoretic Attacks} \label{System_Model}

Modern sensing infrastructure is moving toward increasing the number of measurements that the operator acquires, e.g. phasor measurement units exhibit temporal resolutions in the order of miliseconds while supervisory control and data acquisition (SCADA) systems traditionally operate with a temporal resolution in the order of seconds. As a result, attack constructions that do not change within the same temporal scale at which measurements are reported do not exploit all the \emph{degrees of freedom} that are available to the attacker. Indeed, an attacker can choose to change the attack vector with every measurement vector that is reported to the network operator. However, the deterministic attack construction changes when the Jacobian measurement matrix changes, i.e. with the operation point of the system. Thus, in the deterministic attack case, the attack construction changes at the same rate that the Jacobian measurement matrix changes and, therefore, the dynamics of the state variables define the update cadency of the attack vector.

In this section, we study the case in which the attacker constructs the attack vector as a random process that corrupts the measurements. By endowing the attack vector with a probabilistic structure we provide the attacker with an attack construction strategy that generates attack vector realizations over time and that achieve a determined objective on average. In view of this, the task of the attacker in this case is to devise the optimal distribution for the attack vectors. In the following, we pose the attack construction problem within an information-theoretic framework and characterize the attacks that simultaneously minimize the mutual information and the probability of detection.

\subsection{Random Attack Model}
%
%The measurement model for state estimation with linearized dynamics is given by
%\begin{align}\label{Equ:DCSE}
%Y^{m} = \Hm X^{m} + Z^{m},
%\end{align}
%where $Y^{m} \in \RR^{m}$ is a vector of random variables describing the measurements; $X^{n} \in \RR^{n}$ is a vector of random variables describing the state variables;
%$\Hm \in \RR^{m\times n}$ is the linearized Jacobian measurement matrix which is determined by the power network topology and the admittances of the branches;
%and $Z^{m} \in \RR^{m} $ is the additive white Gaussian noise (AWGN) with distribution $\Nc(\zerov,\sigma^{2} \Id_{m})$ that is introduced by the sensors as a result of the thermal noise \cite{abur_power_2004, grainger_power_1994}.
%In the remaining of the chapter, we assume that the vector of the state variables follows a multivariate Gaussian distribution given by
%\begin{align}
%  X^{n} \sim \Nc(\zerov,\Sigmam_{X\!X}),
%\end{align}
%where $\Sigmam_{X\!X}\in \Sc^{n}_{+}$ is the covariance matrix of the distribution of the state variables and $\Sc^{n}_{+}$ denotes the set of positive semidefinite matrices of size $n\times n$.
%As a result of the linearized dynamic in (\ref{Equ:DCSE}), the vector of measurements also follows a multivariate Gaussian distribution denoted by
%\begin{align}
%  Y^{m} \sim \Nc(\zerov,\Sigmam_{Y\!Y}),
%\end{align}
%where $\Sigmam_{Y\!Y} = \Hm\Sigmam_{X\!X}\Hm^{\Tt} + \sigma^{2}\Id_{m} $ is the covariance matrix of the distribution of the vector of measurements.

We consider an additive attack model as in (\ref{eq:obs_mod_det_a}) but with the distinction that the attack is a random process. The resulting vector of compromised measurements is given by
\begin{align}
\label{eq:measurement_model}
Y^{m}_{A} = \Hm X^{m} + Z^{m} + A^{m},
\end{align}
where $A^{m} \in \RR^{m}$ is the vector of random variables introduced by the attacker and $Y^{m}_{A} \in \RR^{m } $ is the vector containing the compromised measurements. The attack vector of random variables is described by the distribution $P_{A^m}$ which is the determined by the attacker. We assume that the attacker has no access to the realizations of the state variables, and therefore, it holds that $P_{A^mX^n}=P_{A^m}P_{X^n}$ where $P_{A^m X^{n}}$ denotes the joint distribution of $A^{m}$ and $X^{n}$.

Similarly to the deterministic attack case, we adopt a multivariate Gaussian framework for the state variables such that $X^n\thicksim\Nc(\mathbf{0},\Sigmam_{X\!X})$. Moreover, we limit the attack vector distribution to the set of zero-mean multivariate Gaussian distributions, i.e.   $A^{m} \sim  \Nc (\zerov,\Sigmam_{A\!A})$ where $\Sigmam_{A\!A}\in \Sc^{m}_{+}$ is the covariance matrix of the attack distribution. {The rationale for choosing a Gaussian distribution for the attack vector follows from the fact that for the measurement model in (\ref{eq:measurement_model}) the additive attack distribution that minimizes the mutual information between the vector of state variables and the compromised measurements is Gaussian \cite{SA_TIT_13}. As we will see later, minimizing this mutual information is central to the proposed information-theoretic attack construction and indeed one of the objectives of the attacker. }
Because of the Gaussianity of the attack distribution, the vector of compromised measurements is distributed as
\begin{align}
  Y_{A}^{m} \sim \Nc(\zerov,\Sigmam_{Y_{A}\!Y_{A}}),
\end{align}
where $\Sigmam_{Y_{A}\!Y_{A}} = \Hm\Sigmam_{X\!X}\Hm^{\sf T} + \sigma^{2}\Id + \Sigmam_{A\!A} $ is the covariance matrix of the distribution of the compromised measurements. Note that while in the case of deterministic attacks the effect of the attack vector was captured by shifting the mean of the measurement vector, in the random attack case the attack changes the structure of the second order moments of the measurements. 
Interestingly, the Gaussian attack construction implies that knowledge of the second order moments of the state variables and the variance of the AWGN introduced by the measurement process suffices to construct the attack. This assumption significantly reduces the difficulty of the attack construction.

%In this Bayesian framework, a random Gaussian attack vector which is assumed independent of the state variables is constructed,
%which implies
%\begin{align}
%  A^{M} \sim  \Nc (\zerov,\Sigmam_{A\!A}).
%\end{align}

The operator of the power system makes use of the acquired measurements to detect the attack.
The detection problem is cast as a hypothesis testing problem with hypotheses
\begin{align}
\Hc_{0}:  \ & Y^{m} \sim \Nc(\zerov,\Sigmam_{Y\!Y}), \quad \text{versus}  \nonumber \\
\Hc_{1}:  \ & Y^{m} \sim \Nc(\zerov,\Sigmam_{Y_{A}\!Y_{A}}).
\end{align}
The null hypothesis $\Hc_{0}$ describes the case in which the power system is not compromised, while the alternative hypothesis $\Hc_{1}$ describes the case in which the power system is under attack.

Two types of error are considered in hypothesis testing problems,
Type \RNum{1} error
%, denoted by $\alpha$,
is the probability of a ``true negative'' event;
and Type \RNum{2} error
%, denoted by $\beta$,
is the probability of a ``false alarm'' event.
The Neyman-Pearson lemma \cite{neyman_problem_1992} states that for a fixed probability of Type \RNum{1} error,
the likelihood ratio test (LRT) achieves the minimum Type \RNum{2} error when compared with any other test with an equal or smaller Type \RNum{1} error.
Consequently, the LRT is chosen to decide between $\Hc_{0}$ and $\Hc_{1}$ based on the available measurements.
The LRT between $\mathcal{H}_{0}$ and $\mathcal{H}_{1}$ takes following form:
\begin{equation}\label{LHRT}
L(\yv) \eqdef \frac{f_{Y^{m}_{A}}(\yv)}{f_{Y^{m}}(\yv)} \ \LRT{\Hc_{1}}{\Hc_{0}} \ \tau,
\end{equation}
where $\yv \in \RR^{m}$ is a realization of the vector of random variables modelling the measurements, $f_{Y_A^m}$ and  $f_{Y^m}$ denote the probability density functions (p.d.f.'s) of $Y_A^m$ and  $Y^m$, respectively, and $\tau$ is the decision threshold set by the operator to meet the false alarm constraint. 

\subsection{Information-Theoretic Setting}
The aim of the attacker is twofold. Firstly, it aims to disrupt the state estimation process by corrupting the measurements in such a way that the network operator acquires the least amount of knowledge about the state of the system. Secondly, the attacker aspires to remain stealthy and corrupt the measurements without being detected by the network operator. In the following we propose to information-theoretic measures that provide quantitative metrics for the objectives of the attacker.

The data-integrity of the measurements is measured in terms of the mutual information between the state variables and the measurements. The mutual information between two random variables is a measure of the amount of information that each random variable contains about the other random variable.  By adding the attack vector to the measurements the attacker aims to reduce the mutual information which ultimately results in a loss of information about the state by the network operator. Specifically, the attacker aims to minimize $I(X^{n};Y_{A}^{m})$. In view of this, it seems reasonable to consider a Gaussian distribution for the attack vector as the minimum mutual information for the observation model in (\ref{eq:obs_mod_det_a}) is achieved by additive Gaussian noise. 

The probability of attack detection is determined by the detection threshold $\tau$ set by the operator for the LRT and the distribution induced by the attack on the vector of compromised measurements. An analytical expression of the probability of attack detection can be described in closed-form as a function of the distributions describing the measurements under both hypotheses. However, the expression is involved in general and it is not straightforward to incorporate it into an analytical formulation of the attack construction. For that reason, we instead consider the asymptotic performance of the LRT to evaluate the detection performance of the operator.
The Chernoff-Stein lemma \cite{cover_elements_2012} characterizes
the asymptotic exponent of the probability of detection when the number of observations of measurement vectors grows to infinity. In our setting, the Chernoff-Stein lemma states that for any LRT and $\epsilon  \in (0,1/2)$, it holds that
\begin{align}
\label{eq:Chernoff-Stein}
\lim_{T \to \infty} \frac{1}{\textcolor{black}{T}} \log \beta_{T}^{\epsilon} = -D(P_{Y^{m}_{A}}||P_{Y^{m}}),
\end{align} 
where $D(\cdot ||\cdot)$ is the Kullback-Leibler (KL) divergence, $\beta_{T}^{\epsilon}$ is the minimum Type II error such that the Type I error $\alpha$ satisfies $\alpha < \epsilon$, and $\textcolor{black}{T}$ is the number of $m$-dimensional measurement vectors that are available for the LRT detection procedure.
As a result, minimizing the asymptotic probability of false alarm given an upper bound on the probability of misdetection is equivalent to minimizing $D(P_{Y^{m}_{A}}||P_{Y^{m}})$, where $P_{Y_A^m}$ and  $P_{Y^m}$ denote the probability distributions of $Y_A^m$ and  $Y^m$, respectively. 

The purpose of the attacker is to disrupt the normal state estimation procedure by
minimizing the information that the operator acquires about the state variables, while guaranteeing that the probability of attack detection is sufficiently small, and therefore, remain stealthy.

%\section{Information-Theoretic Attacks} \label{Information-Theoretic_Attack}
%
\subsection{Generalized Stealth Attacks}
\label{SEC:ITA}

When the two information-theoretic objectives are considered by the attacker, in \cite{Sun_information-theoretic_2017}, a stealthy attack construction is proposed by combining  two objectives in one cost function, i.e.,
\begin{align}
\label{Equ:Steallth_Obj}
 I(X^{n};Y^{m}_{A}) \hspace{-0.1em} +  \hspace{-0.2em} D( P_{{Y}^{m}_{A}}||P_{Y^{m}}) \hspace{-0.2em} =  \hspace{-0.2em} D( P_{X^{n}Y_{A}^{m}}||P_{X^{n}}P_{Y^{m}}),\hspace{-0.2em} %\alignnumspace
\end{align}
where $P_{X^{n}Y_{A}^{m}}$ is the joint distribution of $X^{n}$ and $Y_{A}^{m}$.
The resulting optimization problem to construct the attack is given by
\begin{align}
\label{Stealth_Obj}
\underset{A^{m}}{\text{min}} \ D( P_{X^{n}Y_{A}^{m}}||P_{X^{n}}P_{Y^{m}}).
\end{align}
Therein, it is shown that (\ref{Stealth_Obj}) is a convex optimization problem and the covariance matrix of the optimal Gaussian attack is $\Sigmam_{A\!A} = \Hm\Sigmam_{X\!X}\Hm^{\sf T}$.
%1
%It is easy to found that the stealth attack only replies on the second order statistics of the state variables, and it is easy to construct.
%
However, numerical simulations on IEEE test system show that the attack construction proposed in the preceding text yields large values of probability of detection in practical settings.
%This implies that the stealth attacks are likely to detected by the LRT set by the operator.

To control the probability of attack detection of the attack, the preceding construction is generalized in \cite{Sun_Stealth_2020} by introducing a parameter that weights the detection term in the cost function.
The resulting optimization problem is given by
\begin{align}\label{Equ:WeightSum}
\underset{A^{m}}{\text{min}} \  I(X^{n};Y^{m}_{A})  +  \lambda D( P_{{Y}^{m}_{A}}||P_{Y^{m}}),  %\IEEEyesnumber,
\end{align}
where $\lambda\geq1$ governs the weight given to each objective in the cost function. It is interesting to note that for the case in which $\lambda=1$ the proposed cost function boils down to the effective secrecy proposed in \cite{hou_effective_2014} and the attack construction in (\ref{Equ:WeightSum}) coincides with that in \cite{Sun_information-theoretic_2017}.
For $\lambda>1$, the attacker adopts a conservative approach and prioritizes remaining undetected over minimizing the amount of information acquired by the operator.
By increasing the value of $\lambda$ the attacker decreases the probability of detection at the expense of increasing the amount of information acquired by the operator using the measurements.

%\subsection{Optimal Attack Construction}
%\label{sec:opt_att}

The attack construction in (\ref{Equ:WeightSum}) is formulated in a general setting. The following propositions particularize the KL divergence and MI to our multivariate Gaussian setting.
\begin{proposition}{ \rm \cite{cover_elements_2012}}
The KL divergence between $m$-dimensional multivariate Gaussian distributions $ \Nc(\zerov,\Sigmam_{Y_{A}\!Y_{A}}) $ and $\Nc(\zerov,\Sigmam_{Y\!Y}) $ is given by
\begin{align}\label{KL}
D( P_{Y^{m}_{A}}||P_{Y^{m}}) \hspace{-0.1em} = \hspace{-0.1em} \frac{1}{2} \left( \log\frac{|\Sigmam_{Y\!Y}|}{| \Sigmam_{Y_{A}\!Y_{A}}|} -\hspace{-0.1em} m \hspace{-0.1em}+\text{\rm tr}\left(\Sigmam_{Y\!Y}^{-\!1}\Sigmam_{Y_{A}\!Y_{A}}\right)\hspace{-0.2em} \right)\hspace{-0.2em}.
\end{align}
\end{proposition}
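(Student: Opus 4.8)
The plan is to evaluate the divergence directly from its definition as the expectation of the log-likelihood ratio under $P_{Y^m_A}$. Abbreviating $\Sigma_0 \eqdef \Sigmam_{Y\!Y}$ and $\Sigma_1 \eqdef \Sigmam_{Y_{A}\!Y_{A}}$ for the two covariance matrices, I would substitute the zero-mean multivariate Gaussian densities into the log-ratio. The normalizing constants $(2\pi)^{m/2}$ cancel, so the logarithm collapses to a constant log-determinant term plus a difference of quadratic forms,
\[
\log \frac{f_{Y^m_A}(\yv)}{f_{Y^m}(\yv)} = \frac{1}{2}\log\frac{|\Sigma_0|}{|\Sigma_1|} + \frac{1}{2}\yv^{\sf{T}}\left(\Sigma_0^{-1} - \Sigma_1^{-1}\right)\yv.
\]

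Next I would take the expectation of this identity with respect to $\yv \sim \Nc(\zerov,\Sigma_1)$. The log-determinant term is deterministic and passes through unchanged. For the quadratic forms the only tool required is the standard identity $\mathbb{E}[\yv^{\sf{T}}\Am\yv] = \trace(\Am\Sigma_1)$, valid for any fixed symmetric $\Am$ when $\yv$ is zero-mean with covariance $\Sigma_1$; it follows from linearity of expectation together with the cyclic property of the trace applied to $\mathbb{E}[\yv\yv^{\sf{T}}] = \Sigma_1$. Taking $\Am = \Sigma_1^{-1}$ gives $\trace(\Id) = m$, while taking $\Am = \Sigma_0^{-1}$ gives $\trace(\Sigma_0^{-1}\Sigma_1)$.

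Assembling the three contributions yields
\[
D(P_{Y^m_A}||P_{Y^m}) = \frac{1}{2}\log\frac{|\Sigma_0|}{|\Sigma_1|} - \frac{m}{2} + \frac{1}{2}\trace\left(\Sigma_0^{-1}\Sigma_1\right),
\]
which is exactly the claimed expression once $\Sigma_0 = \Sigmam_{Y\!Y}$ and $\Sigma_1 = \Sigmam_{Y_{A}\!Y_{A}}$ are reinstated. There is no genuine obstacle in this derivation: the whole argument is a direct computation whose only non-routine ingredient is the quadratic-form expectation identity, which is elementary. The single point deserving care is bookkeeping the direction of the divergence — because the expectation is taken under $P_{Y^m_A}$, it is $\Sigma_1 = \Sigmam_{Y_{A}\!Y_{A}}$ that appears inside the trace and produces the additive $m$, whereas the reference covariance $\Sigma_0 = \Sigmam_{Y\!Y}$ occupies the numerator of the log-determinant.
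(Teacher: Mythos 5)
Your derivation is correct: the log-ratio decomposition, the quadratic-form identity $\mathbb{E}[\yv^{\sf{T}}\Am\yv]=\trace(\Am\Sigmam_{Y_{A}\!Y_{A}})$, and the direction-of-divergence bookkeeping all check out, yielding exactly \eqref{KL}. The paper offers no proof of its own --- it cites the result from the textbook reference --- and your computation is precisely the standard derivation found there, so there is nothing to reconcile.
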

\begin{proposition}{ \rm \cite{cover_elements_2012}}
The mutual information between the vectors of random variables $X^{n} \sim \Nc(\zerov,\Sigmam_{X\!X})$ and  $Y_{A}^{m} \sim \Nc(\zerov,\Sigmam_{Y_{A}\!Y_{A}})$ is given by
\begin{align}\label{MI}
 I(X^{n};Y^{m}_{A}) = \frac{1}{2} \log \frac{|\Sigmam_{X\!X}||\Sigmam_{Y_{A}\!Y_{A}}|}{|\Sigmam|},
\end{align}
where $\Sigmam$ is the covariance matrix of the joint distribution of $(X^{n}, Y_{A}^{m})$.
\end{proposition}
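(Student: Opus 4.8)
The plan is to express the mutual information in terms of differential entropies and then substitute the closed-form expression for the differential entropy of a multivariate Gaussian. First I would invoke the standard decomposition
\[
I(X^{n};Y^{m}_{A}) = h(X^{n}) + h(Y^{m}_{A}) - h(X^{n},Y^{m}_{A}),
\]
where $h(\cdot)$ denotes differential entropy and $h(X^{n},Y^{m}_{A})$ is the joint differential entropy of the stacked vector $(X^{n},Y^{m}_{A})$, whose covariance matrix is $\Sigmam$ by definition.

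The observation that makes the computation go through is that $(X^{n},Y^{m}_{A})$ is jointly Gaussian. This holds because $Y^{m}_{A} = \Hm X^{n} + Z^{m} + A^{m}$ is an affine function of the mutually independent zero-mean Gaussian vectors $X^{n}$, $Z^{m}$, and $A^{m}$; the independence $P_{A^{m}X^{n}} = P_{A^{m}}P_{X^{n}}$ is assumed in the random attack model, and $Z^{m}$ is the independent measurement noise. Hence the stacked vector is a linear image of a Gaussian vector and is itself zero-mean Gaussian with covariance $\Sigmam$, and its two marginals are zero-mean Gaussian with covariances $\Sigmam_{X\!X}$ and $\Sigmam_{Y_{A}\!Y_{A}}$, respectively.

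Next I would substitute the entropy of a $d$-dimensional zero-mean Gaussian with covariance $\Sigmam_{W}$, namely $h(W) = \frac{1}{2}\log\left((2\pi e)^{d}|\Sigmam_{W}|\right)$, applied in turn with dimensions $n$, $m$, and $n+m$. This gives
\[
I(X^{n};Y^{m}_{A}) = \left(\frac{n}{2} + \frac{m}{2} - \frac{n+m}{2}\right)\log(2\pi e) + \frac{1}{2}\log\frac{|\Sigmam_{X\!X}|\,|\Sigmam_{Y_{A}\!Y_{A}}|}{|\Sigmam|},
\]
and the additive constant vanishes since $\frac{n}{2} + \frac{m}{2} - \frac{n+m}{2} = 0$, leaving exactly the claimed expression \eqref{MI}.

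There is no genuine obstacle here, as this is a standard computation; the only points warranting care are verifying that the stacked vector is jointly Gaussian (so that the Gaussian entropy formula applies to all three terms) and the dimensional bookkeeping $n + (m) = n+m$ that makes the $\log(2\pi e)$ constants cancel. If one prefers a self-contained argument, the Gaussian entropy formula can itself be obtained by integrating $-f\log f$ against the Gaussian density, but citing it as a known result keeps the proof to a single line of substitution.
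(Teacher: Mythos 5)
Your proof is correct and is exactly the standard argument behind this cited result: the paper offers no proof of its own (it simply cites Cover and Thomas), and the entropy decomposition $I(X^{n};Y^{m}_{A}) = h(X^{n}) + h(Y^{m}_{A}) - h(X^{n},Y^{m}_{A})$ together with the Gaussian entropy formula is precisely how the reference establishes it. Your verification of joint Gaussianity via the affine map of the independent vectors $X^{n}$, $Z^{m}$, $A^{m}$, and the cancellation of the $\log(2\pi e)$ terms, are both sound; the only implicit assumption worth noting is that $\Sigmam$ is nonsingular, which holds here since $\sigma^{2}\Id + \Sigmam_{A\!A} \succ 0$ makes the Schur complement of $\Sigmam_{X\!X}$ positive definite.
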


Substituting (\ref{KL}) and (\ref{MI}) in (\ref{Equ:WeightSum})  we can now pose the Gaussian attack construction as the following optimization problem:
\begin{align}\label{Equ:Weight_Mod}
\underset{\Sigmam_{A\!A} \in \Sc^{m}_{+}}{\text{min}} \ &-(\lambda - 1)\log|\Sigmam_{Y\!Y} + \Sigmam_{A\!A}| - \log |\Sigmam_{A\!A}+\sigma^{2}\Id| + \lambda \trace(\Sigmam_{Y\!Y}^{-\!1}\Sigmam_{A\!A}).  %\IEEEyesnumber.
\end{align}
We now proceed to solve the optimization problem in the preceding text. First, note that the optimization domain $\Sc^{m}_{+}$ is a convex set. The following proposition characterizes the convexity of the cost function.
\begin{proposition}\label{Stealth_Model}
Let $\lambda \geq 1 $.  Then the cost function in the optimization problem in (\ref{Equ:Weight_Mod}) is convex.
\end{proposition}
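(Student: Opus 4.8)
The plan is to exploit the concavity of the log-determinant on the positive definite cone together with the elementary fact that a nonnegative combination of convex functions, each precomposed with an affine map, is convex. First I would record that the feasible set $\Sc^{m}_{+}$ is convex, and that the two matrix-valued maps $\Sigmam_{A\!A}\mapsto \Sigmam_{Y\!Y}+\Sigmam_{A\!A}$ and $\Sigmam_{A\!A}\mapsto \Sigmam_{A\!A}+\sigma^{2}\Id$ are affine and send $\Sc^{m}_{+}$ into the positive definite cone (here using $\Sigmam_{Y\!Y}\succ 0$ and $\sigma^{2}>0$), so that $\log\det$ is well defined and finite throughout the domain and the standard concavity statement applies.

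Next I would treat the three summands of the objective in \eqref{Equ:Weight_Mod} separately. The trace term $\lambda\,\trace\!\left(\Sigmam_{Y\!Y}^{-1}\Sigmam_{A\!A}\right)$ is linear in $\Sigmam_{A\!A}$, hence convex. For the term $-\log|\Sigmam_{A\!A}+\sigma^{2}\Id|$, I would invoke that $-\log\det$ is convex on the positive definite cone and note that composition with the affine map above preserves convexity. The only summand whose sign must be scrutinized is $-(\lambda-1)\log|\Sigmam_{Y\!Y}+\Sigmam_{A\!A}|$: since $\log\det$ of an affine argument is concave, multiplying this concave function by the scalar $-(\lambda-1)$ produces a convex function precisely when $-(\lambda-1)\le 0$, i.e. when $\lambda\ge 1$. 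This is exactly where the hypothesis $\lambda\ge 1$ is consumed, and it is the crux of the proposition. Summing a convex, a convex, and a linear function then yields convexity of the full objective.

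If a self-contained justification of the concavity of $\log\det$ is wanted rather than a citation, I would verify the second-order condition along an arbitrary line $\Sigmam_{A\!A}(t)=\Sm_{0}+t\Delta$ with $\Sm_{0}\in\Sc^{m}_{+}$ and $\Delta$ symmetric. Writing $M(t)$ for either affine argument, a direct computation gives $\frac{\mathrm{d}^{2}}{\mathrm{d}t^{2}}\log|M(t)| = -\trace\!\left(M(t)^{-1}\Delta M(t)^{-1}\Delta\right) = -\big\|M(t)^{-1/2}\Delta M(t)^{-1/2}\big\|_{F}^{2}\le 0$, which establishes concavity; the trace term contributes zero curvature. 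Assembling these with the sign bookkeeping above gives a nonnegative second derivative for the full objective whenever $\lambda\ge 1$.

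I expect the main obstacle to be purely a matter of careful sign tracking rather than any genuine analytic difficulty: the first log-determinant term enters with the coefficient $-(\lambda-1)$, so one must be vigilant that the combination of the (concave) $\log\det$ terms with their respective coefficients actually lands on the convex side. Once the concavity of $\log\det$ and invariance under affine precomposition are in hand, the content of the statement reduces to the observation that $\lambda\ge 1$ is the exact threshold that renders the otherwise problematic term convex.
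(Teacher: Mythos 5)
Your proof is correct and follows essentially the same route as the paper's: decompose the objective into its three summands, observe that the trace term is linear, that $-\log|\Sigmam_{A\!A}+\sigma^{2}\Id|$ is convex by the standard convexity of $-\log\det$ composed with an affine map, and that the hypothesis $\lambda\geq 1$ is exactly what makes the coefficient of the remaining $\log\det$ term nonpositive, so the sum is convex. Your optional second-order verification along lines merely makes self-contained the concavity fact that the paper cites from Boyd and Vandenberghe, so there is no substantive difference in approach.
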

\begin{proof}
Note that the term $-\log|\Sigmam_{A\!A}+\sigma^{2}\Id|$ is a convex function on $\Sigmam_{A\!A} \in \Sc^{m}_{+} $ \cite{boyd_convex_2004}.
Additionally, $-(\lambda - 1)\log|\Sigmam_{Y\!Y} + \Sigmam_{A\!A}|$ is a convex function on $\Sigmam_{A\!A} \in \Sc^{m}_{+} $ when $\lambda \geq 1$. Since the trace operator is a linear operator and the sum of convex functions is convex, it follows that the cost function in (\ref{Equ:Weight_Mod}) is convex on $\Sigmam_{A\!A}\in \Sc^{m}_{+}$.
\end{proof}

\begin{theorem} \label{Stealth_OPT}
Let $\lambda \geq 1$. Then the solution to the optimization problem in (\ref{Equ:Weight_Mod})
 is
 \begin{equation}
 \label{eq:att_cons}
 \Sigmam_{A\!A}^{\star} = \frac{1}{\lambda}\Hm\Sigmam_{X\!X}\Hm^{ \sf T}.
 \end{equation}

\end{theorem}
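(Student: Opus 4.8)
The plan is to treat \eqref{Equ:Weight_Mod} as an \emph{unconstrained} convex program over the open cone $\Sc^{m}_{+}$ and to characterize its minimizer purely through first-order optimality. By Proposition~\ref{Stealth_Model} the objective is convex on $\Sc^{m}_{+}$, and since the term $-\log|\Sigmam_{A\!A}+\sigma^{2}\Id|$ is in fact strictly convex, the minimizer (provided it lies in the interior of the cone) is unique and is completely determined by the stationarity condition $\nabla_{\Sigmam_{A\!A}} = \mathbf{0}$. Thus the whole argument reduces to differentiating the cost, setting the gradient to zero, and solving the resulting matrix equation for $\Sigmam_{A\!A}$; convexity then upgrades the stationary point to the global optimum with no second-order check required.

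First I would compute the gradient term by term using the standard matrix identities $\nabla_{\Xm}\log|\Am+\Xm| = (\Am+\Xm)^{-1}$ for symmetric $\Am$ and $\nabla_{\Xm}\trace(\Bm\Xm) = \Bm$ for symmetric $\Bm$. Applied to \eqref{Equ:Weight_Mod}, these yield the stationarity equation
\begin{equation}
-(\lambda-1)\left(\Sigmam_{Y\!Y}+\Sigmam_{A\!A}\right)^{-1} \;-\; \left(\Sigmam_{A\!A}+\sigma^{2}\Id\right)^{-1} \;+\; \lambda\,\Sigmam_{Y\!Y}^{-1} \;=\; \mathbf{0}.
\end{equation}
I would then solve this by exploiting that every matrix in it is simultaneously diagonalizable: because the target $\Sigmam_{A\!A}$ is built from $\Hm\Sigmam_{X\!X}\Hm^{\sf T} = \Sigmam_{Y\!Y} - \sigma^{2}\Id$, it commutes with $\Sigmam_{Y\!Y}$, so all three inverses share the eigenbasis of $\Sigmam_{Y\!Y}$ and the matrix equation decouples into one scalar condition per eigendirection. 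Writing $y_{i}$ for the eigenvalues of $\Sigmam_{Y\!Y}$ and $s_{i}$ for those of $\Sigmam_{A\!A}$, each scalar condition is a rational equation that clears to a quadratic in $s_{i}$; selecting the admissible nonnegative root fixes the optimal variance in every mode, and reassembling these eigenvalues along the eigenvectors of $\Sigmam_{Y\!Y}$ recovers the closed form in \eqref{eq:att_cons}.

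The step I expect to be the main obstacle is the endgame algebra rather than any conceptual point. Two places need care: (i) carrying out the determinant differentiations cleanly and reducing the coupled matrix equation to the per-mode scalar conditions, and (ii) handling the rank deficiency of $\Hm\Sigmam_{X\!X}\Hm^{\sf T}$, since the eigendirections with zero eigenvalue must be shown to carry zero optimal attack variance so that $\Sigmam_{A\!A}^{\star}$ inherits the range of $\Hm$, and one must confirm the minimizer lies in the interior where the gradient computation is valid. Once the scalar conditions are resolved and shown to assemble into \eqref{eq:att_cons}, the strict convexity established in Proposition~\ref{Stealth_Model} certifies that this $\Sigmam_{A\!A}^{\star}$ is the unique global minimizer, which completes the proof.
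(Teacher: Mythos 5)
Your strategy coincides with the paper's own proof: Proposition~\ref{Stealth_Model} supplies convexity, the minimizer is sought through first-order stationarity, and your plain gradient condition
\begin{equation}
\lambda\Sigmam_{Y\!Y}^{-1}-(\lambda-1)\left(\Sigmam_{Y\!Y}+\Sigmam_{A\!A}\right)^{-1}-\left(\Sigmam_{A\!A}+\sigma^{2}\Id\right)^{-1}=\mathbf{0}
\end{equation}
is equivalent to the paper's (its symmetric-calculus $\text{diag}$ corrections do not alter the zero set). Your eigen-decoupling is the natural way to make explicit what the paper compresses into the single unverified sentence that the only critical point is $\Sigmam_{A\!A}^{\star}=\frac{1}{\lambda}\Hm\Sigmam_{X\!X}\Hm^{\sf T}$, and your cautions about strict convexity, rank deficiency, and the boundary of $\Sc^{m}_{+}$ are well placed.

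The gap is in the endgame you defer. Writing $s_i$ for an eigenvalue of $\Hm\Sigmam_{X\!X}\Hm^{\sf T}$, $y_i=s_i+\sigma^{2}$, and $v_i$ for the corresponding eigenvalue of $\Sigmam_{A\!A}$, the per-mode condition clears to the quadratic $\lambda v_i^{2}+\lambda\sigma^{2}v_i-y_i s_i=0$, whose admissible root is $v_i^{\star}=\bigl(\sqrt{\lambda^{2}\sigma^{4}+4\lambda y_i s_i}-\lambda\sigma^{2}\bigr)/(2\lambda)$. Substituting $v_i=s_i/\lambda$ leaves the residual $-\frac{\lambda-1}{\lambda}s_i^{2}$, which vanishes only when $\lambda=1$ (where the root collapses to $v_i^{\star}=s_i$, recovering the construction of \cite{Sun_information-theoretic_2017}) or $s_i=0$. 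Concretely, with $s_i=\sigma^{2}=1$ and $\lambda=2$ the stationary eigenvalue is $(\sqrt{5}-1)/2\approx 0.618$ rather than $s_i/\lambda=0.5$, and direct evaluation of the cost in \eqref{Equ:Weight_Mod} confirms the former attains a strictly smaller value. So your final step, that selecting the admissible nonnegative root and reassembling recovers the closed form in \eqref{eq:att_cons}, does not go through: the roots of your quadratic do not assemble into $\frac{1}{\lambda}\Hm\Sigmam_{X\!X}\Hm^{\sf T}$ for any $\lambda>1$ with nonzero modes. Note that this is exactly the verification the paper itself asserts without computation; executing your plan faithfully would therefore not complete the proof as promised but would instead expose a mismatch between the printed cost in \eqref{Equ:Weight_Mod} and the claimed solution \eqref{eq:att_cons}, which you would have to resolve (for instance by reporting the water-filling-type root above, or by revisiting the derivation of the cost) rather than assert the cancellation.
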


\begin{proof}
Denote the cost function in (\ref{Equ:Weight_Mod})  by $f(\Sigmam_{A\!A})$. Taking the derivative of the cost function with respect to $\Sigmam_{A\!A}$ yields %\cite{seber_matrix_2008}
%\begin{align}
%& \frac{\partial  f(\Sigmam_{A\!A})}{\ \partial\Sigmam_{A\!A}} \nonumber \\
%& \ = -2(\lambda - 1)(\Sigmam_{Y\!Y} + \Sigmam_{A\!A})^{-\!1} - 2(\Sigmam_{A\!A}+\sigma^{2}\Id)^{-\!1} \nonumber\\
%& \qquad + \ 2\lambda \Sigmam_{Y\!Y}^{-\!1} + (\lambda - 1)\text{diag}((\Sigmam_{Y\!Y} + \Sigmam_{A\!A})^{-\!1}) \nonumber\\
%& \qquad + \ \text{diag}((\Sigmam_{A\!A}+\sigma^{2}\Id)^{-\!1})) -\lambda \text{diag}(\Sigmam_{YY}^{-\!1})).
%\end{align}
\begin{align}
 \frac{\partial  f(\Sigmam_{A\!A})}{\ \partial\Sigmam_{A\!A}} \hspace{-0.2em}
   = &\!-\!2(\lambda - 1)(\Sigmam_{Y\!Y} \!+\! \Sigmam_{A\!A})^{-\!1} \!-\! 2(\Sigmam_{A\!A} \! + \! \sigma^{2}\Id_M)^{-\!1} \!+ \! 2\lambda \Sigmam_{Y\!Y}^{-\!1} \!- \!\lambda \text{diag}(\Sigmam_{YY}^{-\!1}) \nonumber\\
& \ + (\lambda - 1)\text{diag}\left((\Sigmam_{Y\!Y} \!+\! \Sigmam_{A\!A})^{-\!1}\right) + \text{diag}\left((\Sigmam_{A\!A} \!+ \!\sigma^{2}\Id)^{-\!1}\right) . %\alignnumspace
\end{align}
Note that the only critical point is $\Sigmam_{A\!A}^{\star} = \frac{1}{\lambda} \Hm\Sigmam_{X\!X}\Hm^{{\sf T}}$.
Theorem \ref{Stealth_OPT} follows immediately from combining this result with Proposition \ref{Stealth_Model}.
\end{proof}
\begin{corollary}\label{Cor_MI}
The mutual information between the vector of state variables and the vector of compromised measurements induced by the optimal attack construction is given by
\begin{align}
I(X^n;Y_{A}^m)=  \frac 1 2 \log\left | \Hm \mathbf{\Sigma}_{X\!X}\Hm^{{\sf T}} \left(\sigma^2\Id+\frac{1}{\lambda}\Hm \mathbf{\Sigma}_{X\!X}\Hm^{{{\sf T}}}\right)^{-1}\hspace{-0.7em}+\Id\right |. \label{Equ:mi}%\alignnumspace
%& \ =  \frac 1 2 \log \left| (\lambda+1)\Id_{m} - \lambda \left(\Id_{m} + \frac{1}{\sigma^2 \lambda}\Hm \mathbf{\Sigma}_{X\!X}\Hm^{\textnormal{\Tt}}\right)^{-1}  \right|.
\end{align}
\end{corollary}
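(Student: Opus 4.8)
The plan is to evaluate the Gaussian mutual information formula (\ref{MI}) at the optimal attack covariance $\Sigmam_{A\!A}^{\star} = \frac{1}{\lambda}\Hm\Sigmam_{X\!X}\Hm^{\sf T}$ from Theorem~\ref{Stealth_OPT}, reducing the joint determinant $|\Sigmam|$ by a Schur-complement identity. The key simplification I expect is that the cross-covariance terms collapse the Schur complement to $\sigma^{2}\Id + \Sigmam_{A\!A}$, after which a single determinant manipulation produces the claimed expression.

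First I would assemble the joint covariance matrix $\Sigmam$ of $(X^{n}, Y_{A}^{m})$. Because $X^{n}$, $Z^{m}$, and $A^{m}$ are mutually independent and zero-mean, the cross-covariance is $\EE[X^{n} (Y_{A}^{m})^{\sf T}] = \Sigmam_{X\!X}\Hm^{\sf T}$, so that
\begin{align}
\Sigmam = \begin{pmatrix} \Sigmam_{X\!X} & \Sigmam_{X\!X}\Hm^{\sf T} \\ \Hm\Sigmam_{X\!X} & \Sigmam_{Y_{A}\!Y_{A}} \end{pmatrix}.
\end{align}
Applying the block-determinant (Schur-complement) identity with respect to the $\Sigmam_{X\!X}$ block gives $|\Sigmam| = |\Sigmam_{X\!X}| \cdot |\Sigmam_{Y_{A}\!Y_{A}} - \Hm\Sigmam_{X\!X}\Sigmam_{X\!X}^{-1}\Sigmam_{X\!X}\Hm^{\sf T}|$, and the Schur complement simplifies to $\Sigmam_{Y_{A}\!Y_{A}} - \Hm\Sigmam_{X\!X}\Hm^{\sf T} = \sigma^{2}\Id + \Sigmam_{A\!A}$ by the definition of $\Sigmam_{Y_{A}\!Y_{A}}$. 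Substituting into (\ref{MI}), the factor $|\Sigmam_{X\!X}|$ cancels and I obtain
\begin{align}
I(X^{n};Y_{A}^{m}) = \frac{1}{2}\log\frac{|\Sigmam_{Y_{A}\!Y_{A}}|}{|\sigma^{2}\Id + \Sigmam_{A\!A}|}.
\end{align}

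Finally I would insert $\Sigmam_{A\!A}^{\star} = \frac{1}{\lambda}\Hm\Sigmam_{X\!X}\Hm^{\sf T}$. Writing $\Mm_{0} = \Hm\Sigmam_{X\!X}\Hm^{\sf T}$ and $\Dm = \sigma^{2}\Id + \frac{1}{\lambda}\Mm_{0}$, the numerator covariance becomes $\Sigmam_{Y_{A}\!Y_{A}} = \Mm_{0} + \sigma^{2}\Id + \frac{1}{\lambda}\Mm_{0} = \Mm_{0} + \Dm$, while the denominator is exactly $|\Dm|$. Using $|\Mm_{0} + \Dm|/|\Dm| = |(\Mm_{0}+\Dm)\Dm^{-1}| = |\Mm_{0}\Dm^{-1} + \Id|$ then yields the stated formula (\ref{Equ:mi}). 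The calculation is essentially routine; the one step to get right is the Schur-complement cancellation, since that is what makes the denominator depend on the attack only through $\sigma^{2}\Id + \Sigmam_{A\!A}$ rather than through the full $\Sigmam_{Y_{A}\!Y_{A}}$, together with the recognition that the numerator matrix factors as $\Mm_{0} + \Dm$ so that $\Dm^{-1}$ can be pulled out cleanly.
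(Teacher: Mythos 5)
Your proposal is correct: the paper states Corollary \ref{Cor_MI} without proof, as an immediate consequence of substituting $\Sigmam_{A\!A}^{\star}=\frac{1}{\lambda}\Hm\Sigmam_{X\!X}\Hm^{\sf T}$ from Theorem \ref{Stealth_OPT} into the Gaussian mutual information formula (\ref{MI}), and your derivation is exactly that intended calculation, with the Schur-complement step (valid here since $\Sigmam_{X\!X}$ is positive definite) correctly collapsing $|\Sigmam|$ to $|\Sigmam_{X\!X}|\,|\sigma^{2}\Id+\Sigmam_{A\!A}|$ and the final determinant manipulation $|\Mm_{0}+\Dm|/|\Dm|=|\Mm_{0}\Dm^{-1}+\Id|$ yielding (\ref{Equ:mi}). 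No gaps; the verification is complete and matches the paper's implicit route.
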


Theorem \ref{Stealth_OPT} shows that the generalized stealth attacks share the same structure of the stealth attacks in \cite{Sun_information-theoretic_2017} up to a scaling factor determined by $\lambda$. The solution in Theorem \ref{Stealth_OPT} holds for the case in which $\lambda\geq 1$, and therefore, lacks full generality. However, the case in which $\lambda <1$ yields unreasonably high probability of detection \cite{Sun_information-theoretic_2017} which indicates that the proposed attack construction is indeed of practical interest in a wide range of state estimation settings.

%
%Only a parameter which is inverse proportional to the trade-off parameter, i.e. $1/\lambda$,  is multiplied to the stealth attack to decrease the probability of detection.

The resulting attack construction is remarkably simple to implement provided that the information about the system is available to the attacker. Indeed, the attacker only requires access to the linearized Jacobian measurement matrix $\Hm$ and the second order statistics of the state variables, but the variance of the noise introduced by the sensors is not necessary. To obtain the Jacobian, a malicious attacker needs to know the topology of the grid, the admittances of the branches, and the operation point of the system. The second order statistics of the state variables on the other hand, can be estimated using historical data. In \cite{Sun_information-theoretic_2017} it is shown that the attack construction with a sample covariance matrix of the state variables obtained with historical data is asymptotically optimal when the size of the training data grows to infinity.

It is interesting to note that the mutual information in (\ref{Equ:mi}) increases monotonically with $\lambda$ and that it asymptotically converges to $I(X^n;Y^m)$, i.e. the case in which there is no attack. While the evaluation of the mutual information as shown in Corollary \ref{Cor_MI} is straightforward, the computation of the associated probability of detection yields involved expressions that do not provide much insight. For that reason, the probability of detection of optimal attacks is treated in the following section.

\subsection{Probability of Detection of Generalized Stealth Attacks}
\label{SEC:PDCI}

The asymptotic probability of detection of the generalized stealth attacks is governed by the KL divergence as described in (\ref{eq:Chernoff-Stein}). However in the non-asymptotic case, determining the probability of detection is difficult, and therefore, choosing a value of $\lambda$ that provides the desired probability of detection is a challenging task.
In this section we first provide a closed-form expression of the probability of detection by direct evaluation and show that the expression does not provide any practical insight over the choice of $\lambda$ that achieves the desired detection performance. That being the case, we then provide an upper bound on the probability of detection, which, in turn, provides a lower bound on the value of $\lambda$ that achieves the desired probability of detection.

\subsubsection{Direct Evaluation of the Probability of Detection}
\label{Sec:DEPD}
Detection based on the LRT with threshold $\tau$ yields a probability of detection given by
\begin{align}\label{Equ:PD_D}
{\sf P}_{\sf D} \eqdef \EE\left[\mathbbm{1}_{\left\{L(Y_A^m) \geq \tau\right\}}\right] .
\end{align}
 The following proposition particularizes the above expression to the optimal attack construction described in Section \ref{SEC:ITA}.

\begin{lemma}\label{Pro_PD}
The probability of detection of the LRT in (\ref{LHRT}) for the attack construction in (\ref{eq:att_cons}) is given by
\begin{align}\label{Pro_PD_1}
{\sf P}_{\sf D}(\lambda) & = \PP\left[{(U^p)}^{\sf T} \Deltam U^p \geq \lambda\left(2 \log\tau + \log \left|\Id + \lambda^{-1}\Deltam\right|\right)\right],
\end{align}
where $p=\text{rank} (\Hm \mathbf{\Sigma}_{X\!X}\Hm^{\sf T})$, $U^p\in\mathbb{R}^p$ is a vector of random variables with distribution $\Nc(\zerov,  \Id)$, and $\Deltam\in\mathbb{R}^{p\times p}$ is a diagonal matrix with entries given by $(\Deltam)_{i,i}=\lambda_i(\Hm \mathbf{\Sigma}_{X\!X}\Hm^{\sf T)}\lambda_i(\Sigmam_{Y\!Y}^{-\!1})$, where $\lambda_i(\Am)$ with $i=1,\ldots, p$ denotes the $i$-th eigenvalue of  matrix $\Am$ in descending order.
\end{lemma}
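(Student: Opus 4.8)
The plan is to evaluate the detection event $\{L(Y_A^m)\geq\tau\}$ in closed form and reduce it to a weighted sum of squared standard Gaussians. First I would write the likelihood ratio (\ref{LHRT}) explicitly using the two zero-mean Gaussian densities $\Nc(\zerov,\Sigmam_{Y_{A}\!Y_{A}})$ and $\Nc(\zerov,\Sigmam_{Y\!Y})$, obtaining
\[
\log L(\yv)=\tfrac{1}{2}\log\frac{|\Sigmam_{Y\!Y}|}{|\Sigmam_{Y_{A}\!Y_{A}}|}+\tfrac{1}{2}\,\yv^{\sf T}\!\left(\Sigmam_{Y\!Y}^{-1}-\Sigmam_{Y_{A}\!Y_{A}}^{-1}\right)\yv.
\]
Substituting the optimal covariance from Theorem~\ref{Stealth_OPT}, so that $\Sigmam_{Y_{A}\!Y_{A}}=\Sigmam_{Y\!Y}+\tfrac{1}{\lambda}\Hm\Sigmam_{X\!X}\Hm^{\sf T}$, the event $\{L(Y_A^m)\geq\tau\}$ becomes the quadratic inequality
\[
Y_A^{\sf T}\!\left(\Sigmam_{Y\!Y}^{-1}-\Sigmam_{Y_{A}\!Y_{A}}^{-1}\right)Y_A\;\geq\;2\log\tau+\log\frac{|\Sigmam_{Y_{A}\!Y_{A}}|}{|\Sigmam_{Y\!Y}|},
\]
where the left-hand side is a positive semidefinite form since $\Sigmam_{Y_{A}\!Y_{A}}\succeq\Sigmam_{Y\!Y}$.

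The structural fact I would exploit is that $\Bm\eqdef\Hm\Sigmam_{X\!X}\Hm^{\sf T}$, $\Sigmam_{Y\!Y}=\Bm+\sigma^2\Id$, and $\Sigmam_{Y_{A}\!Y_{A}}=(1+\tfrac{1}{\lambda})\Bm+\sigma^2\Id$ are all polynomials in $\Bm$, hence simultaneously diagonalizable. Writing the eigendecomposition $\Bm=\Um\Lambdam\Um^{\sf T}$ with eigenvalues $\lambda_i(\Bm)$, of which exactly $p=\mathrm{rank}(\Bm)$ are nonzero, I would apply the orthogonal change of variables $\tilde Y=\Um^{\sf T}Y_A$. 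Under $\Hc_1$ this produces independent coordinates $\tilde Y_i\sim\Nc\!\big(0,(1+\tfrac{1}{\lambda})\lambda_i(\Bm)+\sigma^2\big)$ and diagonalizes the quadratic form coordinatewise. The key observation is that on the null space of $\Bm$ (indices $i>p$) the eigenvalue of $\Sigmam_{Y\!Y}^{-1}-\Sigmam_{Y_{A}\!Y_{A}}^{-1}$ equals $\tfrac{1}{\sigma^2}-\tfrac{1}{\sigma^2}=0$, so those coordinates contribute nothing and the sum collapses to the $p$-dimensional range of $\Bm$.

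It then remains to normalize the surviving coordinates. Writing $\tilde Y_i=\sqrt{(1+\tfrac{1}{\lambda})\lambda_i(\Bm)+\sigma^2}\,U_i$ with $U_i\sim\Nc(0,1)$ i.i.d., the coefficient of $U_i^2$ simplifies after a short computation to $\tfrac{1}{\lambda}\tfrac{\lambda_i(\Bm)}{\lambda_i(\Bm)+\sigma^2}=\tfrac{1}{\lambda}(\Deltam)_{i,i}$, so the left-hand side becomes $\tfrac{1}{\lambda}(U^p)^{\sf T}\Deltam U^p$. In parallel the log-determinant ratio telescopes, since for $i\leq p$ each factor is $\log\!\big(1+\tfrac{\lambda^{-1}\lambda_i(\Bm)}{\lambda_i(\Bm)+\sigma^2}\big)=\log\!\big(1+\lambda^{-1}(\Deltam)_{i,i}\big)$, giving $\log\frac{|\Sigmam_{Y_{A}\!Y_{A}}|}{|\Sigmam_{Y\!Y}|}=\log|\Id+\lambda^{-1}\Deltam|$. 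Multiplying the reduced inequality through by $\lambda$ and taking the probability under $\Hc_1$ yields exactly (\ref{Pro_PD_1}).

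The calculations are elementary once the simultaneous diagonalization is set up; I expect the only point demanding care to be the eigenvalue bookkeeping in the definition of $\Deltam$. The clean product $(\Deltam)_{i,i}=\tfrac{\lambda_i(\Bm)}{\lambda_i(\Bm)+\sigma^2}$ arises because $\lambda_i(\Bm)$ and the eigenvalue $\tfrac{1}{\lambda_i(\Bm)+\sigma^2}$ of $\Sigmam_{Y\!Y}^{-1}$ are paired through a common eigenvector $\uv_i$; I would therefore state explicitly that $(\Deltam)_{i,i}=\lambda_i(\Bm)\,\lambda_i(\Sigmam_{Y\!Y}^{-1})$ is to be read with this eigenvector pairing restricted to the range of $\Bm$, rather than as a product of independently sorted spectra. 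This indexing convention, and not any analytic difficulty, is where the main (and modest) obstacle lies.
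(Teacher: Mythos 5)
Your proof is correct and takes essentially the same route as the paper's: the paper likewise exploits that $\Sigmam_{Y\!Y}$, $\Sigmam_{A\!A}$ and $\Sigmam_{Y_A\!Y_A}$ are simultaneously diagonalized, performing your rotation and whitening as two successive changes of variables ($\yv_1 = \Um_{Y\!Y}\yv$, then $\yv_2 = \Lambdam_{Y_A\!Y_A}^{-1/2}\yv_1$) inside the Gaussian integral, arriving at the same quadratic-form event with $\Deltam = \lambda\,\Lambdam_{A\!A}\Lambdam_{Y\!Y}^{-1}$ and the same rank-$p$ collapse. Your closing remark on the indexing is well taken and is in fact a point the lemma's statement glosses over: read literally, the independently sorted spectra would mis-pair (e.g.\ $\lambda_1(\Sigmam_{Y\!Y}^{-1}) = 1/\sigma^2$ whenever $\mathrm{rank}(\Hm\Sigmam_{X\!X}\Hm^{\sf T}) < m$), and the correct diagonal entries are the eigenvector-paired values $(\Deltam)_{i,i} = \lambda_i(\Hm\Sigmam_{X\!X}\Hm^{\sf T})/\bigl(\lambda_i(\Hm\Sigmam_{X\!X}\Hm^{\sf T})+\sigma^2\bigr)$ exactly as you derive.
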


\begin{proof}
The probability of detection of the stealth attack is,
\begin{align}
\label{Equ:PD_1}
{\sf P_D}(\lambda) &= \ \int_{\Sc} \mathrm{d}P_{Y_{A}^{m}}\\
&= \frac{1}{(2\pi)^{\frac{m}{2}}\left|\Sigmam_{Y_{A}\!Y_{A}}\right|^{\frac12}}  \int_{\Sc}  \exp\left \{-\frac{1}{2}\yv^{\sf T}\Sigmam_{Y_{A}\!Y_{A}}^{-\!1}\yv\right \} \mathrm{d} \yv, \label{Equ:PD_2} %\alignnumspace
\end{align}
where
\begin{equation}
\Sc =  \{ \yv \in \RR^{m} : L(\yv) \geq \tau\}.
\end{equation}
Algebraic manipulation yields the following equivalent description of the integration domain:
\begin{align}
\label{eq:int_dom_2}
\Sc   =  \left \{ \yv \in \RR^{m}: \yv^{\sf T} {\Deltam_0} \yv  \geq  2 \log\tau   +   \log |\Id + \Sigmam_{A\!A}\Sigmam_{Y\!Y}^{-\!1}|\right\},%\alignnumspace
\end{align}
with ${\Deltam_0} \eqdef \Sigmam_{Y\!Y}^{-\!1} - \Sigmam_{Y_{A}\!Y_{A}}^{-\!1} $. Let $\Sigmam_{Y\!Y}=\Um_{Y\!Y}\Lambdam_{Y\!Y}\Um_{Y\!Y}^{\sf T}$ where $\Lambdam_{Y\!Y}\in\RR^{m\times m}$ is a diagonal matrix containing the eigenvalues of $\Sigmam_{Y\!Y}$ in descending order and $\Um_{Y\!Y}\in\RR^{m\times m}$ is a unitary matrix whose columns are the eigenvectors of $\Sigmam_{Y\!Y}$ ordered matching the order of the eigenvalues. Applying the change of variable $\yv_{1} \eqdef \Um_{Y\!Y}\yv$ in (\ref{Equ:PD_2}) results in
\begin{align}
{\sf P_D}(\lambda) =
\frac{1}{(2\pi)^{\frac{m}{2}}\left|\Sigmam_{Y_{A}\!Y_{A}}\right|^{\frac12}}\int_{\Sc_1}\exp\left \{-\frac{1}{2}\yv^{\sf T}_1\Lambdam_{Y_{A}\!Y_{A}}^{-\!1}\yv_1\right \} \mathrm{d} \yv_1, \label{Equ:PD_3} %\alignnumspace
\end{align}
{where $\Lambdam_{Y_{A}\!Y_{A}}\in\RR^{m\times m}$ denotes the diagonal matrix containing the eigenvalues of $\Sigmam_{Y_{A}\!Y_{A}}$ in descending order}.
Noticing that $\Sigmam_{Y\!Y}$, $\Sigmam_{A\!A}$ and $\Sigmam_{Y_{A}\!Y_{A}}$ are also diagonalized by $\Um_{Y\!Y}$,
the integration domain $\Sc_{1}$ is given by
\begin{align}
\Sc_{1}  =  \left\{  \yv_{1} \in  \RR^{m}  : \yv_{1}^{\sf T} \Deltam_{1} \yv_{1} \geq  2 \log\tau  +  \log |\Id  + \Lambdam_{A\!A}\Lambdam_{Y\!Y}^{-\!1}|  \right\},%\alignnumspace
\end{align}
where $\Deltam_{1} \eqdef \Lambdam_{Y\!Y}^{-\!1}-\Lambdam_{Y_{A}\!Y_{A}}^{-\!1}$ with $\Lambdam_{A\!A}$ denoting the diagonal matrix containing the eigenvalues of $\Sigmam_{{A}\!{A}}$ in descending order. Further applying the change of variable $\yv_{2} \eqdef \Lambdam_{Y_{A}\!Y_{A}}^{-\!\frac{1}{2}} \yv_{1}$ in (\ref{Equ:PD_3}) results in
\begin{equation}
{\sf P_D}(\lambda) =\frac{1}{\sqrt{(2\pi)^{m}}} \int_{\Sc_{2}} \exp\{-\frac{1}{2}\yv_{2}^{\sf T}\yv_{2}\} \mathrm{d} \yv_{2}, \label{Equ:PD_4}
\end{equation}
with the transformed integration domain given by
\begin{align}
\Sc_{2}  =  \left\{ \yv_{2} \in \RR^{m} : \yv_{2}^{\sf T} \Deltam_{2} \yv_{2} \geq 2 \log\tau  +   \log |\Id  + \Deltam_{2}|\right\},
%\alignnumspace
\end{align}
with
\begin{align}
\Deltam_{2} \eqdef \Lambdam_{A\!A} \Lambdam_{Y\!Y}^{-\!1} .%= \frac{1}{\lambda} \Lambdam_{\Hm \mathbf{\Sigma}_{X\!X}\Hm^{\Tt}} \Lambdam_{Y\!Y}^{-\!1}.% =  \frac{1}{\lambda} \Deltam_{3};
\end{align}
Setting $\Deltam \eqdef \lambda\Deltam_2$ and noticing that $\text{rank}(\Deltam)=\text{rank} (\Hm \mathbf{\Sigma}_{X\!X}\Hm^{\sf T})$ concludes the proof.
\end{proof}

%{\textcolor{red}{Although the matrices of eigenvalues in Lemma \ref{Pro_PD} and the proof, i.e. $\Lambdam_{A\!A}$, $\Lambdam_{Y\!Y}$ and $\Lambdam_{Y_{A}\!Y_{A}}$, are specific to be in decreasing order, but the matrices only need to be in a same order that can be increasing, decreasing, or random. }}
%%
Notice that the left-hand term $(U^p)^{\sf T} \Deltam U^p$ in (\ref{Pro_PD_1}) is a weighted sum of independent $\chi^{2}$ distributed random variables with one degree of freedom where  the weights are determined by the diagonal entries of $\Deltam$ which depend on the second order statistics of the state variables, the Jacobian measurement matrix, and the variance of the noise; i.e. the attacker has no control over this term.
The right-hand side contains in addition $\lambda$ and $\tau$, and therefore,  the probability of attack detection is described as a function of the parameter $\lambda$. However, characterizing the distribution of the resulting random variable is not practical since there is no closed-form expression for the distribution of a positively weighted sum of independent $\chi^2$ random variables with one degree of freedom  \cite{bodenham_comparison_2016}.
Usually, some moment matching approximation approaches such as the Lindsay-Pilla-Basak method \cite{lindsay_moment-based_2000} are utilized to solve this problem but the resulting expressions are complex and the relation of the probability of detection with $\lambda$ is difficult to describe analytically following this course of action. In the following an upper bound on the probability of attack detection is derived. The upper bound is then used to provide a simple lower bound on the value $\lambda$ that achieves the desired probability of detection.

\subsubsection{Upper Bound on the Probability of Detection}

The following theorem provides a sufficient condition for $\lambda$ to achieve a desired probability of attack detection.

\begin{theorem}\label{pro_CI}
Let $\tau > 1$ be the decision threshold of the LRT. For any $t>0$ and $\lambda \geq \textnormal{max}\left(\lambda^{\star}(t),1\right)$ then the probability of attack detection satisfies
\begin{equation}
{\sf P_D}(\lambda)\leq e^{-t},
\end{equation}
where $\lambda^{*}(t)$ is the only positive solution of $\lambda$ satisfying
\begin{align}
 2 \lambda \log\tau  - \frac{1}{2\lambda}\textnormal{\trace}(\Deltam^2) -2\sqrt{\textnormal{\trace}({\Deltam}^{2})t} \!-\! 2||\Deltam||_{\infty} t = 0.
%\alignnumspace
\end{align}
and $||\cdot||_{\infty}$ is the infinity norm.
\end{theorem}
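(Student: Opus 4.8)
The plan is to obtain an exponential tail bound on the probability of detection by recognizing the quadratic form $(U^p)^{\sf T}\Deltam U^p$ in Lemma~\ref{Pro_PD} as a weighted sum of independent $\chi^2$ random variables with one degree of freedom, and then applying a concentration inequality of the Hanson--Wright / Laurent--Massart type. Writing the event in~(\ref{Pro_PD_1}) as $(U^p)^{\sf T}\Deltam U^p \geq b(\lambda)$ with $b(\lambda) \eqdef \lambda(2\log\tau + \log|\Id + \lambda^{-1}\Deltam|)$, the goal reduces to bounding $\PP[(U^p)^{\sf T}\Deltam U^p \geq b(\lambda)] \leq e^{-t}$. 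The standard sub-exponential tail bound for such a quadratic form states that for $s>0$,
\begin{align}
\PP\left[(U^p)^{\sf T}\Deltam U^p \geq \trace(\Deltam) + 2\sqrt{\trace(\Deltam^2)s} + 2\|\Deltam\|_\infty s\right] \leq e^{-s}.\nonumber
\end{align}
First I would invoke this inequality with $s=t$, which gives $\PP[(U^p)^{\sf T}\Deltam U^p \geq \tau_0(t)]\leq e^{-t}$ where $\tau_0(t) \eqdef \trace(\Deltam) + 2\sqrt{\trace(\Deltam^2)t} + 2\|\Deltam\|_\infty t$.

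The crux of the argument is then to show that the threshold $b(\lambda)$ in the detection event dominates $\tau_0(t)$, i.e. that $b(\lambda)\geq \tau_0(t)$, since in that case the detection event is contained in the tail event and the bound $e^{-t}$ transfers directly to ${\sf P_D}(\lambda)$. To make this comparison tractable, I would first control the logarithmic term: using the concavity bound $\log|\Id + \lambda^{-1}\Deltam| = \sum_i \log(1+\lambda^{-1}(\Deltam)_{i,i}) \geq \sum_i (\lambda^{-1}(\Deltam)_{i,i} - \tfrac{1}{2}\lambda^{-2}(\Deltam)_{i,i}^2)$ from the elementary inequality $\log(1+x)\geq x - x^2/2$ for $x\geq 0$, one obtains
\begin{align}
b(\lambda) \geq 2\lambda\log\tau + \trace(\Deltam) - \frac{1}{2\lambda}\trace(\Deltam^2).\nonumber
\end{align}
Subtracting $\tau_0(t)$ and cancelling the common $\trace(\Deltam)$ term, the sufficient condition $b(\lambda)\geq\tau_0(t)$ is implied by
\begin{align}
2\lambda\log\tau - \frac{1}{2\lambda}\trace(\Deltam^2) - 2\sqrt{\trace(\Deltam^2)t} - 2\|\Deltam\|_\infty t \geq 0,\nonumber
\end{align}
whose left-hand side is exactly the function defining $\lambda^\star(t)$ in the theorem statement.

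It remains to argue that this function, call it $h(\lambda)$, has a unique positive root $\lambda^\star(t)$ and is nonnegative for all $\lambda\geq\lambda^\star(t)$. I would establish this by noting that $h(\lambda) = 2\log\tau\,\lambda - \tfrac{1}{2}\trace(\Deltam^2)\lambda^{-1} - c$ with $c = 2\sqrt{\trace(\Deltam^2)t}+2\|\Deltam\|_\infty t \geq 0$; since $\tau>1$ gives $\log\tau>0$, the term $2\log\tau\,\lambda$ is strictly increasing and the term $-\tfrac{1}{2}\trace(\Deltam^2)\lambda^{-1}$ is also strictly increasing on $\lambda>0$, so $h$ is strictly increasing, with $h(\lambda)\to-\infty$ as $\lambda\to 0^+$ and $h(\lambda)\to+\infty$ as $\lambda\to\infty$. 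Hence $h$ has exactly one positive root $\lambda^\star(t)$ and $h(\lambda)\geq 0$ for all $\lambda\geq\lambda^\star(t)$. Taking $\lambda\geq\max(\lambda^\star(t),1)$ secures both $h(\lambda)\geq 0$ (hence $b(\lambda)\geq\tau_0(t)$, hence ${\sf P_D}(\lambda)\leq e^{-t}$) and the standing hypothesis $\lambda\geq 1$ under which Theorem~\ref{Stealth_OPT} and Lemma~\ref{Pro_PD} are valid. The main obstacle I anticipate is the precise form of the concentration inequality: its constants must match exactly the $2\sqrt{\trace(\Deltam^2)t}$ and $2\|\Deltam\|_\infty t$ appearing in the statement, so I would need to use the Laurent--Massart bound in the sharp form where the deviation is expressed through $\trace(\Deltam^2)$ and $\|\Deltam\|_\infty$ rather than looser operator-norm surrogates; verifying that the $\log$-determinant lower bound and this tail bound interlock cleanly (so that no spurious constant or lower-order term survives) is the delicate step.
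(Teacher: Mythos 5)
Your proposal is correct and follows essentially the same route as the paper's proof: starting from Lemma~\ref{Pro_PD}, lower-bounding $\log|\Id+\lambda^{-1}\Deltam|$ by its second-order expansion $\frac{1}{\lambda}\trace(\Deltam)-\frac{1}{2\lambda^2}\trace(\Deltam^2)$, applying the Laurent--Massart/Hsu-type tail bound with exactly the $\trace(\Deltam^2)$ and $\|\Deltam\|_{\infty}$ deviation terms, and concluding via strict monotonicity of $h(\lambda)=2\lambda\log\tau-\frac{1}{2\lambda}\trace(\Deltam^2)-c$ that a unique positive root $\lambda^{\star}(t)$ exists and $h\geq 0$ beyond it. The only (harmless) cosmetic difference is that you justify the log-determinant bound by the elementary inequality $\log(1+x)\geq x-x^2/2$ for $x\geq 0$, whereas the paper groups the alternating Taylor series using $(\Deltam)_{i,i}\leq 1$ and $\lambda\geq 1$; your variant is, if anything, slightly cleaner.
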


\begin{proof}
We start with the result of Lemma \ref{Pro_PD} which gives
\begin{align}
\label{Equ:PD_CI1}
{\sf P_{D}}(\lambda)  = \PP  \left[{(U^p)}^{\sf T}  \Deltam U^p \geq  \lambda\left(2 \log\tau  +  \log \left|\Id   +  \lambda^{-1} \Deltam\right|\right)\right]. %\alignnumspace
\end{align}
We now proceed to expand the term $\log \left|\Id + \lambda^{-1}\Deltam\right|$ using a Taylor series expansion resulting in
\begin{align}
\log \left|\Id + \lambda^{-1}\Deltam\right|  & = \sum_{i=1}^{p} \log\left(1+\lambda^{-1}(\Deltam)_{i,i}\right) \\
& = \sum_{i=1}^{p} \left(\sum_{j=1}^{\infty}  \left( \frac{\left(\lambda^{-1} (\Deltam)_{i,i}\right)^{2j-1}}{2j-1} - \frac{\left(\lambda^{-1}(\Deltam)_{i,i}\right)^{2j}}{2j}\right) \right).\label{Equ:Taylor_1} %\alignnumspace
\end{align}
Because $(\Deltam)_{i,i} \leq 1,\textnormal{for} \; i=1, \ldots, p$, and $\lambda \geq 1$,
then
\begin{align}
\frac{\left(\lambda^{-1} (\Deltam)_{i,i}\right)^{2j-1}}{2j-1} - \frac{\left(\lambda^{-1}(\Deltam)_{i,i}\right)^{2j}}{2j} \geq 0, \ \textnormal{for}\; j \in \ZZ^+. %\alignnumspace
\end{align}
Thus, (\ref{Equ:Taylor_1}) is lower bounded by the second order Taylor expansion, i.e.,
\begin{align}
\log  \left|\Id + \Deltam\right| &  \geq  \sum_{i=1}^{p} \left(\lambda^{-1}(\Deltam)_{i,i} -\frac{\left(\lambda^{-1} (\Deltam)_{i,i}\right)^2}{2}\right)\\
& = 
\label{eq:Taylor_2}
\frac{1}{\lambda}\trace(\Deltam) - \frac{1}{2\lambda^2}\trace(\Deltam^2). %\alignnumspace
\end{align}
Substituting (\ref{eq:Taylor_2}) in (\ref{Equ:PD_CI1}) yields
\begin{align}
 {\sf P_{D}}(\lambda) \leq  \PP  \left[{(U^p)}^{\sf T}  \Deltam U^p  \geq  \trace(\Deltam)  +  2  \lambda\log\tau  -  \frac{1}{2\lambda}\trace(\Deltam^2)\right]  .  \label{Equ:PD_CI2} %\alignnumspace
\end{align}
Note that $\EE\left[(U^p)^{\sf T} \Deltam U^p \right]=\trace(\Deltam)$, and therefore, evaluating the probability in (\ref{Equ:PD_CI2}) is equivalent to evaluating the probability of $(U^p)^{\sf T} \Deltam U^p$ deviating $2 \lambda  \log\tau - \frac{1}{2\lambda}\trace(\Deltam^2)$ from the mean. In view of this, the right-hand side in (\ref{Equ:PD_CI2}) is upper bounded by \cite{laurent_adaptive_2000,hsu_tail_2012}
\begin{align}
{\sf P_{D}}(\lambda) & \leq   \PP   \left[{(U^p)}^{\sf T}   \Deltam U^p   \geq   \trace(\Deltam)  + 2\sqrt{\trace({\Deltam}^{2})t}  + 2||\Deltam||_{\infty} t \right]  \leq e^{-t},\label{Equ:PD_CI5}
\end{align}
for $t >0$ satisfying
\begin{align} \label{Equ:CI_IE}
2 \lambda\log\tau  - \frac{1}{2\lambda}\trace(\Deltam^2) \geq 2\sqrt{\trace({\Deltam}^{2})t} + 2||\Deltam||_{\infty} t.
\end{align}
The expression in (\ref{Equ:CI_IE}) is satisfied with equality for two values of $\lambda$, one is strictly negative and the other one is strictly positive denoted by $\lambda^{\star}(t)$, when $\tau> 1$. The result follows by noticing that the left-hand term of (\ref{Equ:CI_IE}) increases monotonically for $\lambda>0$ and choosing $\lambda \geq \textnormal{max}\left(\lambda^{\star}(t),1\right)$. This concludes the proof.
\end{proof}

It is interesting to note that for large values of $\lambda$ the probability of detection decreases exponentially fast with $\lambda$. We will later show in the numerical results that the regime in which the exponentially fast decrease kicks in does not align with the saturation of the mutual information loss induced by the attack.

\subsection{Numerical Evaluation of Stealth Attacks} \label{Numerical_Simulation}

We evaluate the performance of stealth attacks in practical state estimation settings. n particular,
{the IEEE 14-Bus, 30-Bus and 118-Bus} test systems are considered in the simulation.
In state estimation with linearized dynamics, the Jacobian measurement matrix is determined by the operation point.
We assume a DC state estimation scenario \cite{abur_power_2004, grainger_power_1994}, and thus, we set {the resistances of the branches to $0$ and the bus voltage magnitude to $1.0$ per unit.} Note that in this setting it is sufficient to specify the network topology, the branch reactances, real power flow, and the power injection values to fully characterize the system. Specifically, we use the IEEE test system framework provided by MATPOWER \cite{zimmerman_matpower:_2011}.
{We choose the bus voltage angle to be the state variables, and use the power injection and the power flows in both directions as the measurements.}

%Without loss of generality we assume that the bus voltage angles are zero, and therefore, set the operation point in the origin, which is the operation point of the direct current (DC) state estimation in \cite{abur_power_2004, grainger_power_1994}.
%In DC state estimation, only branch reactances, real power flow and power injection values of the power system are needed.
%These values of the IEEE test systems are obtained from MATPOWER \cite{zimmerman_matpower:_2011}.

%\begin{figure}[t!]
%\centering
%\includegraphics[scale=0.5]{figures/MI_PD_30_118_S10.pdf}
%\caption{ Performance of the generalized stealth attack in terms of mutual information and probability of detection for different values of $\rho$ when $\lambda = 2$, $\tau = 2$, and $\textnormal{SNR} = 10 \ \textnormal{dB}$.}
%\label{Fig:MIPD_rho_S10}
%\end{figure}
%
%\begin{figure}[t!]
%\centering
%\includegraphics[scale=0.5]{figures/MI_PD_30_118_S20.pdf}
%\caption{ Performance of the generalized stealth attack in terms of mutual information and probability of detection for different values of $\rho$ when $\lambda = 2$, $\tau = 2$, and $\textnormal{SNR} = 20 \ \textnormal{dB}$.}
%\label{Fig:MIPD_rho_S20}
%\end{figure}

As stated in Section \ref{Sec:DEPD}, there is no closed-form expression for the distribution of a positively weighted sum of independent $\chi^2$ random variables, which is required to calculate the probability of detection of the generalized stealth attacks as shown in Lemma \ref{Pro_PD}. For that reason, we use the Lindsay–Pilla–Basak method and the {MOMENTCHI2} package \cite{bodenham_momentchi2:_2016} to numerically evaluate the probability of attack detection.

The covariance matrix of the state variables is modelled as a Toeplitz matrix with exponential decay parameter $\rho$, where the exponential decay parameter $\rho$ determines the correlation strength between different entries of the state variable vector.
The performance of the generalized stealth attack is a function of weight given to the detection term in the attack construction cost function, i.e. $\lambda$, the correlation strength between state variables, i.e. $\rho$, and the Signal-to-Noise Ratio (SNR) of the power system which is defined as
\begin{equation}
\textnormal{SNR} \eqdef 10\log_{10}\left(\frac{\trace{(\Hm\Sigmam_{X\!X}\Hm^{\sf T}})}{m\sigma^2}\right).
\end{equation}

\begin{figure}[t!]
\centering
\includegraphics[scale=0.6]{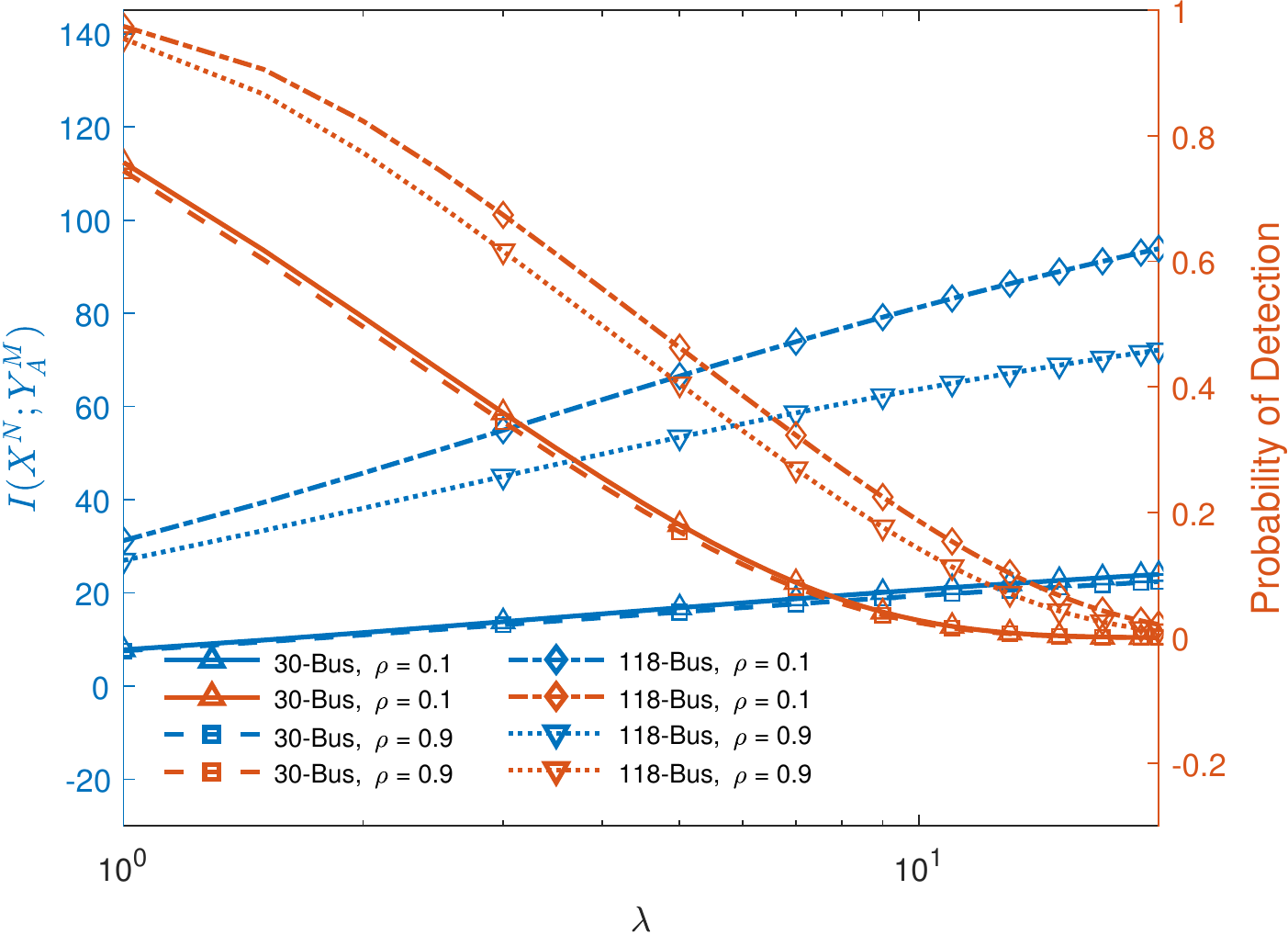}
\caption{ Performance of the generalized stealth attack in terms of mutual information  and probability of detection for different values of $\lambda$ and system size when $\rho = 0.1$, $\rho = 0.9$, $\textnormal{SNR} = 10 \ \textnormal{dB}$ and $\tau = 2$.}
\label{Fig:MIPD_lambda_log_S10}
\end{figure}

\begin{figure}[t!]
\centering
\includegraphics[scale=0.6]{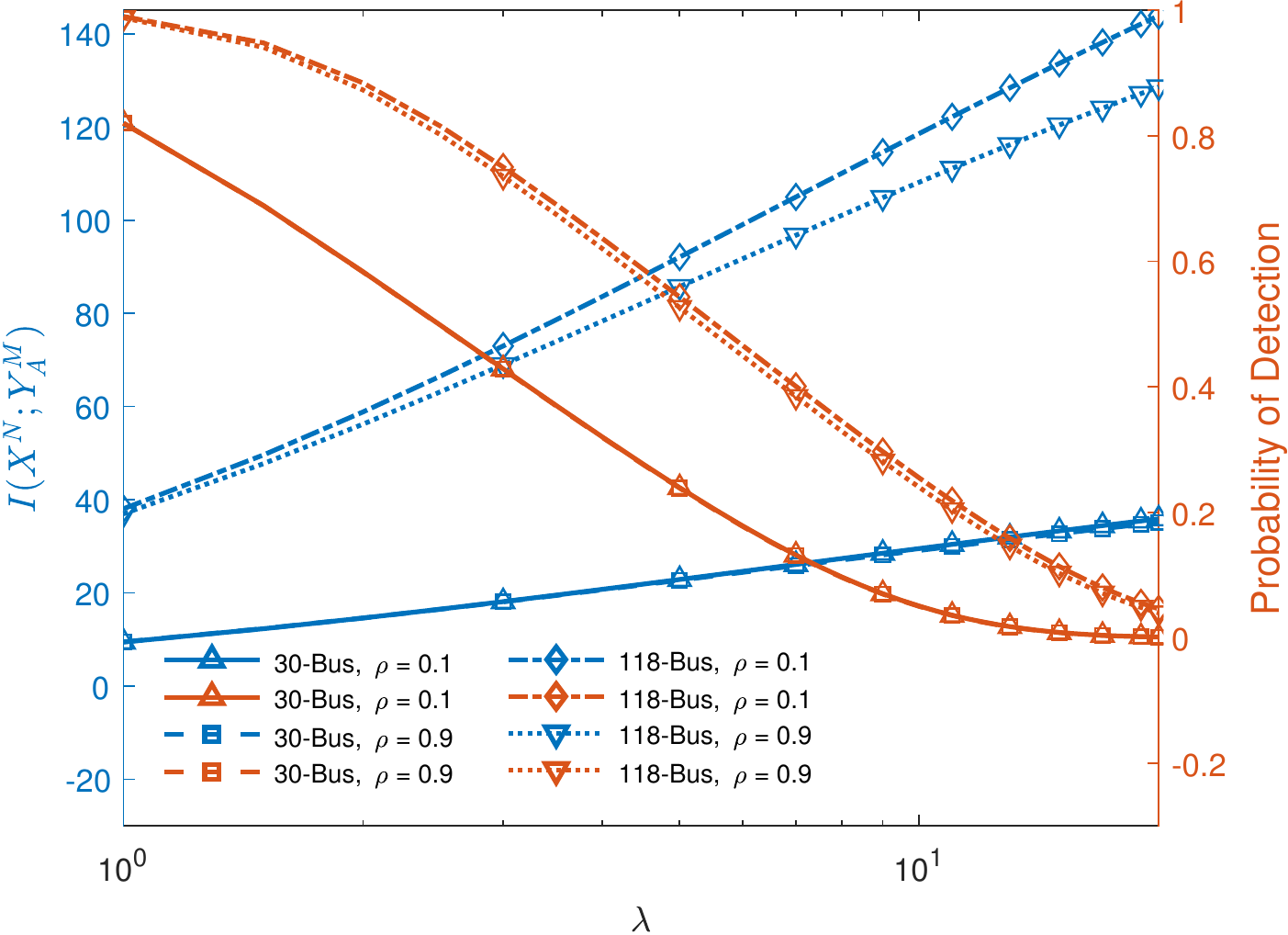}
\caption{ Performance of the generalized stealth attack in terms of mutual information  and probability of detection for different values of $\lambda$ and system size when $\rho = 0.1$, $\rho = 0.9$, $\textnormal{SNR} = 20 \ \textnormal{dB}$ and $\tau = 2$.}
\label{Fig:MIPD_lambda_log_S20}
\end{figure}
%\subsection{Simulation for Generalized Stealth Attacks}

%Fig. \ref{Fig:MIPD_rho_S10} and Fig. \ref{Fig:MIPD_rho_S20} depict the performance of the optimal attack construction given in (\ref{eq:att_cons}) for different values of $\rho$ with $\textnormal{SNR} = 10 \ \textnormal{dB}$ and $\textnormal{SNR} = 20 \ \textnormal{dB}$, respectively, when $\lambda=2$
% and $\tau=2$. Interestingly, the performance of the attack construction does not change monotonically with the correlation strength, which suggests that the correlation among the state variables does not necessarily provide an advantage to the attacker. Admittedly, for a small and moderate values of $\rho$, the performance of the attack does not change significantly with $\rho$ for both objectives. This effect is more noticeable in the high SNR scenario. However, for large values of $\rho$ the performance of the attack improves significantly in terms of both mutual information and probability of detection. Moreover, the advantage provided by large values of $\rho$ is more significant for the 118-Bus system than for the 30-Bus system, which indicates that correlation between the state variables is easier to exploit for the attacker in large systems.

\begin{figure}[t!]
\centering
\includegraphics[scale=0.5]{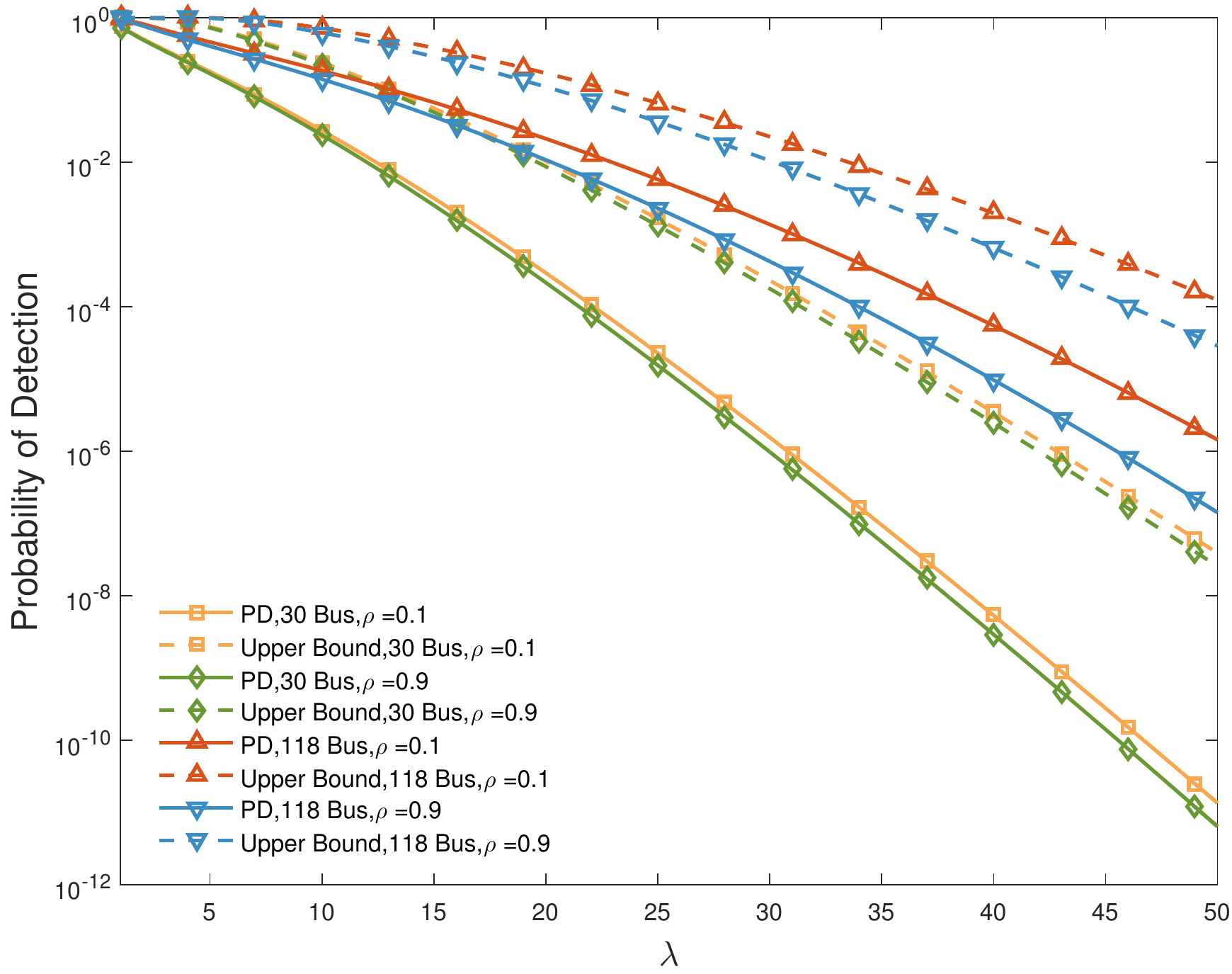}
\caption{Upper bound on probability of detection given in Theorem \ref{pro_CI} for different values of $\lambda$ when $\rho = 0.1 \ \text{or} \ 0.9$, $\textnormal{SNR} = 10 \ \textnormal{dB}$, and $\tau = 2$.}
\label{Fig:CI_Bound_S10}
\end{figure}

\begin{figure}[t!]
\centering
\includegraphics[scale=0.5]{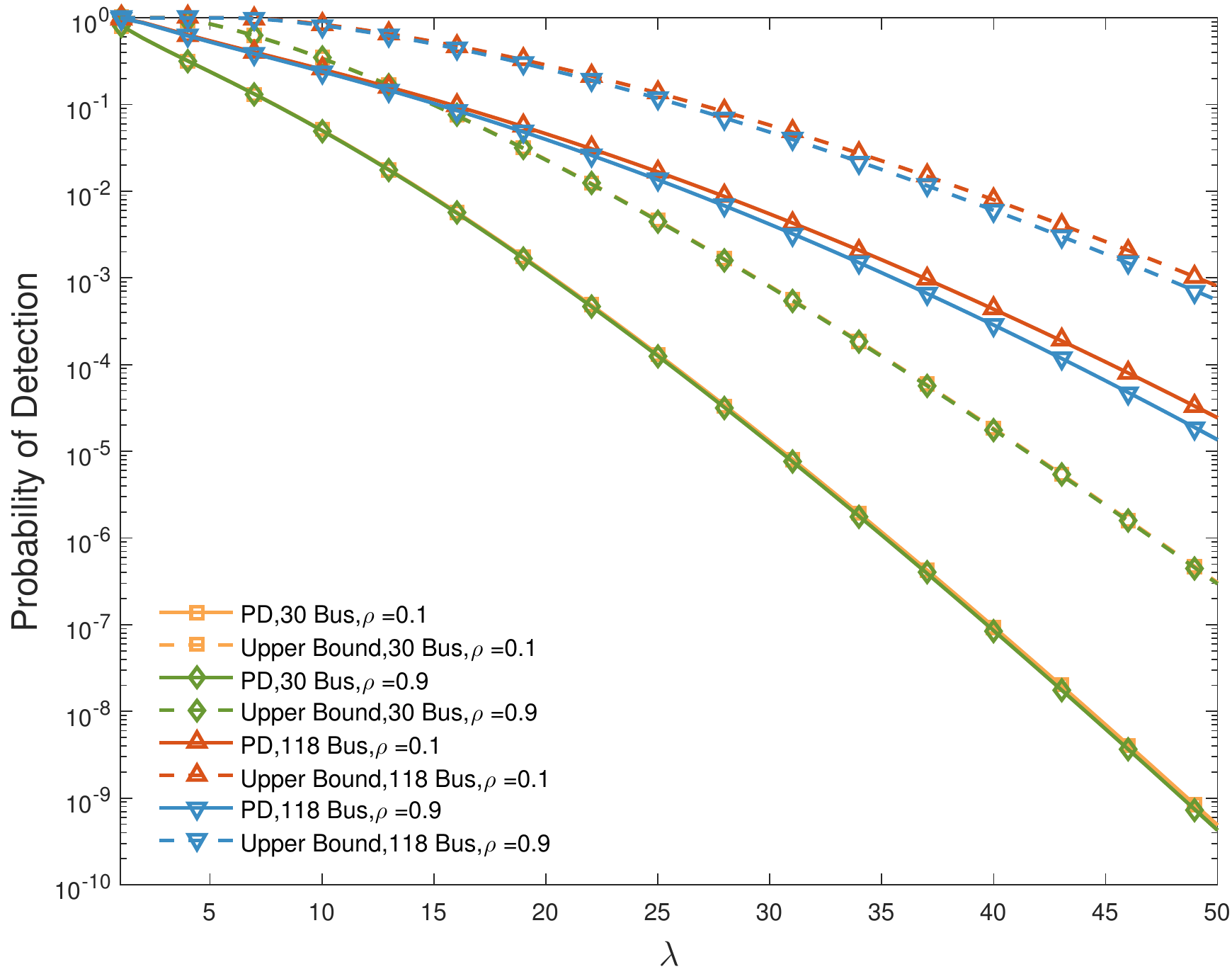}
\caption{Upper bound on probability of detection given in Theorem \ref{pro_CI} for different values of $\lambda$ when $\rho = 0.1 \ \text{or} \ 0.9$, $\textnormal{SNR} = 20 \ \textnormal{dB}$, and $\tau = 2$.}
\label{Fig:CI_Bound_S20}
\end{figure}

Fig. \ref{Fig:MIPD_lambda_log_S10} and Fig. \ref{Fig:MIPD_lambda_log_S20} depict the performance of the optimal attack construction for different values of $\lambda$ and $\rho$ with $\textnormal{SNR} = 10 \ \textnormal{dB}$ and $\textnormal{SNR} = 20 \ \textnormal{dB}$, respectively, when $\tau=2$. As expected, larger values of the parameter $\lambda$ yield smaller values of the probability of attack detection while increasing the mutual information between the state variables vector and the compromised measurement vector. We observe that the probability of detection decreases approximately linearly for moderate values of $\lambda$.
On the other hand, Theorem \ref{pro_CI} states that for large values of $\lambda$ the probability of detection decreases exponentially fast to zero.
However, for the range of values of $\lambda$ in which the decrease of probability of detection is approximately linear, there is no significant reduction on the rate of growth of mutual information. In view of this, the attacker needs to choose the value of $\lambda$ carefully as the convergence of the mutual information to the asymptote $I(X^n;Y^m)$ is slower than that of the probability of detection to zero.

The comparison between the 30-Bus and 118-Bus systems shows that for the smaller size system the probability of detection decreases faster to zero while the rate of growth of mutual information is smaller than that on the larger system.
This suggests that the choice of $\lambda$ is particularly critical in large size systems as smaller size systems exhibit a more robust attack performance for different values of $\lambda$.
The effect of the correlation between the state variables is significantly more noticeable for the 118-bus system.
While there is a performance gain for the 30-bus system in terms of both mutual information and probability of detection {due to the high correlation between the state variables}, the improvement is more noteworthy for the 118-bus case.
Remarkably, the difference in terms of mutual information between the case in which $\rho=0.1$ and $\rho=0.9$ increases as $\lambda$ increases which indicates that the cost in terms of mutual information of reducing the probability of detection is large in the small values of correlation.

The performance of the upper bound given by Theorem \ref{pro_CI} on the probability of detection for different values of $\lambda$ and $\rho$ when $\tau=2$ and $\textnormal{SNR} = 10 \ \textnormal{dB}$ is shown in Fig. \ref{Fig:CI_Bound_S10}. Similarly, Fig. \ref{Fig:CI_Bound_S20} depicts the upper bound with the same parameters but with $\textnormal{SNR} = 20 \ \textnormal{dB}$. As shown by Theorem \ref{pro_CI} the bound decreases exponentially fast for large values of $\lambda$. Still, there is a significant gap to the probability of attack detection evaluated numerically. This is partially due to the fact that our bound is based on the concentration inequality in \cite{laurent_adaptive_2000} which introduces a gap of more than an order of magnitude.  Interestingly, the gap decreases when the value of $\rho$ increases although the change is not  significant. More importantly, the bound is tighter for lower values of SNR for both 30-bus and 118-bus systems.

\section{Attack Construction with Estimated State Variable Statistics}

\subsection{Learning the Second-Order Statistics of the State Variables}

The stealth attack construction proposed in the preceding text requires perfect knowledge of the covariance matrix of the state variables and the linearized Jacobian measurement matrix. 
In \cite{sun_2019_learning}, the performance of the attack when the second-order statistics are not perfectly known by the attacker but the linearized Jacobian measurement matrix is known.
%\subsection{Partial Knowledge as a Sample Covariance Matrix}
Therein, the partial knowledge is modelled by assuming that the attacker has access to a sample covariance matrix of the state variables.
%The sample covariance matrix is an unbiased estimator of the covariance matrix.
Specifically, the training data consisting of $k$ state variable realizations $\{\xv^n_{i}\}^{k}_{i=1}$ is available to the attacker. That being the case the attacker computes the unbiased estimate of the covariance matrix of the state variables given by
\vspace{-0.5em}
\begin{equation}\label{Equ:SC}
\Sm_{X\!X} = \frac{1}{k-1} \sum_{i=1}^{k} \xv^n_{i} (\xv^n_{i})^{\sf T}.
\vspace{-0.5em}
\end{equation}
The stealth attack constructed using the sample covariance matrix follows a multivariate Gaussian distribution given by
\begin{equation}
\tilde{A}^{m} \sim \Nc (\zerov, \Sigmam_{\tilde{A}\!\tilde{A}}),
\end{equation}
where $\Sigmam_{\tilde{A}\!\tilde{A}} = \Hm\Sm_{X\!X}\Hm^{\sf T}$.
%and distributed as
%\begin{align}
%\Sigmam_{\tilde{A}\!\tilde{A}} \sim \frac{1}{K-1} W_{m} (K-1, \Hm\Sigmam_{X\!X}\Hm^{\Tt}).
%\end{align}

Because the sample covariance matrix in (\ref{Equ:SC}) is a random matrix with central Wishart distribution given by
\begin{align}\label{Equ:Wishart_S_XX}
 \vspace{-0.25em}
\Sm_{X\!X} \sim \frac{1}{k-1}W_n(k-1, \Sigmam_{X\!X}),
 \vspace{-0.25em}
\end{align}
the ergodic counterpart of the cost function in (\ref{Equ:Steallth_Obj}) is defined in terms of the conditional KL divergence given by
\begin{equation}
\label{eq:ergodic_cost}
\EE_{\Sm_{X\!X}}\!\left [D\left( P_{X^{n}Y_{A}^{m}|\Sm_{X\!X}}\|P_{X^{n}}P_{Y^{m}}\right)\right].
\end{equation}
The ergodic cost function characterizes the expected performance of the attack averaged over the realizations of training data. Note that the performance using the sample covariance matrix is suboptimal \cite{Sun_information-theoretic_2017} and that the ergodic performance converges asymptotically to the optimal attack construction when the size of the training data set increases.

\subsection{Ergodic Stealth Attack Performance}

In this section, we analytically characterize the ergodic attack performance defined in (\ref{eq:ergodic_cost}) by providing an upper bound using random matrix theory tools.
Before introducing the upper bound, some auxiliary results on the expected value of the extreme eigenvalues of Wishart random matrices are presented below.
 \vspace{-0.5em}
\subsubsection{Auxiliary Results in Random Matrix Theory}

\begin{lemma}\label{Pro:VarianceMaxSingular}
Let $\Zm_{l}$ be an $(k-1)\times l$ matrix whose entries are independent standard normal random variables, then
\begin{equation}
\textnormal{\var}\left(s_{max}(\Zm_{l})\right) \leq 1,
\end{equation}
where $\textnormal{\var}\left(\cdot\right)$ denotes the variance and $s_{max}(\Zm_{l})$ is the maximum singular value of $\Zm_{l}$.
\end{lemma}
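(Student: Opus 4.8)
The plan is to recognize the maximum singular value as a Lipschitz function of the matrix entries and then invoke Gaussian concentration. First I would observe that $s_{\max}(\Zm_{l})$ coincides with the spectral (operator) norm $\|\Zm_{l}\|_2 = \sup_{\|\uv\|_2 = 1} \|\Zm_{l}\uv\|_2$, and regard it as a function $f : \RR^{(k-1)l} \to \RR$ of the $(k-1)l$ independent standard normal entries of $\Zm_{l}$, where the entries are arranged into a single Gaussian vector.

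The key structural step is to establish that $f$ is $1$-Lipschitz with respect to the Euclidean norm on $\RR^{(k-1)l}$, equivalently with respect to the Frobenius norm on the space of matrices. This follows from two elementary facts: the reverse triangle inequality for the operator norm, which gives $\bigl|\|\Am\|_2 - \|\Bm\|_2\bigr| \leq \|\Am - \Bm\|_2$, and the domination of the operator norm by the Frobenius norm, $\|\Am - \Bm\|_2 \leq \|\Am - \Bm\|_{\mathrm{F}}$. Chaining these yields $\bigl|s_{\max}(\Am) - s_{\max}(\Bm)\bigr| \leq \|\Am - \Bm\|_{\mathrm{F}}$, which is precisely the $1$-Lipschitz property.

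With the Lipschitz constant in hand, I would apply the Gaussian Poincar\'e inequality, which states that for any $1$-Lipschitz $f$ and any standard Gaussian vector $G$, one has $\var(f(G)) \leq \EE\bigl[\|\nabla f(G)\|_2^2\bigr] \leq 1$. Since the entries of $\Zm_{l}$ are exactly independent standard normals, this immediately delivers $\var(s_{\max}(\Zm_{l})) \leq 1$ and completes the argument.

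I do not anticipate a serious obstacle; the only point requiring mild care is that the spectral norm is not everywhere differentiable, so the gradient form of the Poincar\'e inequality must be handled via the almost-everywhere differentiability of the convex (hence locally Lipschitz) map $f$, or by a smoothing approximation. This is a technical formality rather than a genuine difficulty, and the same variance bound can alternatively be extracted from the Gaussian concentration inequality of Tsirelson--Ibragimov--Sudakov applied to the $1$-Lipschitz map $f$.
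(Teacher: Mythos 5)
Your argument is correct, and it reaches the variance bound by a genuinely different final mechanism than the chapter does. Both proofs rest on the same structural fact, namely that $s_{\max}$ is a $1$-Lipschitz function of the Gaussian entries, but the chapter simply asserts this and then invokes Gaussian concentration (\cite[Proposition 5.34]{vershynin_introduction_2012}, i.e., the Tsirelson--Ibragimov--Sudakov inequality) to obtain the tail bound $\PP\left[\left|s_{\max}(\Zm_{l})-\EE[s_{\max}(\Zm_{l})]\right|>t\right]\leq 2e^{-t^{2}/2}\leq e^{1-t^{2}/2}$, from which it concludes that $s_{\max}(\Zm_{l})$ is sub-Gaussian with variance proxy $\sigma_{p}^{2}\leq 1$ and bounds the variance by the proxy. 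You instead prove the Lipschitz property explicitly (reverse triangle inequality for the operator norm combined with $\|\Am-\Bm\|_{2}\leq\|\Am-\Bm\|_{\mathrm{F}}$, a step the paper leaves implicit) and then apply the Gaussian Poincar\'e inequality, which yields $\var\left(s_{\max}(\Zm_{l})\right)\leq\EE\left[\|\nabla f\|_{2}^{2}\right]\leq 1$ in a single stroke. Your route is arguably cleaner at the decisive step: extracting the constant $1$ from a tail bound with prefactor $2$ requires the moment-generating-function characterization of the variance proxy (naively integrating the tail $2e^{-t^{2}/2}$ only gives $\var\leq 4$), whereas Poincar\'e delivers the sharp constant directly; what the concentration route buys in exchange is the full sub-Gaussian tail, which is strictly stronger information than the variance but is not needed for this lemma. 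Your closing remark on handling the nondifferentiability of the convex map $f$ via almost-everywhere differentiability or smoothing is the right technical caveat for the gradient form of Poincar\'e, and it is indeed a formality rather than a gap.
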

\begin{proof}
Note that $s_{max}(\Zm_{l})$ is a 1-Lipschitz function of matrix $\Zm_{l}$, the maximum singular value of $\Zm_{l}$ is concentrated around the mean \cite[Proposition 5.34]{vershynin_introduction_2012} given by  $\EE[s_{max}(\Zm_{l})]$. Then for $t\geq 0$, it holds that
%\begin{align}
%\PP \left[ s_{max} - \EE[s_{max}] > t \right] &\leq \exp\{-t^2/2\} \\
%\PP \left[ s_{max} - \EE[s_{max}] > -t \right] & \leq \exp\{-t^2/2\}.
%\end{align}
\begin{align}
\PP \! \left[ \left|s_{max}(\Zm_{l}) \!- \!\EE[s_{max}(\Zm_{l})]\right| > t \right] &\leq 2\exp\{-t^2/2\}\\
&\leq \exp\{1- t^2/2\}.
\end{align}
Therefore $s_{max}(\Zm_{l})$ is a sub-gaussian random variable with variance proxy $\sigma_{p}^2 \leq 1$.
The lemma follows from the fact that  $\textnormal{\var}\left(s_{max}(\Zm_{l})\right) \leq \sigma_{p}^2$.
\end{proof}

\begin{lemma}\label{Pro:MaxMinEigBound}
Let $\Wm_{l}$ denote a central Wishart matrix distributed as $\frac{1}{k-1} W_{l}(k-1,\Id)$, then the non-asymptotic expected value of the extreme eigenvalues of $\Wm_{l}$ is bounded by
\begin{align}\label{Equ:MinEigBound}
\left(1-\sqrt{l/(k-1)}\right)^2\leq \EE[\lambda_{min}(\Wm_{l})]
\end{align}
and
\vspace{-0.5em}
\begin{align}\label{Equ:MaxEigBound}
\EE[\lambda_{max}(\Wm_{l})] \leq \left(1+\sqrt{l/(k-1)}\right)^2 + 1/(k-1),
\end{align}
where $\lambda_{min}(\Wm_{l})$ and $\lambda_{max}(\Wm_{l})$ denote the minimum eigenvalue and maximum eigenvalue of $\Wm_{l}$, respectively.
\end{lemma}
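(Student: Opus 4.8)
The plan is to represent the scaled Wishart matrix as a Gram matrix of a Gaussian matrix and then transfer known non-asymptotic bounds on the expected extreme singular values. Writing $\Wm_{l} = \frac{1}{k-1}\Zm_{l}^{\sf T}\Zm_{l}$, where $\Zm_{l}$ is the $(k-1)\times l$ matrix of independent standard normal entries from Lemma \ref{Pro:VarianceMaxSingular}, the eigenvalues of $\Wm_{l}$ are exactly $\frac{1}{k-1}$ times the squared singular values of $\Zm_{l}$; in particular $\lambda_{min}(\Wm_{l}) = \frac{1}{k-1} s_{min}(\Zm_{l})^{2}$ and $\lambda_{max}(\Wm_{l}) = \frac{1}{k-1} s_{max}(\Zm_{l})^{2}$. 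The key external input is the non-asymptotic estimate on the expected extreme singular values of a Gaussian matrix \cite{vershynin_introduction_2012}, which gives $\EE[s_{min}(\Zm_{l})] \geq \sqrt{k-1}-\sqrt{l}$ and $\EE[s_{max}(\Zm_{l})] \leq \sqrt{k-1}+\sqrt{l}$.

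For the lower bound on $\EE[\lambda_{min}(\Wm_{l})]$ I would apply Jensen's inequality to the convex map $x \mapsto x^{2}$:
\begin{equation}
\EE[\lambda_{min}(\Wm_{l})] = \frac{1}{k-1}\EE[s_{min}(\Zm_{l})^{2}] \geq \frac{1}{k-1}\left(\EE[s_{min}(\Zm_{l})]\right)^{2} \geq \frac{(\sqrt{k-1}-\sqrt{l})^{2}}{k-1} = \left(1-\sqrt{l/(k-1)}\right)^{2},
\end{equation}
where the first inequality is Jensen and the second inserts the singular-value bound (valid when $k-1 \geq l$, so that the bracket is nonnegative before squaring). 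This yields \eqref{Equ:MinEigBound} directly.

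The maximum eigenvalue is the delicate step, because Jensen now points the wrong way: $\EE[s_{max}(\Zm_{l})^{2}] \geq (\EE[s_{max}(\Zm_{l})])^{2}$ cannot be combined with an \emph{upper} bound on the mean. The remedy, and the reason Lemma \ref{Pro:VarianceMaxSingular} is established first, is to control the second moment through the variance decomposition
\begin{equation}
\EE[\lambda_{max}(\Wm_{l})] = \frac{1}{k-1}\EE[s_{max}(\Zm_{l})^{2}] = \frac{1}{k-1}\left(\var(s_{max}(\Zm_{l})) + \left(\EE[s_{max}(\Zm_{l})]\right)^{2}\right).
\end{equation}
Substituting $\var(s_{max}(\Zm_{l})) \leq 1$ from Lemma \ref{Pro:VarianceMaxSingular} and the mean bound $\EE[s_{max}(\Zm_{l})] \leq \sqrt{k-1}+\sqrt{l}$ gives
\begin{equation}
\EE[\lambda_{max}(\Wm_{l})] \leq \frac{(\sqrt{k-1}+\sqrt{l})^{2} + 1}{k-1} = \left(1+\sqrt{l/(k-1)}\right)^{2} + \frac{1}{k-1},
\end{equation}
which is exactly \eqref{Equ:MaxEigBound}. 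The main obstacle is therefore not any single computation but recognizing that the upper tail needs the mean bound and the variance bound simultaneously; the sub-Gaussian concentration underlying Lemma \ref{Pro:VarianceMaxSingular} is precisely what makes the residual $1/(k-1)$ term, rather than something larger, suffice.
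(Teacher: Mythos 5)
Your proposal is correct and follows essentially the same route as the paper's own proof: both write $\Wm_{l} = \frac{1}{k-1}\Zm_{l}^{\sf T}\Zm_{l}$, invoke the Gaussian singular-value bounds from \cite{vershynin_introduction_2012}, lower-bound the minimum eigenvalue via Jensen's inequality, and upper-bound the maximum eigenvalue via the variance decomposition together with $\var(s_{max}(\Zm_{l})) \leq 1$ from Lemma \ref{Pro:VarianceMaxSingular}. Your explicit remark that the Jensen step for the minimum requires $k-1 \geq l$ (so the bracket is nonnegative before squaring) is a correct refinement that the paper leaves implicit.
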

\begin{proof}
Note that \cite[Theorem 5.32]{vershynin_introduction_2012}
\begin{equation}
\vspace{-0.1em}
\sqrt{k-1} -\sqrt{l}\leq \EE[s_{min}(\Zm_{l})] \label{Equ:Bound_MinV}
\vspace{-0.1em}
\end{equation}
and
\begin{equation}
\sqrt{k-1} + \sqrt{l} \geq \EE[s_{max}(\Zm_{l})] , \label{Equ:Bound_MaxV}
\end{equation}
where $s_{min}(\Zm_{l})$ is the minimum singular value of $\Zm_{l}$.
Given the fact that $\Wm_{l} = \frac{1}{k-1} \Zm_{l}^{\sf T}\Zm_{l}$, then it holds that
\begin{align}
\EE[\lambda_{min}(\Wm_{l})] &= \!\frac{\EE\!\left[{s_{min}(\Zm_{l})}^2\right]}{k-1}  \!\geq\! \frac{\EE\left[s_{min}(\Zm_{l})\right]^2}{k-1} \label{Equ:Exp_MinEig}
\end{align}
and
\vspace{-1em}
\begin{align}
\EE[\lambda_{max}(\Wm_{l})] \!= \!\frac{\EE\!\left[{s_{max}(\Zm_{l})}^2\right]}{k-1}  \!\leq \!\frac{\EE\left[s_{max}(\Zm_{l})\right]^2 \hspace{-0.2em} + 1}{k-1}, \label{Equ:Exp_MaxEig}
\end{align}
where (\ref{Equ:Exp_MaxEig}) follows from Lemma \ref{Pro:VarianceMaxSingular}.
Combining (\ref{Equ:Bound_MinV}) with (\ref{Equ:Exp_MinEig}), and (\ref{Equ:Bound_MaxV}) with (\ref{Equ:Exp_MaxEig}), respectively, yields the lemma.
\end{proof}

Recall the cost function describing the attack performance given in (\ref{eq:ergodic_cost}) can be written in terms of the covariance matrix $\Sigmam_{\tilde{A}\!\tilde{A}}$ in the multivariate Gaussian case with imperfect second-order statistics. The ergodic cost function that results from averaging the cost over the training data yields
\begin{align}
\label{eq:exp_stealth_cost_gauss}
\EE_{\Sm_{X\!X}}\!\left [D\left( P_{X^{n}Y_{A}^{m}|\Sm_{X\!X}}\|P_{X^{n}}P_{Y^{m}}\right)\right]& = \! \frac{1}{2}\EE\!\left[\trace(\Sigmam_{Y\!Y}^{-\!1}\Sigmam_{\tilde{A}\!\tilde{A}})\!-\!\log |\Sigmam_{\tilde{A}\!\tilde{A}}\!+\!\sigma^{2}\Id|\!-\!\log |\Sigmam_{Y\!Y}^{-\!1}|\right] \\
& = \! \frac{1}{2}\!\Big(\!\trace \! \left(\Sigmam_{Y\!Y}^{-\!1}\Sigmam^\star_{A\!A}\right)\!-\!\log \! \left|\!\Sigmam_{Y\!Y}^{-\!1}\!\right|\!-\! \EE\!\left[\log \! |\Sigmam_{\tilde{A}\!\tilde{A}}\!+\!\sigma^{2}\Id|\right]\!\Big).
\vspace{-0.25em}
\end{align}
The assessment of the ergodic attack performance boils down to evaluating the last term in (\ref{eq:exp_stealth_cost_gauss}). Closed form expressions for this term are provided in \cite{alfano_capacity_2004} for the same case considered in this paper. However, the resulting expressions are involved and are only computable for small dimensional settings. For systems with a large number of dimensions the expressions are computationally prohibitive. To circumvent this challenge we propose a lower bound on the term that yields an upper bound on the ergodic attack performance. Before presenting the main result we provide the following auxiliary convex optimization result.

%\begin{lemma}
%Let $\Sigmam_{\tilde{A}\!\tilde{A}}=\Hm\Sm_{X\!X}\Hm^T$ with $\Hm\Sigmam_{X\!X}\Hm^{\Tt} = \Vm \Lambdam\Vm^{\Tt}$, where $\Lambdam = \textnormal{diag} (\lambda_{1}, \dots, \lambda_{m})$ is the diagonal matrix containing the eigenvalues in descending order and $\Vm$ is the associated unitary orthogonal matrix of eigenvectors
%Then
%\begin{align}
%\underset{x_{i},\forall i = 1, \dots,p}{\text{min}} \ & \sum_{i=1}^{p} \log\left( \frac{1}{x_{i}} + \frac{\lambda_{i}}{\sigma^{2}} \right)\label{Equ:BUPN_1}\\
%s.t. \qquad & \sum_{i=1}^{p} x_{i}  = p \label{Equ:BUPN_4} \\
%& \text{max}\left(x_{i}\right) \leq  \left(1+\sqrt{\frac{p}{K-1}}\right)^2 + \frac{1}{K-1}  \label{Equ:BUPN_5} \\
%& \text{min}\left( x_{i}\right) \geq  \left(1-\sqrt{\frac{p}{K-1}}\right)^2 \label{Equ:BUPN_6}
%\end{align}
%\end{lemma}

\begin{lemma} \label{Lemma:logdet_Inv_Wishart}
{Let $\Wm_{p}$ denote a central Wishart matrix distributed as $\frac{1}{k-1} W_{p}(k-1,\Id)$} and let $\Bm = \textnormal{diag} (b_{1}, \dots, b_{p})$ denote a positive definite diagonal matrix.
Then
\begin{align}
\vspace{-2em}
\EE\left[\log \left|\Bm + \Wm_{p}^{-1} \right|\right] \geq \sum_{i=1}^{p} \log\left(b_{i} + 1/x_{i}^{\star}\right),
\vspace{-4em}
\end{align}
where $x_{i}^{\star}$ is the solution to the convex optimization problem given by
\vspace{-1em}
\begin{align}
\underset{\left \lbrace x_{i}\right\rbrace_{i=1}^p}{\textnormal{min}} \ & \sum_{i=1}^{p} \log\left( b_{i} + 1/x_{i} \right)\label{Equ:BUPN_1}\\
s.t. \ \ & \sum_{i=1}^{p} x_{i}  = p \label{Equ:BUPN_4} \\
& \textnormal{max}\left(x_{i}\right) \leq  \left(1+\sqrt{p/(k-1)}\right)^2 + 1/(k-1)  \label{Equ:BUPN_5} \\
& \textnormal{min}\left( x_{i}\right) \geq  \left(1-\sqrt{p/(k-1)}\right)^2 .\label{Equ:BUPN_6}
\vspace{-0.5em}
\end{align}
\end{lemma}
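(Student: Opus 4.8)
The plan is to bound the random matrix expectation from below by a deterministic quantity built from the \emph{expected} eigenvalues of $\Wm_p$, and then to recognise that quantity as the objective of the stated program evaluated at a feasible point, so that minimality finishes the argument. Throughout, write $\mu_1 \geq \cdots \geq \mu_p > 0$ for the (random) eigenvalues of $\Wm_p$, so that the eigenvalues of $\Wm_p^{-1}$ in decreasing order are $1/\mu_p \geq \cdots \geq 1/\mu_1$, and let $b_{(1)} \geq \cdots \geq b_{(p)}$ denote the ordered diagonal entries of $\Bm$, which are exactly its eigenvalues.

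First I would remove the dependence on the eigenvectors of $\Wm_p$ by a determinant inequality. For Hermitian positive definite matrices the determinant of a sum is bounded below by the \emph{same-order} pairing of eigenvalues, which here reads
\[
\det(\Bm + \Wm_p^{-1}) \;\geq\; \prod_{i=1}^{p}\bigl(\lambda_i(\Bm) + \lambda_i(\Wm_p^{-1})\bigr) \;=\; \prod_{i=1}^{p}\Bigl(b_{(i)} + 1/\mu_{p+1-i}\Bigr).
\]
Taking logarithms and then expectations gives
\[
\EE\bigl[\log|\Bm + \Wm_p^{-1}|\bigr] \;\geq\; \sum_{i=1}^{p} \EE\Bigl[\log\bigl(b_{(i)} + 1/\mu_{p+1-i}\bigr)\Bigr].
\]

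Next I would apply Jensen's inequality term by term. A direct computation shows that $h(\mu) = \log(b + 1/\mu) = \log(b\mu+1) - \log\mu$ has $h''(\mu) = \mu^{-2} - b^2(b\mu+1)^{-2} \geq 0$ for $\mu>0$ (since $(b\mu+1)^2 \geq b^2\mu^2$), so $h$ is convex. Hence $\EE[\log(b_{(i)} + 1/\mu_{p+1-i})] \geq \log(b_{(i)} + 1/\EE[\mu_{p+1-i}])$, and setting $x_i \eqdef \EE[\mu_{p+1-i}]$ yields
\[
\EE\bigl[\log|\Bm + \Wm_p^{-1}|\bigr] \;\geq\; \sum_{i=1}^{p} \log\bigl(b_{(i)} + 1/x_i\bigr).
\]

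Finally I would verify that $\{x_i\}_{i=1}^p$ is feasible and conclude. Since $\EE[\Wm_p]=\Id$, summing gives $\sum_i x_i = \EE[\trace \Wm_p] = p$, which is (\ref{Equ:BUPN_4}); moreover $\max_i x_i = \EE[\lambda_{\max}(\Wm_p)]$ and $\min_i x_i = \EE[\lambda_{\min}(\Wm_p)]$, which satisfy (\ref{Equ:BUPN_5}) and (\ref{Equ:BUPN_6}) precisely by Lemma~\ref{Pro:MaxMinEigBound}. Because the feasible region is invariant under permutations of the $x_i$, the quantity $\sum_i \log(b_{(i)} + 1/x_i)$ is the program's objective evaluated at a feasible point and is therefore at least the minimum $\sum_i \log(b_i + 1/x_i^{\star})$; chaining this with the previous display establishes the lemma. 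The main obstacle is the determinant inequality itself: one must use the same-order (not the opposite-order) eigenvalue pairing, since only that direction yields a lower bound, and a small example shows the opposite pairing gives an \emph{upper} bound; justifying it rigorously—e.g. via a majorization or Lidskii-type argument—is the one genuinely non-routine ingredient, everything else reducing to Jensen's inequality and the already-established extreme-eigenvalue bounds.
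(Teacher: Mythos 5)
Your proof is correct, and it is stronger than the paper's own argument at the one step where rigor is actually needed. The overall skeleton coincides: reduce the log-determinant to a sum over eigenvalue pairings, apply Jensen's inequality to the convex map $\mu \mapsto \log(b + 1/\mu)$ (your convexity computation $h''(\mu) = \mu^{-2} - b^{2}(b\mu+1)^{-2} \geq 0$ is right), and then check that the vector of expected eigenvalues is feasible for the program via $\EE[\trace(\Wm_{p})] = p$ and Lemma~\ref{Pro:MaxMinEigBound}. The divergence is in the first step. The paper simply asserts the \emph{equality} $\EE\left[\log\left|\Bm + \Wm_{p}^{-1}\right|\right] = \sum_{i=1}^{p}\EE\left[\log\left(b_{i} + 1/\lambda_{i}(\Wm_{p})\right)\right]$, which holds only if $\Bm$ and $\Wm_{p}$ were simultaneously diagonalizable with matched orderings; for a Wishart matrix, whose eigenbasis is Haar-distributed and independent of $\Bm$, it is false. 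Worse, with the $b_{i}$ sorted decreasingly (as in the paper's application in Theorem~\ref{Theorem:NonAsym_1}), pairing $b_{(i)}$ with $1/\mu_{i}$ for $\mu_{i}$ decreasing is the \emph{opposite}-order pairing, which bounds the determinant from \emph{above} — the wrong direction for this lemma. Your same-order pairing, $\det(\Bm + \Wm_{p}^{-1}) \geq \prod_{i}\bigl(b_{(i)} + 1/\mu_{p+1-i}\bigr)$, is exactly Fiedler's determinant inequality for sums of positive definite Hermitian matrices (Fiedler, \emph{Proc.\ Amer.\ Math.\ Soc.}, 1971), i.e., the majorization-type result you flag as the non-routine ingredient, and it points the right way; your closing permutation-invariance remark is also genuinely needed, since your bound involves the sorted $b_{(i)}$ while the program is stated with the unsorted $b_{i}$, and it correctly transfers the bound to the program's minimum. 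In short: your route buys a rigorous proof of the stated lemma (whose conclusion survives, since the right-hand side is a minimum over a permutation-invariant feasible set that both pairings hit) at the cost of invoking one classical matrix inequality that the paper's proof silently, and incorrectly, bypasses.
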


\begin{proof}
Note that
\begin{align}
\EE\left[\log \left|\Bm + \Wm_{p}^{-1} \right|\right] &= \sum_{i=1}^{p} \EE \left[\log \left( b_{i} + \frac{1}{\lambda_{i}(\Wm_{p})}\right)\right] \label{Equ:Convex_Opt_1}\\
&\geq \sum_{i=1}^{p} \log \left( b_{i} +  \frac{1}{\EE [\lambda_{i}(\Wm_{p})]}\right), \label{Equ:Convex_Opt_2}
\end{align}
where in (\ref{Equ:Convex_Opt_1}), $\lambda_{i}(\Wm_{p})$ is the $i$-th eigenvalue of $\Wm_{p}$ in decreasing order;
(\ref{Equ:Convex_Opt_2}) follows from Jensen's inequality due to the convexity of $\log\left(b_{i} +\frac{1}{x}\right)$ for $x > 0$.
Constraint (\ref{Equ:BUPN_4}) follows from the fact that $\EE [\textnormal{trace} (\Wm_{p})] = p$, and constraints  (\ref{Equ:BUPN_5}) and  (\ref{Equ:BUPN_6}) follow from Lemma \ref{Pro:MaxMinEigBound}. This completes the proof.
\end{proof}

\subsubsection{Upper Bound on the Ergodic Stealth Attack Performance}
The following theorem provides a lower bound for the last term in  (\ref{eq:exp_stealth_cost_gauss}), and therefore, it enables us to
upper bound the ergodic stealth attack performance.

\begin{theorem}\label{Theorem:NonAsym_1}
{Let $\Sigmam_{\tilde{A}\!\tilde{A}}=\Hm\Sm_{X\!X}\Hm^{\sf T}$ with $\Sm_{X\!X}$ distributed as $\frac{1}{k-1}W_n(k-1, \Sigmam_{X\!X})$} and denote by $\Lambdam_{p} = \textnormal{diag} (\lambda_{1}, \dots, \lambda_{p})$ the diagonal matrix containing the nonzero eigenvalues in decreasing order.
Then
\begin{align}
\EE\!\left[\log \! |\Sigmam_{\tilde{A}\!\tilde{A}}\!+\!\sigma^{2}\Id|\right] \geq  \left(\sum_{i=0}^{p-1} \psi (k-1-i) \right)- p\log(k-1)+ \sum_{i=1}^{p} \log\left(\frac{\lambda_{i}}{\sigma^{2}} + \frac{1}{\lambda_{i}^{\star}}\right)+2m\log\sigma,
\end{align}
where $\psi (\cdot)$ is the Euler digamma function, $p=\textnormal{rank}(\Hm\Sigmam_{X\!X}\Hm^{\sf T})$, and
$\lbrace\lambda_{i}^{\star}\rbrace_{i=1}^p$ is the solution to the optimization problem given by (\ref{Equ:BUPN_1}) - (\ref{Equ:BUPN_6}) with $b_{i} = \frac{\lambda_{i}}{\sigma^{2}}, \textnormal{for}\; i = 1, \dots,p$.
\end{theorem}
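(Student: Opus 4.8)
The plan is to reduce the quantity $\EE[\log|\Sigmam_{\tilde{A}\!\tilde{A}} + \sigma^2\Id|]$ to a form amenable to Lemma~\ref{Lemma:logdet_Inv_Wishart}. First I would diagonalize the perfect-statistics matrix: write $\Hm\Sigmam_{X\!X}\Hm^{\sf T} = \Vm\Lambdam\Vm^{\sf T}$ where $\Lambdam$ carries the $p$ nonzero eigenvalues $\lambda_1,\dots,\lambda_p$ and $p = \text{rank}(\Hm\Sigmam_{X\!X}\Hm^{\sf T})$. The key observation is that $\Sigmam_{\tilde{A}\!\tilde{A}} = \Hm\Sm_{X\!X}\Hm^{\sf T}$ shares the same range as $\Hm\Sigmam_{X\!X}\Hm^{\sf T}$ almost surely, so the determinant $|\Sigmam_{\tilde{A}\!\tilde{A}} + \sigma^2\Id|$ factors into a contribution $\sigma^{2(m-p)}$ from the kernel directions and a $p$-dimensional determinant on the active subspace. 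This accounts for the $2m\log\sigma$ term (noting $\sigma^{2(m-p)}$ combines with the $\sigma^2$-rescaling absorbed into the $b_i$) and isolates a $p\times p$ log-determinant involving a properly normalized central Wishart matrix $\Wm_p \sim \frac{1}{k-1}W_p(k-1,\Id)$.

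Next I would perform the change of variables that puts the active-subspace term into the shape $\log|\Bm + \Wm_p^{-1}|$ required by Lemma~\ref{Lemma:logdet_Inv_Wishart}, with $\Bm = \text{diag}(\lambda_1/\sigma^2,\dots,\lambda_p/\sigma^2)$, i.e. $b_i = \lambda_i/\sigma^2$. The subtlety here is that $\Sigmam_{\tilde{A}\!\tilde{A}}$ is a \emph{scaled} Wishart in the $\Lambdam$ basis, not a white one; factoring out $\Lambdam^{1/2}$ from each side converts $\log|\Sigmam_{\tilde{A}\!\tilde{A}} + \sigma^2\Id|$ on the active subspace into $\log|\Lambdam| + \log|\Wm_p + \sigma^2\Lambdam^{-1}|$, which after extracting a factor $\Wm_p$ and recognizing $\log|\Wm_p|$ separately rearranges into $\log|\Bm + \Wm_p^{-1}| + \log|\Wm_p| + (\text{constant in }\sigma,\lambda)$.

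The term $\EE[\log|\Wm_p|]$ is evaluated exactly: for a central Wishart $\frac{1}{k-1}W_p(k-1,\Id)$, the expected log-determinant is the classical identity $\EE[\log|\Wm_p|] = \sum_{i=0}^{p-1}\psi(k-1-i) - p\log(k-1)$, which is precisely the first two terms in the claimed bound. I would cite or reproduce this standard Wishart moment result. Then applying Lemma~\ref{Lemma:logdet_Inv_Wishart} with $b_i = \lambda_i/\sigma^2$ lower-bounds $\EE[\log|\Bm + \Wm_p^{-1}|]$ by $\sum_{i=1}^p \log(\lambda_i/\sigma^2 + 1/\lambda_i^\star)$, where $\lbrace\lambda_i^\star\rbrace$ solves the convex program \eqref{Equ:BUPN_1}--\eqref{Equ:BUPN_6}. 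Collecting all pieces yields the stated inequality.

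\textbf{The main obstacle} I anticipate is the bookkeeping in the second step: carefully tracking how the $\sigma^2\Id$ term splits between the rank-deficient kernel of $\Hm\Sm_{X\!X}\Hm^{\sf T}$ and the active subspace, and ensuring the scaling factors $\lambda_i$ land correctly inside versus outside the Wishart log-determinant so that the clean decomposition into (the exact $\EE[\log|\Wm_p|]$ piece) plus (the Lemma~\ref{Lemma:logdet_Inv_Wishart} piece) plus (the $2m\log\sigma$ piece) actually closes. A secondary concern is justifying that $\Sm_{X\!X}$ projected onto the range of $\Hm^{\sf T}$ in the appropriate basis genuinely has the white $\frac{1}{k-1}W_p(k-1,\Id)$ law required by the auxiliary lemmas, which relies on the rotational invariance of the Gaussian samples underlying the Wishart distribution in \eqref{Equ:Wishart_S_XX}.
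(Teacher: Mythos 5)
Your proposal is correct and follows essentially the same route as the paper's proof: both pull out $2m\log\sigma$ and reduce to the $p$-dimensional active subspace via the rank of $\Hm\Sigmam_{X\!X}\Hm^{\sf T}$ (your range/kernel splitting is exactly what steps (\ref{Equ:LBN_1})--(\ref{Equ:LBN_2}) do through the Gaussian factorization and unitary invariance you flag as a secondary concern), then factor the white Wishart out of the determinant to obtain $\log\bigl|\Wm_p\bigr| + \log\bigl|\Bm + \Wm_p^{-1}\bigr|$ with $b_i = \lambda_i/\sigma^2$, evaluate $\EE\bigl[\log|\Wm_p|\bigr]$ by the classical digamma identity (the paper cites \cite[Theorem 2.11]{tulino_random_2004}), and bound the remaining term via Lemma \ref{Lemma:logdet_Inv_Wishart}. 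The bookkeeping you anticipated as the main obstacle closes exactly as you describe, so no changes are needed.
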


\begin{proof}
We proceed by noticing that
\begin{align}
\EE\!\left[\log \! |\Sigmam_{\tilde{A}\!\tilde{A}}\!+\!\sigma^{2}\Id|\right]
& \ =\EE\left[\log \left|\frac{1}{(k-1)\sigma^{2}}  \Zm_{m}^{\sf T} \Lambdam\Zm_{m} +\Id\right|\right] + 2m \log \sigma\label{Equ:LBN_1}\\
& \ = \EE\left[\log \left|\frac{\Lambdam_{p}}{\sigma^{2}}  \frac{\Zm_{p}^{\Tt} \Zm_{p}}{k-1} +\Id\right|\right] + 2m \log \sigma\label{Equ:LBN_2}\\
& \ = \! \EE  \! \left[\!\log  \! \left|\frac{\Zm_{p}^{\Tt} \Zm_{p}}{k-1}\right|  \! +  \! \log \!\left|\frac{\Lambdam_{p}}{\sigma^{2}} \! \! + \! \left( \frac{\Zm_{p}^{\Tt} \Zm_{p}}{k-1}\right)^{-\!1}  \! \right|\right] \!+  \!2m\log \sigma \label{Equ:LBN_3}
\\
&\ \geq  \left(\sum_{i=0}^{p-1} \psi (k-1-i) \right)- p\log(k-1)
 + \sum_{i=1}^{p} \log\left(\frac{\lambda_{i}}{\sigma^{2}} + \frac{1}{\lambda_{i}^{\star}}\right)+2m\log\sigma, \label{Equ:LBN_5}
\end{align}
where in (\ref{Equ:LBN_1}), $\Lambdam$ is a diagonal matrix containing the eigenvalues of $\Hm\Sigmam_{X\!X}\Hm^{\Tt}$ in decreasing order;
(\ref{Equ:LBN_2}) follows from the fact that $p=\textnormal{rank}(\Hm\Sigmam_{X\!X}\Hm^{\Tt})$;
{(\ref{Equ:LBN_5}) follows from \cite[Theorem 2.11]{tulino_random_2004} and Lemma \ref{Lemma:logdet_Inv_Wishart}.}
This completes the proof.
\end{proof}

%Substituting Theorem \ref{Theorem:NonAsym_1} into (\ref{eq:exp_stealth_cost_gauss}) yields the following upper bound for the expected value of the ergodic performance.
\begin{theorem}\label{Theorem:NonAsym_2}
The ergodic attack performance given in (\ref{eq:exp_stealth_cost_gauss}) is upper bounded by
\begin{align}
\EE\left[f(\Sigmam_{\tilde{A}\!\tilde{A}} )\right] &\leq  \frac{1}{2}\!\Bigg(\!\textnormal{trace} \! \left(\Sigmam_{Y\!Y}^{-\!1}\Sigmam^\star_{A\!A}\right)\!-\!\log \! \left|\!\Sigmam_{Y\!Y}^{-\!1}\!\right| - 2m \log \sigma \Bigg. 
\\& \quad
 - \bigg(\sum_{i=0}^{p-1} \psi (k-1-i)\! \bigg) \!+\! p\log(k-1)  \\
 & \quad \Bigg. -\sum_{i=1}^{p} \log\left(\frac{\lambda_{i}}{\sigma^{2}}+\frac{1}{\lambda_{i}^{\star}}\right)\Bigg).
\end{align}
\end{theorem}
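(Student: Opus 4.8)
The plan is to obtain the bound by direct substitution, since the heavy lifting has already been carried out in Theorem~\ref{Theorem:NonAsym_1}. I would start from the closed-form expression for the ergodic cost in~(\ref{eq:exp_stealth_cost_gauss}), namely
\begin{align}
\EE\left[f(\Sigmam_{\tilde{A}\!\tilde{A}})\right] = \frac{1}{2}\Big(\trace\left(\Sigmam_{Y\!Y}^{-1}\Sigmam^\star_{A\!A}\right) - \log\left|\Sigmam_{Y\!Y}^{-1}\right| - \EE\left[\log|\Sigmam_{\tilde{A}\!\tilde{A}}+\sigma^2\Id|\right]\Big),
\end{align}
where the first two terms are deterministic and only the third term, $\EE[\log|\Sigmam_{\tilde{A}\!\tilde{A}}+\sigma^2\Id|]$, depends on the random sample covariance matrix $\Sm_{X\!X}$. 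The first term uses $\EE[\Sigmam_{\tilde{A}\!\tilde{A}}] = \Hm\EE[\Sm_{X\!X}]\Hm^{\sf T} = \Hm\Sigmam_{X\!X}\Hm^{\sf T} = \Sigmam^\star_{A\!A}$, which holds because $\Sm_{X\!X}$ is an unbiased estimator of $\Sigmam_{X\!X}$. The decisive structural observation is that the random log-determinant term enters the ergodic cost with a \emph{negative} sign.

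The key step I would then carry out is to exploit this negative sign: any lower bound on $\EE[\log|\Sigmam_{\tilde{A}\!\tilde{A}}+\sigma^2\Id|]$ translates immediately into an upper bound on $\EE[f(\Sigmam_{\tilde{A}\!\tilde{A}})]$. Theorem~\ref{Theorem:NonAsym_1} supplies exactly such a lower bound,
\begin{align}
\EE\left[\log|\Sigmam_{\tilde{A}\!\tilde{A}}+\sigma^2\Id|\right] \geq \left(\sum_{i=0}^{p-1}\psi(k-1-i)\right) - p\log(k-1) + \sum_{i=1}^{p}\log\left(\frac{\lambda_i}{\sigma^2}+\frac{1}{\lambda_i^\star}\right) + 2m\log\sigma.
\end{align}
Substituting this inequality into the displayed expression for $\EE[f(\Sigmam_{\tilde{A}\!\tilde{A}})]$ and propagating the factor $-\tfrac12$ through each summand produces precisely the claimed upper bound, with the constant contributions $-2m\log\sigma$, $-\left(\sum_i\psi(\cdot)\right)$, $+p\log(k-1)$, and $-\sum_i\log(\cdots)$ appearing in the stated arrangement.

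Because every ingredient is already in place, there is essentially no obstacle in this concluding step beyond careful bookkeeping of signs and constants. The genuine difficulty of the overall argument resides upstream in Theorem~\ref{Theorem:NonAsym_1}, together with the auxiliary Lemma~\ref{Lemma:logdet_Inv_Wishart} and the extreme-eigenvalue estimates of Lemma~\ref{Pro:MaxMinEigBound}, which jointly furnish a tractable lower bound on the expected log-determinant of a shifted inverse-Wishart matrix. Once that lower bound is granted, Theorem~\ref{Theorem:NonAsym_2} follows simply by monotonicity of the map $x \mapsto -x/2$ applied to the random term.
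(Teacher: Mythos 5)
Your proposal is correct and matches the paper's own argument, which likewise derives Theorem~\ref{Theorem:NonAsym_2} by combining the lower bound of Theorem~\ref{Theorem:NonAsym_1} with the decomposition in (\ref{eq:exp_stealth_cost_gauss}), using the negative sign of the expected log-determinant term. Your additional remark that the trace term reduces to $\trace\left(\Sigmam_{Y\!Y}^{-1}\Sigmam^\star_{A\!A}\right)$ via unbiasedness of $\Sm_{X\!X}$ is exactly the step implicit in the paper's second line of (\ref{eq:exp_stealth_cost_gauss}).
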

\begin{proof}
The proof follows immediately from combing Theorem \ref{Theorem:NonAsym_1} with (\ref{eq:exp_stealth_cost_gauss}).
\end{proof}

%\section{Numerical Results}

%The numerical results are obtained on the IEEE 30-Bus test system where
%the Jacobian matrix $\Hm$ is obtained using MATPOWER \cite{zimmerman_matpower:_2011}.
%For the construction of the stealth attack the covariance matrix of the state variables is chosen to be a Toeplitz matrix with exponential decaying parameter $\rho$ as in \cite{esnaola_maximum_2016}.
%Specifically, the Toeplitz matrix of dimension $N \times N$ with exponential decay parameter $\rho$ is given by $\Sigmam_{X\!X}=[s_{ij}=\rho^{|i-j|}; i, j =1, 2, \ldots, N].$
%We define the Signal-to-Noise Ratio (SNR) as
%\begin{equation}
%\textnormal{SNR}=10\log_{10}\left(\frac{\trace{(\Hm\Sigmam_{X\!X}\Hm^\textnormal{T}})}{m\sigma^2}\right).
%\end{equation}
Fig.\ref{Fig:UB_Rho_30_SNR20_NonAsy} depicts the upper bound in Theorem \ref{Theorem:NonAsym_2} as a function of number of samples for $\rho =0.1$ and $\rho=0.8$ when $\textnormal{SNR} = 20 \ \textnormal{dB}$.
Interestingly, the upper bound in Theorem \ref{Theorem:NonAsym_2} is tight for large values of the training data set size for all values of the exponential decay parameter determining the correlation.

\begin{figure}[t!]
  \centering
  \includegraphics[scale =0.5]{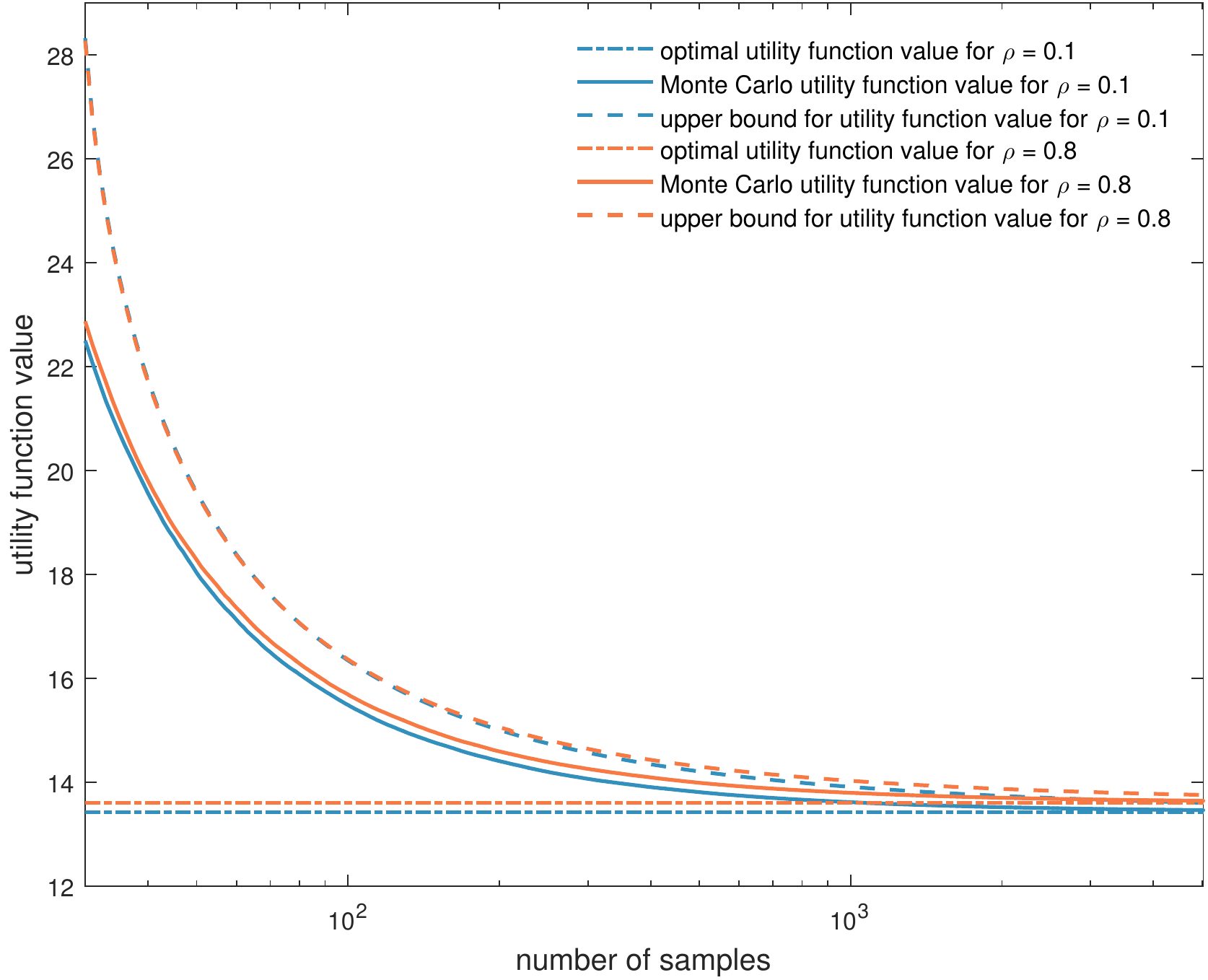}\\
  \caption{Performance of the upper bound in Theorem \ref{Theorem:NonAsym_2} as a function of number of sample for $\rho =0.1$ and $\rho=0.8$ when $\textnormal{SNR} = 20 \ \textnormal{dB}$.}\label{Fig:UB_Rho_30_SNR20_NonAsy}
\end{figure}
\section{Conclusions} \label{Conclusion}

We have cast the state estimation problem in a Bayesian setting and shown that the attacker can construct data-injection attacks that exploit prior knowledge about the state variables. In particular, we have focused in multivariate Gaussian random processes to describe the state variables and proposed two attack construction strategies: determinis- tic attacks and random attacks.

The deterministic attack is specified by the power system and the statistical structure of the state variables. The attack problem is cast as a multiobjective optimization prob- lem in which the attacker aims to simultaneously minimize the MSE distortion induced by the injection vector and the probability of the attack being detected using a LRT. Within this setting, we have characterized the tradeoff between the achievable distortion and probability of detection by deriving optimal centralized attack constructions for a given distortion and probability of detection pair. We have then extended the investi- gation to decentralized scenarios in which several attackers construct their respective attack without coordination. In this setting, we have posed the interaction between the attackers in a game-theoretic setting. We show that the proposed utility function results in a setting that can be described as a potential game that allows us to claim the existence of an NE and the convergence of BRD to an NE.

The random attack produces different attack vectors for each set of measurements that are reported to the state estimator. The attack vectors are generated by sampling a defined attack vector distribution that yields attack vector realizations to be added to the measurements. The attack aims to disrupt the state estimation process by minimizing the mutual information between the state variables and the altered measurements while minimizing the probability of detection. The rationale for posing the attack construction in information-theoretic terms stems from the fundamental character that information measures grant to the attack vector. By minimizing the mutual information, the attacker limits the performance of a wide range of estimation, detection, and learning options for the operator. We conclude the chapter by analyzing the impact of imperfect second- order statistics about the state variables in the attack performance. In particular, we consider the case in which the attacker has access to a limited set of training state variable observations that are used to produce the sample covariance matrix of the state variables. Using random matrix theory tools we provide an upper bound on the ergodic attack performance.

This work was supported in part by the European Commission under Marie Skodowska–Curie Individual Fellowship No. 659316 and in part by the Agence Nationale de la Recherche (ANR, France) under Grant ANR-15-NMED-0009-03 and the China Scholarship Council (CSC, China).

% introduction
%  \include{chap2}% features of the \cambridge\ class file
%  \include{chap3}% mathematical solutions
%
%  \part{Closing features}
%  \include{chap4}% references and bibliographies
%  \include{chap5}% single and multiple indexes

%  \backmatter
% if you only have one appendix, use \oneappendix instead of \appendix
%  \appendix
%  \include{appendixA}
%  \include{appendixB}
%  \include{appendixC}
%  \endappendix

% insert a blank line to the toc list
%  \addtocontents{toc}{\vspace{\baselineskip}}
% \theendnotes

% bibliography style files
% if you are using harvard, the style is specified with the \usepackage command
% \bibliographystyle{vancouver}% uncomment for vancouver style
 \bibliographystyle{IEEEbib} % uncomment for IEEE style
 
 \bibliography{reference_mmse, reference_stealth, reference_rmt}
% \renewcommand{\refname}{Bibliography}% if you prefer this heading
  %\cleardoublepage

% indexes

% for a single index
% \printindex

% for multiple indexes using multind.sty
%  \printindex{authors}{Author index}
%  \printindex{subject}{Subject index}

% for multiple indexes using index.sty
% \printindex[aut]
% \printindex

\end{document}